\newtheorem{theorem}{Theorem}[section]
\newtheorem{lemma}[theorem]{Lemma}
\newtheorem{claim}[theorem]{Claim}
\newtheorem{corollary}[theorem]{Corollary}
\newtheorem{proposition}[theorem]{Proposition}
\newtheorem{observation}[theorem]{Observation}
\theoremstyle{definition}
\newtheorem{definition}[theorem]{Definition}
\newcommand{\NCTD}[1]{#1-NCTD}
\newcommand{\NCTDlong}[1]{nice #1-clique tree decomposition}
\newcommand{\NCPD}[1]{#1-NCPD}
\newcommand{\NCPDlong}[1]{nice #1-clique path decomposition}
\newcommand{\pw}{{\sf pw}}
\newcommand{\cell}{{\sf cell}}
\newcommand{\tw}{{\sf tw}}
\newcommand{\PP}{{\mathcal P}}
\newcommand{\TT}{{\mathcal T}}
\newcommand{\QQ}{{\cal Q}}
\newcommand{\WW}{{\cal W}}
\newcommand{\YY}{{\cal Y}}
\newcommand{\ZZ}{{\cal Z}}
\newcommand{\RR}{{\cal R}}
\newcommand{\CC}{{\cal C}}
\newcommand{\Yes}{{\sc Yes}}
\newcommand{\No}{{\sc No}}
\newcommand{\probSI}{\textsc{SI}\xspace}
\newcommand{\probSIlong}{\textsc{Subgraph Isomorphism}\xspace}
\newcommand{\gcgraphs}{clique-grid graphs\xspace}
\newcommand{\gcgraph}{clique-grid graph\xspace}
\newcommand{\cO}{\mathcal{O}}
\newcommand{\OO}{\mathcal{O}}
\newcommand{\probKPath}{\textsc{Longest Path}\xspace}
\newcommand{\probKCycle}{\textsc{Longest Cycle}\xspace}
\newcommand{\probKexactCycle}{\textsc{Exact $k$-Cycle}\xspace}
\newcommand{\probFVS}{\textsc{Feedback Vertex Set}\xspace}
\newcommand{\maxforestlong}{{\sc Max Induced Forest}}
\newcommand{\maxforest}{{\sc MIF}}
\newcommand{\probCycPacking}{\textsc{Cycle Packing}\xspace}
\newcommand{\defparproblem}[4]{
\vspace{1mm}
\noindent\fbox{
  \begin{minipage}{0.96\textwidth}
  \begin{tabular*}{\textwidth}{@{\extracolsep{\fill}}lr} #1 & {\bf{Parameter:}} #3 \\\end{tabular*}
  {\bf{Input:}} #2  \\
  {\bf{Question:}} #4
  \end{minipage}
  }
\vspace{1mm}
}
\title{Finding, Hitting and Packing Cycles in Subexponential Time on Unit Disk Graphs\footnote{Supported by Pareto-Optimal Parameterized Algorithms, ERC Starting Grant 715744,  Parameterized Approximation, ERC Starting Grant 306992, 
and  Rigorous Theory of Preprocessing, ERC Advanced Investigator Grant 267959.}}
\author[1]{Fedor V. Fomin}
\author[1]{Daniel Lokshtanov}
\author[1]{Fahad Panolan}
\author[1,2]{Saket Saurabh}
\author[1]{Meirav Zehavi}
\affil[1]{Department of Informatics, University of Bergen, Norway\\
\mailUIB}
\affil[2]{The Institute of Mathematical Sciences, HBNI, Chennai, India\\
\mailIMSC
}
\begin{document}
\date{}
\maketitle

\begin{abstract}
We give algorithms with running time $2^{\cO({\sqrt{k}\log{k}})} \cdot n^{\cO(1)}$ for the following  problems. Given an  $n$-vertex unit disk graph $G$ and an integer $k$, decide whether $G$ contains 
\begin{itemize}
\setlength{\itemsep}{-1pt}
\item  a path on exactly/at least $k$ vertices,
\item  a cycle on exactly $k$ vertices,
\item  a cycle on at least $k$ vertices, 
\item  a feedback vertex set of size at most $k$, and 
\item  a set of $k$ pairwise vertex-disjoint cycles.  
\end{itemize}
For the first three problems, no subexponential time parameterized algorithms were previously known. 
For the remaining two problems, our algorithms significantly outperform the previously best known 
parameterized algorithms that run in time $2^{\cO(k^{0.75}\log{k})} \cdot n^{\cO(1)}$.  
Our algorithms are based on a new kind of tree decompositions of unit disk graphs where the 
separators can have size up to $k^{\cO(1)}$ and there exists a solution that crosses every separator at most 
$\cO(\sqrt{k})$ times. The running times of our algorithms are optimal up to the $\log{k}$ factor in the exponent, assuming the Exponential Time Hypothesis. 
\end{abstract}
\section{Introduction}\label{sec:intro}
Unit disk graphs are the intersection graphs of unit circles in the plane. That is, given $n$-unit circles in the plane, we have  a graph $G$ where each vertex corresponds to a circle such that there is an edge  between two vertices when the corresponding circles intersect. Unit disk graphs form one of the most well studied graph classes in computational geometry because of their use in modelling optimal facility location~\cite{wang1988study} and broadcast networks such as wireless, ad-hoc and sensor networks \cite{hale1980frequency,kammerlander1984c,yeh1984outage}. 
These applications have led to an extensive study of NP-complete problems on unit disk graphs in the realms of computational complexity and approximation algorithms. We refer the reader 
to~\cite{ClarkCJ90,DumitrescuP11,HuntMRRRS98} and the citations therein for these studies. However, these problems remain hitherto unexplored in the light of parameterized complexity with exceptions that are few and far between~\cite{alber2002geometric,Chan03,FominLS12,Jansen10,SmithW98}.

In this paper we  consider the following basic problems about finding, hitting and packing cycles on unit disk graphs from the viewpoint of parameterized algorithms. For a given graph $G$ and integer $k$, 
\begin{itemize}
\item \probKexactCycle asks  whether $G$ contains a cycle on exactly $k$ vertices,
\item \probKCycle asks whether $G$ contains  a cycle on at least $k$ vertices, 
\item  \probFVS asks whether $G$ contains a vertex set $S$ of size $k$ such that the graph $G\setminus S$ is acyclic,  and 
\item  \probCycPacking  asks whether $G$ contains   a set of $k$ pairwise vertex-disjoint cycles.  
\end{itemize}
Along the way, we also study \probKPath (decide whether $G$ contains a path on exactly/at least $k$ vertices) and \probSIlong (\probSI). 
In \probSI, given {\em connected} graphs $G$ and $H$ on $n$ and $k$ vertices, respectively, the goal  is to decide whether there exists a subgraph in $G$ that is isomorphic to $H$.  T hroughout the paper  we assume that a unit disk graph is given by a set of $n$ points in the Euclidean plane and there is a  graph where vertices correspond to these points and there is an edge between two vertices if and only if the distance between the two points is at most $2$. 


In parameterized complexity each of these problems serves as a testbed for development of  fundamental algorithmic techniques such as color-coding~\cite{AlonYZ}, the polynomial method   \cite{Koutis08,KoutisW16,Williams09,BjHuKK10},  matroid based techniques  \cite{FominLPS16} for \probKPath and \probKCycle, and kernelization techniques  for   \probFVS~\cite{Thomasse10}. 
We refer to  
\cite{cygan2015parameterized}
  for an extensive overview of the literature on parameterized algorithms for  these problems.  For example, the fastest known algorithms  solving \probKPath are the 
$1.66^k \cdot n^{\cO(1)}$ time 
  randomized algorithm of   Bj{\"o}rklund et al. 
  \cite{BjHuKK10},   and the $2.597^k \cdot n^{\cO(1)}$ time deterministic algorithm of Zehavi~\cite{Zehavi14}. Moreover, unless the Exponential Time Hypothesis (ETH) of  Impagliazzo, Paturi and Zane 
\cite{ImpagliazzoPZ01} fails, none of the problems above can be solved in time  $2^{o(k)} \cdot n^{\cO(1)}$~\cite{ImpagliazzoPZ01}.  

 While all these problems remain NP-complete on planar graphs,  substantially faster---\emph{subexponential}---parameterized algorithms  are known on planar graphs. In particular, by combining  the bidimensionality theory  of Demaine et al. \cite{DemaineFHT05jacm} with efficient   algorithms on graphs of bounded treewidth \cite{DornPBF10}, 
 \probKPath, \probKCycle,  \probFVS{} and  \probCycPacking are solvable in time $2^{\cO(\sqrt{k})}n^{\cO(1)}$ on planar graphs.
 The parameterized subexponential ``tractability'' of such problems can be extended to  graphs excluding some fixed graph as a
  minor~\cite{Demaine:2008mi}. The bidimensionality arguments cannot be applied to    \probKexactCycle and this was one of the motivations for developing the   new pattern-covering technique, which is used to   give a randomized algorithm for   \probKexactCycle running in time  $2^{\cO(\sqrt{k} \log^2 k)} n^{\cO(1)}$ on planar and apex-minor-free graphs ~\cite{FominLMPPS16}. The bidimensionality theory was 
 also used to design (efficient) polynomial time approximation scheme ((E)PTAS)~\cite{DemaineHaj05,FominLRS11} and 
 polynomial kernelization~\cite{F.V.Fomin:2010oq} on planar graphs.

It would be interesting to find generic properties  of problems,  similar to the theory of bidimensionality for planar-graph problems, that could guarantee the existence of  subexponential parameterized algorithms or (E)PTAS on geometric classes of graphs, such as unit disk graphs. The theory of (E)PTAS on geometric classes of graphs is extremely well developed and several methods have been devised for this purpose. This includes methods 
such as shifting techniques, geometric sampling and 
bidimensionality theory~\cite{HuntMRRRS98,Har-PeledQ15,Har-PeledL12,HochbaumM85,ClarksonV07,MustafaRR14,FominLS12}.
However, 
we are still very far from a satisfactory understanding of the ``subexponential'' phenomena for problems on geometric graphs. We know that some problems such as  \textsc{Independent Set} and \textsc{Dominating Set}, which are solvable in time  $2^{\cO(\sqrt{k})}n^{\cO(1)}$  on planar graphs,  are W[1]-hard on unit disk graphs and thus the existence of an algorithm of running time   $f(k) \cdot n^{\cO(1)}$  is highly unlikely for any function $f$~\cite{Marx05}.
The existence of a vertex-linear kernel  for some problems on unit disk graphs such as 
 \textsc{Vertex Cover}~\cite{ChenKJ01} or  \textsc{Connected Vertex Cover}~\cite{Jansen10}  combined with an appropriate separation theorem~\cite{alber2002geometric,Chan03,SmithW98} yields a parameterized subexponential algorithm. A subset of the authors of this paper  used a different approach based on bidimensionality theory to obtain subexponential algorithms of running time $2^{\cO(k^{0.75}\log{k})} \cdot n^{\cO(1)}$ on unit disk graphs for 
\probFVS and  \probCycPacking   in~\cite{FominLS12}. No parameterized subexponential algorithms on unit disk graphs  for 
\probKPath, \probKCycle, and   \probKexactCycle were known prior to our work.

 
%

 %

\noindent 
{\bf Our Results.}  We design subexponential parameterized algorithms, with running time $2^{\cO({\sqrt{k}\log{k}})} \cdot n^{\cO(1)}$, for 
 \probKexactCycle,  \probKCycle, \probKPath, \probFVS and  \probCycPacking on unit disk graphs and unit square graphs. It is also possible to show by known NP-hardness reductions for problems on unit disk graphs   \cite{ClarkCJ90}
  that   an algorithm of running time   $2^{o({\sqrt{k}})} \cdot n^{\cO(1)}$ for any of our problems on unit disk graphs would imply that ETH fails. Hence our algorithms are asymptotically almost tight. Along the way we also design Turing kernels (in fact, many to one)  for  \probKexactCycle,  \probKCycle, \probKPath and \probSI.  That is, we give a polynomial time algorithm that given an instance of  \probKexactCycle or \probKCycle or \probKPath or \probSI, produces polynomially many reduced instances of size polynomial in $k$ such that the input instance is a \Yes-instance if and only if one of the reduced instances is. As a byproduct of this we obtain a 
  $2^{\cO(k \log k)} \cdot n^{\cO(1)}$ time algorithm for \probSI when $G$ is a unit disk graph and $H$ is an arbitrary connected graph.  It is noteworthy to remark that a simple disjoint union trick implies that  \probKexactCycle,  \probKCycle, \probKPath, and  \probSI do not admit a polynomial kernel on unit disk  graphs~\cite{BodlaenderDFH09}. Finally, we remark that we do not use Turing kernels to design our subexponential time algorithms except for   \probKexactCycle. The subexponential time parameterized algorithm for   \probKexactCycle also uses a ``double layering''  of Baker's technique~\cite{Baker94}. 

All our subexponential time algorithms have the following theme in common.
  If  an input $n$-vertex unit disk graph $G$ contains a clique of size ${\sf poly}(k)$ (such a clique can be found in polynomial time), then we have a trivial \Yes-instance or \No-instance, depending on the problem. Otherwise, we show that the unit disk graph $G$ in a \Yes-instance of the problem admits, sometimes after a polynomial time preprocessing, a specific type of $(\omega, \Delta, \tau)$-decomposition, where the meaning of $\omega$, $\Delta$ and $\tau$ is as follows. The vertex set of $G$ is partitioned into cliques $C_1, \dots , C_d$, each   of size at most $\omega =k^{\cO(1)}$.  We also require that after contracting each of the cliques $C_i$ to a single vertex, the maximum vertex degree $\Delta$  of the obtained graph $\tilde{G}$ is  $\cO(1)$, while the treewidth $\tau$  of  $\tilde{G}$ is  
  $ \cO(\sqrt{k})$. Moreover, the corresponding  tree decomposition  of  $\tilde{G}$  can be constructed  efficiently. We use  the tree decomposition of  $\tilde{G}$ to construct a tree decomposition of $G$
by ``uncontracting'' each of the contracted cliques $C_i$. While the width of the obtained tree decomposition of $G$ can be of order   $\omega \cdot \tau =    k^{\cO(1)}$, we show that  each of our parameterized problems can be solved in time 
  $f(\Delta)\cdot \omega^{f(\Delta)\cdot \tau}$.  Here we use dynamic programming over the constructed tree decomposition of $G$, however 
  there is a twist from the usual way of designing such algorithms. 
  This part of the algorithm is problem-specific---in order to obtain the claimed running time, we have to establish a very specific property for each of the problems.  Roughly speaking, the desired property   of a problem   is that it always admits  an optimal solution  such that 
   for every pair of  adjacent 
  bags $X, Y$ of the tree decomposition of $G$,  
  the number of edges of  this solution ``crossing'' a  cut between $X$ and $Y$  is $\cO(\sqrt{k})$. We remark that the above decomposition is {\em only} given in the introduction to present our ideas for all the algorithms in a unified way.

\section{Preliminaries}\label{sec:prelim}
For a positive integer $t$, we use $[t]$ as a shorthand for $\{1,2,\ldots,t\}$. Given a function $f: A\rightarrow B$ and a subset $A'\subseteq A$, let $f|_{A'}$ denote the restriction of the function $f$ to the domain $A'$. For a  function $f: A\rightarrow B$ and $B'\subseteq B$, 
$f^{-1}(B')$ denote the set $\{a\in A~:~f(a)\in B'\}$. For $t,t'\in {\mathbb N}$, a set $[t]\times [t']$, $i \in [t]$ and $j\in [t']$  we use 
$(*,j)$ and $(i,*)$ to denote the sets $\{(i',j)~:~i'\in [t]\}$ and $\{(i,j')~:~j'\in [t']\}$, respectively.  For a set $U$, we use $2^U$ 
to denote the power set of $U$. 


\subparagraph*{Graph Theory.} We use standard notation and terminology from the book of Diestel~\cite{NewDiestel} for graph-related terms which are not explicitly defined here. Given a graph $G$, $V(G)$ and $E(G)$ denote its vertex-set and edge-set, respectively. When the graph $G$ is clear from context, we denote $n=|V(G)|$ and $m=|E(G)|$. Given $U\subseteq V(G)$, we let $G[U]$ denote the subgraph of $G$ induced by $U$, and we let $G\setminus U$ denote the graph $G[V(G)\setminus U]$. 
For an edge subset $E$, we use $V(E)$ to denote the set of endpoints of edges in $E$ and $G[E]$ to denote 
the graph $(V(E),E)$. For $X,Y\subseteq V(G)$, we use $E(X)$ and $E(X,Y)$ to denote the 
edge sets $\{\{u,v\}\in E(G)~:~u,v\in X\}$ and $\{\{u,v\}\in E(G)~:~u\in X, v\in Y\}$, respectively. 
Moreover, we let $N(U)$ denote the open neighborhood of $G$. In case $U=\{v\}$, we denote $N(v)=N(U)$. 
Given an edge $e=\{u,v\}\in E(G)$, we use $G/e$ to denote the graph obtained from $G$ by contracting the edge $e$. In other words, $G/e$ denotes the graph on the vertex-set $(V(G)\setminus\{u,v\})\cup\{x_{\{u,v\}}\}$, where $x_{\{u,v\}}$ is a new vertex, and the edge-set $E(G)=E(G[V(G)\setminus\{u,v\}])\cup\{\{x_{\{u,v\}},w\}~|~w\in N(\{u,v\})\}$. 
A graph $H$ is called a {\em minor} of $G$, if $H$ can be obtained from $G$ by a sequence of edge deletion, 
edge contraction and vertex deletion. 
In a graph $G$, a sequence of vertices $[u_1u_2\ldots u_{\ell}]$ is called a path in 
$G$, if  for any  $i,j\in [\ell]$, $i\neq j$, $u_i\neq u_j$ and $\{u_r,u_{r+1}\}\in E(G)$ for all $r\in [\ell-1]$.  
We also call the path $P=[u_1u_2\ldots u_{\ell}]$ as $u_1$-$u_{\ell}$ path. 
The internal vertices of a path $P=[u_1u_2\ldots u_{\ell}]$ are $\{u_2,u_3,\ldots,u_{\ell-1}\}$. 
For a path $P=[u_1u_2\ldots u_{\ell}]$, we use 
$\overleftarrow{P}$ to denote the path $[u_{\ell}u_{\ell-1}\ldots u_{1}]$. 
For any two paths $P_1=[u_1\ldots u_{i}]$ and $P_2=[u_i\ldots u_{\ell}]$, we use 
$P_1P_2$ to denote the path  $[u_1u_2\ldots u_{\ell}]$. 
A sequence of vertices $[u_1u_2\ldots u_{\ell}]$ is called a cycle in 
$G$, if  $u_1=u_{\ell}$, $[u_1u_2\ldots u_{\ell-1}]$ is a path and $\{u_{\ell-1},u_{\ell}\}\in E(G)$. 
For a path or a cycle $Q$, we use $V(Q)$ to denote the set of vertices in $Q$. 
Given $k\in\mathbb{N}$, we let $K_k$ denote  the compete graph on $k$ vertices. 
For a set $X$, we use $K[X]$ to denote the complete graph on $X$. 
Given $a,b\in\mathbb{N}$, an $a\times b$ grid is a graph on $a\cdot b$ vertices, $v_{i,j}$ for $(i,j)\in[a]\times[b]$, such that for all $i\in[a-1]$ and $j\in[b]$, it holds that $v_{i,j}$ and $v_{i+1,j}$ are neighbors, and for all $i\in[a]$ and $j\in[b-1]$, it holds that $v_{i,j}$ and $v_{i,j+1}$ are neighbors. For ease of presentation, for any function $f: D\rightarrow [a]\times [b]$, $i\in [a]$ and $j\in [b]$, we use 
$f^{-1}(i,j)$, $f^{-1}(*,j)$, and $f^{-1}(i,*)$ to denote the sets $f^{-1}((i,j))$, $f^{-1}((*,j))$, and $f^{-1}((i,*))$, respectively.   

A path decomposition is defined as follows.
\begin{definition}\label{def:pathDecomp}
A {\em path decomposition} of a graph $G$ is a sequence $\PP=(X_1,X_2,\ldots,X_{\ell})$, where each $X_i\subseteq V(G)$ is called a {\em bag}, that satisfies the following conditions. 
\begin{itemize}
\item $\bigcup_{i\in [\ell]}{X_i}=V(G)$.
\item For every edge $\{u,v\}\in E(G)$ there exists $i\in [\ell]$ such that  $\{u,v\}\subseteq X_{i}$.
\item For every vertex $v\in V(G)$, if $v\in X_i\cap X_j$ for some $i\leq j$, then $v\in X_r$ 
for all $r\in \{i,\ldots,j\}$.  
\end{itemize}
The {\em width} of $\PP$ is $\max_{i\in [\ell]} |X_i|-1$.
\end{definition} 

The {\em pathwidth} of $G$ is the minimum width of a path decomposition of $G$, and it is denoted by $\pw(G)$. A tree decomposition is a structure more general than a path decomposition, which is defined as follows.

\begin{definition}\label{def:treeDecomp}
A {\em tree decomposition} of a graph $G$ is a pair $\TT=(T,\beta)$, where $T$ is a tree and $\beta$ is a function from $V(T)$ to $2^{V(G)}$, that satisfies the following conditions.
\begin{itemize}
\item $\bigcup_{x\in V(T)}\beta(x)=V(G)$.
\item For every edge $\{u,v\}\in E(G)$ there exists $x\in V(T)$ such that  $\{u,v\}\subseteq \beta(x)$.
\item For every vertex $v\in V(G)$, if $v\in \beta(x)\cap \beta(y)$ for some $x,y\in V(T)$, then $v\in \beta(z)$ 
for all $z$ on the unique path between $x$ and $y$ in $T$.  
\end{itemize}
The {\em width} of $\TT$ is $\max_{x\in V(T)} |\beta(x)|-1$. Each $\beta(x)$ is called a {\em bag}. Moreover, we let $\gamma(x)$ denote the union of the bags of $x$ and its descendants.
\end{definition}

In other words, a path decomposition is a tree decomposition where $T$ is a path, but it will be convenient for us to think of a path decomposition as a sequence using the syntax in Definition \ref{def:pathDecomp}. The {\em treewidth} of $G$ is the minimum width of a tree decomposition of $G$, and it is denoted by $\tw(G)$.

\begin{proposition}[\cite{BodlaenderDDFLP16}]\label{prop:treewidth}
Given a graph $G$ and an integer $k$, in time $2^{\OO(k)} \cdot n$, we can either decide that $\tw(G)>k$ or 
output a tree decomposition of $G$ of width $5k$.
\end{proposition}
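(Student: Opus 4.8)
The statement is the single‑exponential‑time constant‑factor approximation for treewidth of Bodlaender, Drange, Dregi, Fomin, Lokshtanov and Pilipczuk~\cite{BodlaenderDDFLP16}; here is the shape of the argument I would follow. The algorithm constructs the tree decomposition recursively, from the root downwards. A recursive call works on a pair $(G',W)$ where $W\subseteq V(G')$ is a set of at most $3(k+1)$ vertices already committed to the root bag, and it returns either the verdict ``$\tw(G')>k$'' or a rooted tree decomposition of $G'$ of width at most $4k+3$ whose root bag contains $W$. Small values $k\le 2$ are handled directly, since treewidth‑at‑most‑$2$ graphs are recognizable in linear time, and for $k\ge 3$ we have $4k+3\le 5k$; the top‑level call uses $G'=G$, $W=\emptyset$, and we may assume $G'$ connected by recursing on connected components and joining the resulting decompositions at their roots. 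If $|V(G')|\le 4k+4$ we emit the trivial one‑bag decomposition; otherwise we search for a small balanced separator.

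The structural core is the folklore fact that if $\tw(G')\le k$ then for every $W\subseteq V(G')$ there is a set $S$ with $|S|\le k+1$ such that every connected component of $G'\setminus S$ contains at most $\tfrac{2}{3}|W|$ vertices of $W$, and $S$ can be chosen so as to additionally split $G'$ into at least two parts. I would prove this by taking a reduced width‑$k$ tree decomposition, weighting each vertex of $W$ by $1$, and choosing $S$ to be a bag (or a bag minus a private vertex of a leaf bag, to force a nontrivial split) at a centroid of the induced weighting of the decomposition tree. Given such an $S$ with components $C_1,\dots,C_p$ of $G'\setminus S$, we set the root bag to $W\cup S$, of size at most $3(k+1)+(k+1)=4(k+1)$ and hence width $\le 4k+3$, and recurse on $\bigl(G'[N[C_i]],\,(W\cap C_i)\cup N(C_i)\bigr)$ for each $i$, attaching the returned root as a child. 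One checks the tree‑decomposition axioms; balance gives $|(W\cap C_i)\cup N(C_i)|\le\tfrac{2}{3}|W|+(k+1)\le 3(k+1)$, so the invariant on $|W|$ is preserved, and since $S$ genuinely separates $G'$ every subinstance is strictly smaller, so the recursion terminates.

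To make the separator search constructive I would enumerate all ordered partitions $(A,B,S_0)$ of $W$ with $|A|,|B|\le\tfrac{2}{3}|W|$ — there are at most $3^{|W|}=2^{\OO(k)}$ of them — and for each compute a minimum vertex $(A,B)$‑separator $Z$ in $G'\setminus S_0$ by running at most $k+2$ rounds of augmenting‑path search on the standard vertex‑capacity flow network, at cost $\OO(k\,(n+m))$, aborting the guess as soon as the separator would exceed $k+1-|S_0|$ vertices. Setting $S=S_0\cup Z$ gives $|S|\le k+1$, and because each component of $G'\setminus S$ meets at most one of $A$ and $B$ it automatically contains at most $\tfrac{2}{3}|W|$ vertices of $W$; a centroid‑based separator $S^\star$ is realized by the partition that greedily groups the components of $G'\setminus S^\star$ into two bins of $W$‑weight $\le\tfrac{2}{3}|W|$, so if \emph{no} admissible partition yields a valid $S$ at some node of the recursion, we may safely report ``$\tw(G')>k$''. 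One‑sided correctness of this verdict is then immediate by contraposition of the structural fact.

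The step I expect to be the real obstacle is the running‑time bound. Each recursive node costs $2^{\OO(k)}(n+m)$, but the recursion tree can have depth $\Omega(n)$ and the subinstances overlap in their separators, so a naive summation gives only $2^{\OO(k)}\cdot n^{\OO(1)}$. Bringing this down to $2^{\OO(k)}\cdot n$ requires treating components of size $\OO(k)$ lazily, emitting their trivial decompositions without further recursion, together with a careful charging argument bounding the total size of all recursing subinstances by $\OO(n)$; this amortized analysis, along with the fine‑tuning needed to reach width exactly $5k$, is the technical heart of~\cite{BodlaenderDDFLP16}.
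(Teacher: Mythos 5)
This proposition is not proved in the paper at all: it is invoked as a black box from~\cite{BodlaenderDDFLP16} (whose actual statement gives width $5k+4$; the $5k$ here is a slight paraphrase). Your sketch is the classical Robertson--Seymour-style recursion -- balanced $W$-separators found by enumerating $3^{|W|}$ partitions and running $\OO(k)$ augmenting-path rounds, root bag $W\cup S$ of size $4(k+1)$, invariant $|W|\le 3(k+1)$ -- and, as far as it goes, it is sound: it yields width $4k+3\le 5k$ for $k\ge 3$ (small $k$ handled separately, as you note), and the one-sided correctness of the ``$\tw(G)>k$'' verdict follows by contraposition of the centroid-bag lemma exactly as you argue.

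The genuine gap is the running time, and you concede it yourself: each recursion node costs $2^{\OO(k)}(n+m)$ and the recursion can have depth $\Omega(n)$ with overlapping separators, so this scheme gives $2^{\OO(k)}\cdot n^{2}$ (or $n\log n$ with a more careful balance argument), not the claimed $2^{\OO(k)}\cdot n$. The linear dependence on $n$ is precisely the content of the cited theorem, and it is not obtained by ``treating small components lazily plus a charging argument'' grafted onto this recursion; the algorithm of~\cite{BodlaenderDDFLP16} is structurally different (an iterative/compression scheme that maintains an $\OO(k)$-width decomposition and uses it, together with dedicated data structures, to find the separators in amortized $2^{\OO(k)}$ time per step). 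So as a proof of the statement as written, the proposal falls short. It is worth noting, though, that in this paper the proposition is only used inside Lemma~\ref{lem:backboneTW}, whose stated running time is $2^{\OO(\ell)}\cdot n^{\OO(1)}$, so the weaker $2^{\OO(k)}\cdot n^{\OO(1)}$ bound your argument does deliver would suffice for every application made of it here.
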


A {\em nice tree decomposition} is a tree decomposition of a form that simplifies the design of dynamic programming (DP) algorithms. Formally,

\begin{definition}
A tree decomposition $\TT=(T,\beta)$ of a graph $G$ is {\em nice} if for the root $r$ of $T$, it holds that $\beta(r)=\emptyset$, and each node $v\in V(T)$ is of one of the following types.
\begin{itemize}
\item {\bf Leaf}: $v$ is a leaf in $T$ and $\beta(v)=\emptyset$.
\item {\bf Forget}: $v$ has exactly one child $u$, and there exists a vertex $w\in\beta(u)$ such that $\beta(v)=\beta(u)\setminus\{w\}$.
\item {\bf Introduce}: $v$ has exactly one child $u$, and there exists a vertex $w\in\beta(v)$ such that $\beta(v)\setminus\{w\}=\beta(u)$.
\item {\bf Join}: $v$ has exactly two children, $u$ and $w$, and $\beta(v)=\beta(u)=\beta(w)$.
\end{itemize}
\end{definition}

\begin{proposition}[\cite{Bodlaender96}]\label{prop:nice}
Given a graph $G$ and a tree decomposition $\TT$ of $G$, a nice tree decomposition $\TT'$ of the same width as $\TT$ can be computed in linear time.
\end{proposition}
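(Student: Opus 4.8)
The plan is to convert $\TT=(T,\beta)$ into a nice tree decomposition by a short, fixed sequence of purely local edits, none of which ever enlarges a bag; since, in addition, every original bag survives all the edits, the width is exactly preserved. First I would root $T$ at an arbitrary node. Then I would \emph{compress} $\TT$: as long as some edge $\{x,y\}$ of $T$ satisfies $\beta(x)=\beta(y)$, I would delete the child endpoint and reattach its children to the parent endpoint. The common bag is kept, so the axioms of Definition~\ref{def:treeDecomp} and the width are unaffected, and afterwards no two adjacent bags coincide --- which is precisely what rules out, later on, ``degenerate'' nodes, i.e.\ a node having a single child whose bag equals its own.

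Next I would empty the bags on the boundary of $T$. Above the current root $r$ I would prepend a path of new nodes whose bags descend from $\beta(r)$ to $\emptyset$ by one vertex per step, and its top would become the new root, so that the root bag is $\emptyset$ as required. Symmetrically, below each leaf $\ell$ with $\beta(\ell)\neq\emptyset$ I would append a path of new nodes whose bags ascend from $\emptyset$ to $\beta(\ell)$ by one vertex per step, so that the new bottom node is a genuine \textbf{Leaf}. I would then binarize $T$: an internal node $v$ with children $u_1,\dots,u_d$ and $d\ge 2$ is replaced by a path $v_1,\dots,v_{d-1}$ of new nodes all carrying the bag $\beta(v)$, where $v_1$ inherits the parent of $v$, each $v_i$ takes $u_i$ as one child, and $v_i$'s other child is $v_{i+1}$ for $i<d-1$ while $v_{d-1}$'s other child is $u_d$. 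Now every node has at most two children, and the only nodes with two children are these new $v_i$'s.

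The last step interpolates along each remaining parent--child edge $(v,u)$ with $\beta(v)\neq\beta(u)$: between $v$ and $u$ I would insert a path which, read from $u$ upward, first \emph{forgets} the vertices of $\beta(u)\setminus\beta(v)$ one at a time --- reaching the bag $\beta(u)\cap\beta(v)$ --- and then \emph{introduces} the vertices of $\beta(v)\setminus\beta(u)$ one at a time, reaching $\beta(v)$. Each inserted node differs from its neighbour in exactly one vertex, hence is a legal \textbf{Forget} or \textbf{Introduce} node, and every intermediate bag has size at most $\max(|\beta(u)|,|\beta(v)|)$, so the width does not grow. After this, every node is exactly one of the four types (and the root additionally has bag $\emptyset$, as arranged): each $v_i$ is \textbf{Join}, since interpolation makes both of its children carry the bag $\beta(v)=\beta(v_i)$; a node with exactly one child is \textbf{Forget} or \textbf{Introduce}, since that child's bag differs from its own by exactly one vertex --- by construction along the chains, and by the compression step for original one-child nodes and for former leaves; and a node with no child has empty bag and is a \textbf{Leaf}. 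I would finally check that the three axioms are maintained throughout: the bags still cover $V(G)$ and every edge of $G$ still lies inside some bag because every original bag is retained; and for each vertex the set of nodes containing it stays connected, since every inserted chain keeps that vertex's membership contiguous between the chain's two endpoints.

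As for the running time, compression is a single pass that deletes one node per step, boundary padding adds $\OO(|\beta(r)|+\sum_\ell|\beta(\ell)|)$ nodes, binarization adds $\OO(|V(T)|)$ nodes, and interpolation adds $\OO(|\beta(u)|+|\beta(v)|)$ nodes per edge, for a total of $\OO(|V(T)|+\sum_{x\in V(T)}|\beta(x)|)$ new nodes and work --- linear in the size of the input $(G,\TT)$. I expect the only genuinely delicate point to be the interplay between binarization and interpolation: they must be applied in this order, and one must observe that the equal-bag edges between consecutive $v_i$'s are correctly left untouched by interpolation, whereas the edge from each $v_i$ to its other child is interpolated into a chain whose top carries $\beta(v_i)$ --- so that every $v_i$ ends up a legitimate \textbf{Join} node and no illegal node with a single equal-bag child is ever created.
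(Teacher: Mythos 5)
This proposition is quoted in the paper from the cited reference without any in-paper proof, so there is nothing internal to compare against; your construction is precisely the standard one from the literature (compress equal adjacent bags, pad the root and the leaves, binarize by duplicating bags, then interpolate forget/introduce chains along each remaining edge), and it is correct, including the delicate point you flag yourself about which chains must end in a bag equal to the parent's (so that the duplicated nodes become legal Join nodes) versus a bag differing by one vertex (so that original one-child nodes become legal Introduce/Forget nodes). The only slip is the running-time accounting: duplicating $\beta(v)$ once per child and inserting a chain of $\OO(|\beta(u)|+|\beta(v)|)$ nodes per edge costs $\OO\bigl(\sum_{x\in V(T)}\deg(x)\cdot(1+|\beta(x)|)\bigr)$ in total, i.e.\ $\OO(w\cdot|V(T)|)$ new nodes and $\OO(w^{2}\cdot|V(T)|)$ work for width $w$, not $\OO(|V(T)|+\sum_{x}|\beta(x)|)$; this is still ``linear time'' in the sense the proposition is invoked (linear in $n$ for bounded width, and linear in the size of the output decomposition), so the discrepancy is harmless but worth stating correctly.
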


\subparagraph*{Geometric Graphs.} Given a set of geometric objects, $O$, we say that a graph $G$ {\em represents} $O$ if each vertex in $V(G)$ represents a distinct geometric object in $O$, and every geometric object in $O$ is represented by a distinct vertex in $V(G)$. In this case, we abuse notation and write $V(G)=O$. The {\em intersection graph of $O$} is a graph $G$ that represent $O$ and satisfies $E(G)=\{\{u,v\}: u,v\in O$, $u\cap v\neq\emptyset\}$.

Let $P=\{p_1=(x_1,y_1),p_2=(x_2,y_2),\ldots,p_n=(x_n,y_n)\}$ be a set of points in the Euclidean plane.
In the {\em unit disk graph model}, for every $i\in [n]$, we let $d_i$ denote the disk of radius 1 whose centre is $p_i$. Accordingly, we denote $D=\{d_1,d_2,\ldots,d_n\}$. Then, the {\em unit disk graph of} $D$ is the intersection graph of $D$. Alternatively, the unit disk graph of $D$ is the geometric graph of $G$ such that $E(G)=\{\{p_i=(x_i,y_i),p_j=(x_j,y_j)\}~|~p_i,p_j\in D, i\neq j, \sqrt{(x_i-x_j)^2+(y_i-y_j)^2}\leq 2\}$. In the {\em unit square graph model}, for every $i\in[n]$, we let $s_i$ denote the axis-parallel unit square whose centre is $p_i$. Accordingly, we denote $S=\{s_1,s_2,\ldots,s_n\}$. Then, the {\em unit square graph of} $S$ is the intersection graph of $S$. Alternatively, the unit square graph of $S$ is the geometric graph of $G$ such that  $E(G)=\{\{p_i=(x_i,y_i),p_j=(x_j,y_j)\}~|~p_i,p_j\in S, i\neq j, |x_i-x_j|\leq 1, |y_i-y_j|\leq 1\}$.


\section{Clique-Grid Graphs}

In this section, we introduce a family of ``grid-like'' graphs, called \gcgraphs, that is tailored to fit our techniques. Given a unit disk/square graph $G$, we extract the properties of $G$ that we would like to exploit, and show that they can be captured by an appropriate \gcgraph. Let us begin by giving the definition of a \gcgraph. Roughly speaking, a graph $G$ is a \gcgraph{} if each of its vertices can be embedded into a single cell of a grid (where multiple vertices can be embedded into the same cell), ensuring that the subgraph induced by each cell is a clique, and  that each cell can interact (via edges incident to its vertices) only with cells at ``distance'' at most 2. Formally,

\begin{definition}[\gcgraph]\label{def:GridClique}
A graph $G$ is a {\em \gcgraph{}} if there exists a function $f: V(G)\rightarrow [t]\times [t']$,  for some $t,t'\in {\mathbb N}$, such that the following conditions are satisfied.
\begin{enumerate}
\item\label{condition:GridClique1} For all $(i,j)\in [t]\times [t']$, it holds that $f^{-1}(i,j)$ is a clique.
\item\label{condition:GridClique2} For all $\{u,v\}\in E(G)$, it holds that if $f(u)=(i,j)$ and $f(v)=(i',j')$ then  
$$|i-i'|\leq 2\mbox{ and } |j-j'|\leq 2.$$ 
\end{enumerate}
Such a function $f$ is a {\em representation} of $G$. 
\end{definition}
We note that a notion similar to \gcgraph was also used by Ito and Kadoshita~\cite{ito2010tractability}. 
For the sake of clarity, we say that a pair $(i,j)\in [t]\times [t']$ is a {\em cell}. Moreover, whenever we discuss a clique-grid graph, we assume that we also have the representation. 
Next, we show that a unit disk graph is a \gcgraph.

\begin{lemma}\label{lem:unitDisk}
Let $D$ be a set of points in the Euclidean plane, and let $G$ be the unit disk graph of $D$. Then, a representation $f$ of $G$ can be computed in polynomial time.
\end{lemma}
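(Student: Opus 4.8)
The plan is to tile the plane with a grid whose cells have diameter small enough that any two points lying in the same cell are automatically adjacent in the unit disk graph, and then argue that only cells that are geometrically close can contain adjacent vertices. Concretely, I will overlay the plane with axis-parallel squares of side length $\sqrt{2}$ (equivalently, partition the plane into half-open squares $[\alpha\sqrt 2,(\alpha+1)\sqrt 2)\times[\beta\sqrt 2,(\beta+1)\sqrt 2)$ for integers $\alpha,\beta$). Since there are finitely many points, only finitely many such squares contain a point of $D$; re-indexing the nonempty squares by a bounded range of integers $[t]\times[t']$ and defining $f(p)$ to be the index of the square containing $p$ gives a function $f:V(G)\to[t]\times[t']$ computable in polynomial time (each point requires only two divisions and floor operations).

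Next I would verify the two conditions of Definition~\ref{def:GridClique}. For condition~\ref{condition:GridClique1}: if two points $p_i,p_j$ lie in the same square of side $\sqrt2$, then $|x_i-x_j|<\sqrt2$ and $|y_i-y_j|<\sqrt2$, so the Euclidean distance is strictly less than $\sqrt{(\sqrt2)^2+(\sqrt2)^2}=2$, hence $\{p_i,p_j\}\in E(G)$ by the distance-$\le 2$ definition of the unit disk graph; therefore $f^{-1}(i,j)$ is a clique. For condition~\ref{condition:GridClique2}: suppose $\{u,v\}\in E(G)$ with $f(u)$ the square at integer coordinates $(a,b)$ and $f(v)$ the square at $(a',b')$. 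Adjacency means the Euclidean distance between the points is at most $2$, so in particular $|x_u-x_v|\le 2$. If the $x$-indices differed by $3$ or more, the points would be separated by at least one full intervening square of width $\sqrt2$, forcing $|x_u-x_v|>\sqrt2\cdot 2=2\sqrt2>2$, a contradiction; hence the indices differ by at most $2$, and symmetrically for the $y$-coordinate. (I should double-check the exact constant: with side $\sqrt2$, cells at index-distance $d$ in one coordinate have point-coordinate separation strictly greater than $(d-1)\sqrt2$, so $d\ge 3$ gives separation $>2\sqrt2>2$; thus $|i-i'|\le 2$ and $|j-j'|\le 2$ as required.)

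I do not anticipate a serious obstacle here — the argument is a routine packing/separation computation — but the one point that needs care is the choice of the cell side length and the accompanying off-by-one bookkeeping: the side must be large enough that same-cell points are guaranteed adjacent (this needs side $\le\sqrt2$, using the open-square convention so the strict inequality gives distance strictly below $2$, matching the ``$\le 2$'' edge rule even in the boundary case) yet small enough that the interaction radius stays within $2$ cells (side $\sqrt2$ works since $2\sqrt2>2$; a larger side such as $2$ would only give interaction radius $1$ but would fail the clique condition since two points in a $2\times 2$ square can be at distance up to $2\sqrt2>2$). Finally I would note that the reindexing to $[t]\times[t']$ with $t,t'\le n$ and the whole computation is clearly polynomial, completing the proof.
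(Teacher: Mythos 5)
Your proposal is correct and matches the paper's proof in essence: both overlay a grid of side $\sqrt{2}$ (the paper anchors it at $(x_{\min},y_{\min})$ rather than at integer multiples of $\sqrt 2$, a cosmetic difference), argue that same-cell points are at distance $<2$ and hence adjacent, and that an edge forces the cell indices to differ by at most $2$ in each coordinate. The re-indexing/translation step you add is harmless since it can only decrease index differences, so no further comment is needed.
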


\begin{proof}
Denote $x_{\min}=\min\{x_i~|~p_i=(x_i,y_i)\in D\}$, $x_{\max}=\max\{x_i~|~p_i=(x_i,y_i)\in D\}$, $y_{\min}=\min\{y_i~|~p_i=(x_i,y_i)\in D\}$ and $y_{\max}=\max\{y_i~|~p_i=(x_i,y_i)\in D\}$. Accordingly, denote $\widehat{t}=\displaystyle{\frac{x_{\max}-x_{\min}}{\sqrt{2}}}$ and $\widehat{t}'=\displaystyle{\frac{y_{\max}-y_{\min}}{\sqrt{2}}}$. If $\widehat{t}=\lceil \widehat{t}\rceil$, then denote $t=\widehat{t}+1$, and otherwise denote $t=\widehat{t}$. Similarly, if $\widehat{t}'=\lceil \widehat{t}'\rceil$, then denote $t'=\widehat{t}'+1$, and otherwise denote $t'=\widehat{t}'$. Now, define $f: V(G)\rightarrow [t]\times [t']$ as follows. For all $p_i=(x_i,y_i)\in V(G)$, define $a_i=\lfloor \displaystyle{\frac{x_i-x_{\min}}{\sqrt{2}}}+1\rfloor$, $b_i=\lfloor \displaystyle{\frac{y_i-y_{\min}}{\sqrt{2}}}+1\rfloor$ and $f(p_i)=(a_i,b_i)$.

First, let us verify that Condition \ref{condition:GridClique1} in Definition \ref{def:GridClique} is satisfied. To this end, let $p_i=(x_i,y_i)$ and $p_j=(x_j,y_j)$ be two distinct vertices in $V(G)$ such that $f(p_i)=f(p_j)$. Then, $\lfloor \displaystyle{\frac{x_i-x_{\min}}{\sqrt{2}}}+1\rfloor=\lfloor \displaystyle{\frac{x_j-x_{\min}}{\sqrt{2}}}+1\rfloor$ and $\lfloor \displaystyle{\frac{y_i-y_{\min}}{\sqrt{2}}}+1\rfloor=\lfloor \displaystyle{\frac{y_j-y_{\min}}{\sqrt{2}}}+1\rfloor$. Thus, we have that $|x_i-x_j|<\sqrt{2}$ and $|y_i-y_j|<\sqrt{2}$. In particular, $\sqrt{(x_i-x_j)^2+(y_i-y_j)^2}< 2$, which implies that $(p_i,p_j)\in E(G)$.

Next, let us verify that Condition \ref{condition:GridClique2} in Definition \ref{def:GridClique} is satisfied. To this end, let $\{p_i=(x_i,y_i),p_j=(x_j,y_j)\}\in E(G)$. Recall that $f(p_i)$ and $f(p_j)$ are denoted by $(a_i,b_i)$ and $(a_j,b_j)$, respectively. Thus, to prove that $f(p_j)\in\{(a',b')~|~|a_i-a'|\leq 2, |b_i-b'|\leq 2\}$, it should be shown that $|a_i-a_j|\leq 2$ and $|b_i-b_j|\leq 2$. By substituting $a_i,a_j,b_i$ and $b_j$, it should be shown that
\begin{itemize}
\item $|\lfloor \displaystyle{\frac{x_i-x_{\min}}{\sqrt{2}}}\rfloor-\lfloor \displaystyle{\frac{x_j-x_{\min}}{\sqrt{2}}}\rfloor|\leq 2$, and
\item $|\lfloor \displaystyle{\frac{y_i-y_{\min}}{\sqrt{2}}}\rfloor-\lfloor \displaystyle{\frac{y_j-y_{\min}}{\sqrt{2}}}\rfloor|\leq 2$.
\end{itemize}

We focus on the proof of the first item, as the proof of the second item is symmetric.
Without loss of generality, suppose that $x_j\leq x_i$. Then, it remains to show that $\lfloor \displaystyle{\frac{x_i-x_{\min}}{\sqrt{2}}}\rfloor-\lfloor \displaystyle{\frac{x_j-x_{\min}}{\sqrt{2}}}\rfloor\leq 2$.
Since $G$ is the unit disk graph of $D$ and $\{p_i,p_j\}\in E(G)$, it holds that $\sqrt{(x_i-x_j)^2 + (y_i-y_j)^2}\leq 2$. In particular, $x_i-x_j\leq 2$. Denote $X=\displaystyle{\frac{x_j-x_{\min}}{\sqrt{2}}}$. Then, $\lfloor \displaystyle{\frac{x_i-x_{\min}}{\sqrt{2}}}\rfloor-\lfloor \displaystyle{\frac{x_j-x_{\min}}{\sqrt{2}}}\rfloor \leq \lfloor X+\sqrt{2}\rfloor-\lfloor X\rfloor\leq 2$.
\end{proof}

Similarly, we show the following.

\begin{lemma}\label{lem:unitSquare}
Let $S$ be a set of points in the Euclidean plane, and let $G$ be the unit square graph of $S$. Then, a representation $f$ of $G$ can be computed in polynomial time.
\end{lemma}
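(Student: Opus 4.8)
The plan is to mimic the proof of Lemma~\ref{lem:unitDisk}, adjusting only the geometry of the underlying shapes. Recall that two points $p_i=(x_i,y_i)$ and $p_j=(x_j,y_j)$ are adjacent in the unit square graph of $S$ precisely when $|x_i-x_j|\le 1$ and $|y_i-y_j|\le 1$ (the $L_\infty$-distance between centres is at most $1$). So the relevant ``interaction radius'' in each coordinate is $1$ rather than $2$. This means we can afford a coarser grid: cells of side length $1$ in each coordinate. First I would set $x_{\min},x_{\max},y_{\min},y_{\max}$ as in the disk proof, let $\widehat t=x_{\max}-x_{\min}$ and $\widehat t'=y_{\max}-y_{\min}$, take $t=\lceil\widehat t\rceil+1$ and $t'=\lceil\widehat t'\rceil+1$ (a safe over-count suffices; the exact rounding convention is immaterial), and define $f(p_i)=(a_i,b_i)$ with $a_i=\lfloor x_i-x_{\min}\rfloor+1$ and $b_i=\lfloor y_i-y_{\min}\rfloor+1$. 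This is clearly computable in polynomial time.

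Next I would verify Condition~\ref{condition:GridClique1}: if $f(p_i)=f(p_j)$ then $\lfloor x_i-x_{\min}\rfloor=\lfloor x_j-x_{\min}\rfloor$, hence $|x_i-x_j|<1$, and symmetrically $|y_i-y_j|<1$; therefore $\{p_i,p_j\}\in E(G)$, so each $f^{-1}(i,j)$ is a clique. Then I would verify Condition~\ref{condition:GridClique2}: if $\{p_i,p_j\}\in E(G)$ then $|x_i-x_j|\le 1$, and as in the disk proof, writing $X=x_j-x_{\min}$, we get $|\lfloor x_i-x_{\min}\rfloor - \lfloor x_j-x_{\min}\rfloor|\le \lfloor X+1\rfloor-\lfloor X\rfloor\le 1\le 2$; the $y$-coordinate is symmetric. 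Hence both conditions of Definition~\ref{def:GridClique} hold, and in fact with slack (the displacement in each coordinate is at most $1$, well within the allowed $2$), so $f$ is a valid representation.

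There is essentially no main obstacle here: the argument is a strictly easier variant of Lemma~\ref{lem:unitDisk}, since the $L_\infty$ metric of unit squares decouples the two coordinates exactly (no Pythagorean step is needed), and the interaction radius of $1$ comfortably fits inside the ``distance at most $2$'' budget of Definition~\ref{def:GridClique}. The only thing to be mildly careful about is the rounding/indexing so that $f$ genuinely maps into $[t]\times[t']$ with $t,t'\ge 1$; using $\lceil\cdot\rceil+1$ as above handles all edge cases, including a single point or all points sharing a coordinate. One could even use grid side length $2$ and land exactly in the disk proof's setup, but the finer grid is cleaner and still correct. Therefore the lemma follows, and the proof can simply say ``the proof is analogous to that of Lemma~\ref{lem:unitDisk}'' with the coordinate-wise estimates spelled out as above.
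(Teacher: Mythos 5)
Your proposal is correct and coincides with the paper's own proof: the same side-length-$1$ grid, the same cell map (note $\lfloor x_i-x_{\min}\rfloor+1=\lfloor x_i-x_{\min}+1\rfloor$), and the same coordinate-wise checks that cells are cliques and that adjacent points shift by at most one cell index in each coordinate. The only deviation is the (immaterial) rounding convention for $t,t'$, so nothing further is needed.
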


\begin{proof}
Denote $x_{\min}=\min\{x_i~|~p_i=(x_i,y_i)\in D\}$, $x_{\max}=\max\{x_i~|~p_i=(x_i,y_i)\in D\}$, $y_{\min}=\min\{y_i~|~p_i=(x_i,y_i)\in D\}$ and $y_{\max}=\max\{y_i~|~p_i=(x_i,y_i)\in D\}$. Accordingly, denote $\widehat{t}=x_{\max}-x_{\min}$ and $\widehat{t}'=y_{\max}-y_{\min}$. If $\widehat{t}=\lceil \widehat{t}\rceil$, then denote $t=\widehat{t}+1$, and otherwise denote $t=\widehat{t}$. Similarly, if $\widehat{t}'=\lceil \widehat{t}'\rceil$, then denote $t'=\widehat{t}'+1$, and otherwise denote $t'=\widehat{t}'$. Now, define $f: V(G)\rightarrow [t]\times [t']$ as follows. For all $p_i=(x_i,y_i)\in V(G)$, define $a_i=\lfloor x_i-x_{\min}+1\rfloor$, $b_i=\lfloor y_i-y_{\min}+1\rfloor$ and $f(p_i)=(a_i,b_i)$.

First, let us verify that Condition \ref{condition:GridClique1} in Definition \ref{def:GridClique} is satisfied. To this end, let $p_i=(x_i,y_i)$ and $p_j=(x_j,y_j)$ be two distinct vertices in $V(G)$ such that $f(p_i)=f(p_j)$. Then, $\lfloor x_i-x_{\min}+1\rfloor=\lfloor x_j-x_{\min}+1\rfloor$ and $\lfloor y_i-y_{\min}+1\rfloor=\lfloor y_j-y_{\min}+1\rfloor$. Thus, we have that $|x_i-x_j|<1$ and $|y_i-y_j|<1$, which implies that $(p_i,p_j)\in E(G)$.

Next, let us verify that Condition \ref{condition:GridClique2} in Definition \ref{def:GridClique} is satisfied. To this end, let $\{p_i=(x_i,y_i),p_j=(x_j,y_j)\}\in E(G)$. Recall that $f(p_i)$ and $f(p_j)$ are denoted by $(a_i,b_i)$ and $(a_j,b_j)$, respectively. Thus, to prove that $f(p_j)\in\{(a',b')~|~|a_i-a'|\leq 2, |b_i-b'|\leq 2\}$, it should be shown that $|a_i-a_j|\leq 2$ and $|b_i-b_j|\leq 2$. In fact, we will actually prove that $|a_i-a_j|\leq 1$ and $|b_i-b_j|\leq 1$.
By substituting $a_i,a_j,b_i$ and $b_j$, it is sufficient to show that
\begin{itemize}
\item $|\lfloor x_i-x_{\min}\rfloor-\lfloor x_j-x_{\min}\rfloor|\leq 1$, and
\item $|\lfloor y_i-y_{\min}\rfloor-\lfloor y_j-y_{\min}\rfloor|\leq 1$.
\end{itemize}

We focus on the proof of the first item, as the proof of the second item is symmetric.
Without loss of generality, suppose that $x_j\leq x_i$. Then, it remains to show that $\lfloor x_i-x_{\min}\rfloor-\lfloor x_j-x_{\min}\rfloor\leq 1$.
Since $G$ is the unit disk graph of $D$, $\{p_i,p_j\}\in E(G)$ and $x_j\leq x_i$, it holds that $x_i-x_j\leq 1$. Denote $X=x_j-x_{\min}$. Then, $\lfloor x_i-x_{\min}\rfloor-\lfloor x_j-x_{\min}\rfloor\leq \lfloor X+1\rfloor-\lfloor X\rfloor\leq 1$.
\end{proof}

Consequently, we have the following.

\begin{corollary}\label{cor:firstPhaseToGrid}
Let $(G,O,H,k)$ ($(G,O,k)$) be an instance of \probSI (resp.~\probKCycle) on unit disk/square graphs. Then, in polynomial time, one can output a representation $f$ such that $(G,f,H,k)$ (resp.~$(G,f,k)$) is an instance of \probSI (resp.~\probKCycle) on \gcgraphs{} that is equivalent to $(G,O,H,k)$ (resp.~$(G,O,k)$). 
\end{corollary}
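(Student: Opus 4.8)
The plan is to obtain Corollary~\ref{cor:firstPhaseToGrid} as an immediate consequence of Lemmas~\ref{lem:unitDisk} and~\ref{lem:unitSquare}, since those two lemmas already supply, in polynomial time, a representation $f$ of the input graph $G$ as a \gcgraph{}. The only thing that remains to be argued is that passing from the geometric description $O$ (the point set together with the disk/square model) to the combinatorial description $(G,f)$ does not change the graph $G$ itself, and hence preserves the answer to \probSI{} (resp.~\probKCycle).

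First I would spell out what an instance of \probSI{} (resp.~\probKCycle) "on \gcgraphs{}" means: it is a tuple $(G,f,H,k)$ (resp.~$(G,f,k)$) where $G$ is a graph, $f$ is a representation of $G$ in the sense of Definition~\ref{def:GridClique}, $H$ is a connected graph on $k$ vertices (resp.~$k$ is an integer), and the question is exactly the same question as in the original problem — whether $G$ contains a subgraph isomorphic to $H$ (resp.~whether $G$ has a cycle on at least $k$ vertices). Since the question depends only on $G$ (and $H$, $k$), and $f$ is merely extra structural information attached to the same graph $G$, the instance $(G,f,H,k)$ is a \Yes-instance if and only if $(G,O,H,k)$ is; likewise for \probKCycle. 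So equivalence is trivial once we know $f$ is a valid representation of the very same $G$.

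Then I would invoke the construction: given $(G,O,\dots)$ on unit disk graphs, Lemma~\ref{lem:unitDisk} computes in polynomial time a function $f$ satisfying Conditions~\ref{condition:GridClique1} and~\ref{condition:GridClique2} of Definition~\ref{def:GridClique} for this $G$, i.e.\ a representation of $G$; symmetrically, for unit square graphs Lemma~\ref{lem:unitSquare} does the same. Output $(G,f,H,k)$ (resp.~$(G,f,k)$). The total running time is polynomial, being the running time of the relevant lemma. Combined with the equivalence observation of the previous paragraph, this proves the corollary.

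There is essentially no mathematical obstacle here — the statement is a bookkeeping corollary whose entire content is "unit disk/square graphs are \gcgraphs{}, and a representation can be found efficiently," which is exactly what the two preceding lemmas establish. The only point that needs a sentence of care is making explicit that $G$ is unchanged (we only \emph{add} the representation $f$), so that "equivalent" is meant in the strong sense that the two instances have literally the same graph and hence the same answer; I would state this plainly rather than belabor it.
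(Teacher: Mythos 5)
Your proposal is correct and matches the paper's intent exactly: the paper gives no separate proof, stating the corollary as an immediate consequence of Lemmas~\ref{lem:unitDisk} and~\ref{lem:unitSquare}, since the graph $G$ is unchanged and only the representation $f$ is added. Your explicit remark that equivalence is trivial because the question depends only on $G$ (and $H$, $k$) is precisely the implicit justification the paper relies on.
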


We conclude this section by introducing the definition of an \NCTD{$\ell$}, which is useful for doing our dynamic programming algorithms. 

\begin{definition}\label{def:lcliqueTreeDecomp}
A tree decomposition $\TT=(T,\beta)$ of a \gcgraph{} $G$ with representation $f$ is a {\em \NCTDlong{$\ell$}}, or simply an {\em \NCTD{$\ell$}}, if for the root $r$ of $T$, it holds that $\beta(r)=\emptyset$, and for each node $v\in V(T)$, it holds that
\begin{itemize}
\item There exist at most $\ell$ cells, $(i_1,j_1),\ldots,(i_\ell,j_\ell)$, such that $\beta(v)=\bigcup_{t=1}^\ell f^{-1}(i_t,j_t)$, and
\item The node $v$ is of one of the following types.
	\begin{itemize}
	\item {\bf Leaf}: $v$ is a leaf in $T$ and $\beta(v)=\emptyset$.
	\item {\bf Forget}: $v$ has exactly one child $u$, and there exists a cell $(i,j)\in[t]\times[t']$ such that $f^{-1}(i,j)\subseteq\beta(u)$ and $\beta(v)=\beta(u)\setminus f^{-1}(i,j)$.
	\item {\bf Introduce}: $v$ has exactly one child $u$, and there exists a cell $(i,j)\in[t]\times[t']$ such that $f^{-1}(i,j)\subseteq\beta(v)$ and $\beta(v)\setminus f^{-1}(i,j)=\beta(u)\setminus f^{-1}(i,j)$.
	\item {\bf Join}: $v$ has exactly two children, $u$ and $w$, and $\beta(v)=\beta(u)=\beta(w)$.
	\end{itemize}
\end{itemize} 
\end{definition}

A {\em \NCPDlong{$\ell$}}, or simply an {\em \NCPD{$\ell$}}, is an \NCTD{$\ell$} where $T$ is a path. In this context, for convenience, we use the notation referring to a sequence presented in Section \ref{sec:prelim}.


\section{The Cell Graph of a Clique-Grid Graph}

In this section, we introduce two compact representations of \gcgraphs. By examining these representations, we are able to infer information on the structure of \gcgraphs{} that are also unit disk/square graphs.

\begin{definition}[backbone]\label{def:backbone}
Given a \gcgraph{} $G$ with representation $f: V(G)\rightarrow[t]\times[t']$, an induced subgraph $H$ of $G$ is a {\em backbone} for $(G,f)$ if for every two distinct cells $(i,j),(i',j')\in[t]\times[t']$ for which there exist $u\in f^{-1}(i,j)$ and $v\in f^{-1}(i',j')$ such that $\{u,v\}\in E(G)$, there also exist $u'\in f^{-1}(i,j)$ and $v'\in f^{-1}(i',j')$ such that $\{u',v'\}\in E(H)$. If no induced subgraph of $H$ is a backbone for $(G,f)$, then $H$ is a {\em minimal backbone} for $(G,f)$.
\end{definition}

First, we bound the maximum degree of a minimal backbone.

\begin{lemma}\label{lem:backboneDeg}
Let $G$ be a \gcgraph{} with representation $f$, and let $H$ be a minimal backbone for $(G,f)$. Then, for all $(i,j)\in[t]\times[t']$, it holds that $|f^{-1}(i,j) \cap V(H)|\leq 24$. Furthermore, the maximum degree of $H$ is at most $599$.
\end{lemma}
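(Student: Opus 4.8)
The bound on $|f^{-1}(i,j)\cap V(H)|$ should follow from minimality of $H$ together with a counting argument over the neighboring cells. Fix a cell $(i,j)$ and consider a vertex $u\in f^{-1}(i,j)\cap V(H)$. By minimality, $H$ has no proper induced subgraph that is still a backbone, so every vertex of $H$ must be ``needed''; in particular, $u$ must be the endpoint of some edge $\{u,v\}\in E(H)$ that witnesses an inter-cell adjacency $(i,j)\sim(i',j')$ not witnessed by any other edge of $H$ — otherwise deleting $u$ would leave a backbone. (One has to be slightly careful: $u$ might be needed to witness several cell-pairs, but the point is that if $u\in V(H)$ then there is at least one cell-pair $(i,j)\sim(i',j')$ for which $u$ is an endpoint of the \emph{unique} backbone-edge of $H$ between those two cells; if no such cell-pair existed, $H\setminus\{u\}$ would still be a backbone, contradicting minimality.) Now the cells $(i',j')$ that can be adjacent to $(i,j)$ are, by Condition~\ref{condition:GridClique2} of Definition~\ref{def:GridClique}, exactly those with $|i-i'|\le 2$ and $|j-j'|\le 2$; there are at most $5\times 5 = 25$ such cells, one of which is $(i,j)$ itself, leaving at most $24$ distinct ``partner'' cells. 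For each partner cell $(i',j')$ there is exactly one backbone-edge between $(i,j)$ and $(i',j')$ that is ``charged'' to a unique endpoint in $f^{-1}(i,j)$; charging each vertex $u\in f^{-1}(i,j)\cap V(H)$ to such a partner cell gives an injection into the set of $\le 24$ partner cells, hence $|f^{-1}(i,j)\cap V(H)|\le 24$.

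For the degree bound, take any $v\in V(H)$ with $f(v)=(i,j)$. Every neighbor of $v$ in $H$ lies in some cell $(i',j')$ with $|i-i'|\le 2$, $|j-j'|\le 2$, i.e.\ in one of the at most $25$ cells of the $5\times 5$ block around $(i,j)$ (including $(i,j)$ itself). By the first part of the lemma, each such cell contains at most $24$ vertices of $H$. Hence $v$ has at most $25\cdot 24 = 600$ vertices of $H$ in its closed neighborhood within these cells, and at most $600 - 1 = 599$ neighbors (subtracting $v$ itself, which lies in its own cell). This gives maximum degree $\le 599$.

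The only genuinely delicate point — and the one I would write out most carefully — is the charging argument establishing that membership of $u$ in $V(H)$ forces $u$ to be the \emph{sole} $f^{-1}(i,j)$-endpoint of a backbone-edge to some fixed partner cell. I would argue it by contradiction: suppose $u\in f^{-1}(i,j)\cap V(H)$ but for every partner cell $(i',j')$ adjacent to $(i,j)$ in $G$, either $u$ has no $H$-neighbor in $f^{-1}(i',j')$, or there is another vertex $u'\neq u$ in $f^{-1}(i,j)\cap V(H)$ with an $H$-neighbor in $f^{-1}(i',j')$. Then I claim $H' = H[V(H)\setminus\{u\}]$ is still a backbone: for any cell-pair $(a,b)\sim(a',b')$ in $G$, if the required witness edge of $H$ did not use $u$, it survives in $H'$; if it did use $u$, then $\{a,b\}=\{i,j\}$ (or symmetrically), and by assumption some $u'\neq u$ provides an alternative witness edge between the same two cells, which lies in $H'$. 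This contradicts minimality of $H$. One must also handle the degenerate possibilities — e.g.\ $u$ having $H$-neighbors only inside its own cell $f^{-1}(i,j)$, in which case deleting $u$ clearly preserves the backbone property since intra-cell edges are never required as witnesses (the backbone condition only concerns \emph{distinct} cells) — but these only make the deletion easier, not harder.
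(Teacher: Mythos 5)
Your proof is correct and follows essentially the same route as the paper: minimality lets each vertex of $H$ lying in a cell be charged injectively to one of the at most $24$ neighbouring cells (giving $|f^{-1}(i,j)\cap V(H)|\le 24$), and the degree bound is then $25\cdot 24-1=599$. Your write-up merely makes explicit the deletion/charging argument that the paper's one-line justification of the first inequality leaves implicit.
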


\begin{proof}
By Condition \ref{condition:GridClique2} in Definition \ref{def:GridClique}, we have that for all cells $(i,j)\in[t]\times[t']$, it holds that $f^{-1}(i,j)\cap V(H)\leq |\{(i',j')\in[t]\times[t']\setminus\{(i,j)\}~|~|i-i'|\leq 2,|j-j'|\leq 2\}|\leq 24$. Thus, for all $(i,j)\in[t]\times[t']$, the degree in $H$ of a vertex in $f^{-1}(i,j)\cap V(H)$ is bounded by 
\begin{eqnarray*}
|(\bigcup_{\substack{(i',j')\in[t]\times[t']\\ |i-i'|\leq 2 \\ |j-j'|\leq 2}}f^{-1}(i,j)\cap V(H))\setminus\{v\}| &\leq& 
|\{(i',j')\in[t]\times[t']~|~|i-i'|\leq2,|j-j'|\leq 2\}|\cdot 24-1 \\
&=& 25\cdot 24-1=599
\qedhere
\end{eqnarray*}
\end{proof}

Note that it is easy to compute a minimal backbone. The most naive computation simply initializes $H=G$; then, for every vertex $v\in V(G)$, it checks if the graph $H\setminus\{v\}$ has the same backbone as $H$, in which case it updates $H$ to $H\setminus\{v\}$. Thus, we have the following.

\begin{observation}\label{obs:computeBackbone}
Given a \gcgraph{} $G$ with representation $f$, a minimal backbone $H$ for $(G,f)$ can be computed in polynomial time.
\end{observation}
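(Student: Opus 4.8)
The plan is to formalize the naive procedure already hinted at in the paragraph preceding the statement, and to isolate the one non‑obvious point, namely why a single pass over the vertices is enough to reach minimality. The algorithm maintains an induced subgraph $H$ of $G$, initialized to $H=G$, processes the vertices $v\in V(G)$ in an arbitrary fixed order, and for each $v$ currently in $H$ tests whether $H\setminus\{v\}$ (that is, $G[V(H)\setminus\{v\}]$) is still a backbone for $(G,f)$; if so, it replaces $H$ by $H\setminus\{v\}$, and otherwise it keeps $v$. I would first observe that testing whether a given induced subgraph $H'$ of $G$ is a backbone is polynomial: enumerate the ordered pairs of distinct cells that are ``linked'' in $G$ (those $(i,j),(i',j')$ joined by some edge of $G$), of which there are at most $\OO(n^2)$, and for each such pair check, by scanning $E(H')$, that some edge of $H'$ runs between $f^{-1}(i,j)$ and $f^{-1}(i',j')$. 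Since the main loop performs at most $n$ such tests, the whole procedure runs in polynomial time.

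For correctness I would argue two things. First, the invariant ``$H$ is a backbone for $(G,f)$'' holds throughout: it is true at initialization because $H=G$ contains every edge of $G$, and each removal step preserves it by construction, since we only delete $v$ when $H\setminus\{v\}$ is verified to remain a backbone. Second, and this is the crux, the returned graph $H_{\mathrm{fin}}$ is \emph{minimal}. Here I would use the monotonicity of the backbone property under induced subgraphs: if $A$ and $B$ are induced subgraphs of $G$ with $V(A)\subseteq V(B)$, then every edge of $A$ between two cells is also an edge of $B$, so $A$ being a backbone forces $B$ to be a backbone; equivalently, if $B$ is \emph{not} a backbone then neither is $A$. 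From this, $H$ is a minimal backbone if and only if $H$ is a backbone and $H\setminus\{v\}$ is not a backbone for every $v\in V(H)$: the forward direction is immediate, and for the converse, a proper backbone subgraph $H'\subsetneq H$ together with any $v\in V(H)\setminus V(H')$ would give $H'\subseteq H\setminus\{v\}$, whence $H\setminus\{v\}$ is a backbone by monotonicity, a contradiction.

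It remains to combine these. Fix $v\in V(H_{\mathrm{fin}})$ and let $H_v$ be the value of $H$ at the moment $v$ was processed; since vertices are only ever removed, $V(H_{\mathrm{fin}})\subseteq V(H_v)$, and because $v$ was not deleted we know $H_v\setminus\{v\}$ is not a backbone. As $V(H_{\mathrm{fin}}\setminus\{v\})\subseteq V(H_v\setminus\{v\})$, the monotonicity observation gives that $H_{\mathrm{fin}}\setminus\{v\}$ is not a backbone either. Since $H_{\mathrm{fin}}$ is a backbone by the invariant, the characterization above shows it is minimal, which completes the argument.

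The step I expect to be the only real obstacle is precisely this last point: one has to rule out that deleting some vertex from the \emph{final} graph would still leave a backbone even though the same deletion was rejected earlier relative to a larger graph. The monotonicity lemma (backbone‑ness is closed upward, failure of backbone‑ness is closed downward, under taking induced subgraphs) resolves it cleanly, and everything else — the polynomial‑time backbone test and the invariant maintenance — is routine.
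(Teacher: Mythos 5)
Your proposal is correct and follows essentially the same route as the paper, which simply asserts the naive peeling procedure (initialize $H=G$, try deleting each vertex in turn, keep the deletion whenever the result is still a backbone); your added monotonicity argument — backbone-ness is upward closed under enlarging the vertex set of an induced subgraph, so a deletion rejected relative to a larger graph remains rejected relative to the final graph — correctly fills in the minimality detail the paper leaves implicit.
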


To analyze the treewidth of a backbone, we need the following.\footnote{The paper~\cite{FominLS12} does not consider unit square graphs, but the arguments it presents for unit disk graphs can be adapted to handle unit square graphs as well.}

\begin{proposition}[\cite{FominLS12}]\label{prop:gridUnitDisk}
Any unit disk/square graph with maximum degree $\Delta$ contains a $\displaystyle{\frac{\tw}{100\Delta^3}\times\frac{\tw}{100\Delta^3}}$ grid as a minor.
\end{proposition}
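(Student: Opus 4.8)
The plan is to pass from $G$ to a \emph{bounded-degree} auxiliary graph $\tilde G$ that is both a minor of $G$ and ``almost planar,'' to show that recovering $G$ from $\tilde G$ inflates treewidth by only a factor $\OO(\Delta)$, and then to extract a large grid minor from $\tilde G$ using its geometric structure; since the minor relation is transitive, this grid minor also lives in $G$. The argument is identical for unit disk and unit square graphs, so I describe the disk case.

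First I would fix a clique-grid representation $f\colon V(G)\to[t]\times[t']$ of $G$, available in polynomial time by Lemma~\ref{lem:unitDisk} (resp.\ Lemma~\ref{lem:unitSquare}). Since every cell $f^{-1}(i,j)$ induces a clique, each nonempty cell contains at most $\Delta+1$ vertices and induces a connected subgraph. Let $\tilde G$ be the graph on the nonempty cells in which $(i,j)$ and $(i',j')$ are adjacent precisely when $|i-i'|\le 2$, $|j-j'|\le 2$, and $G$ has an edge between $f^{-1}(i,j)$ and $f^{-1}(i',j')$. By Condition~\ref{condition:GridClique2} of Definition~\ref{def:GridClique} each cell is adjacent only to cells within $L_\infty$-distance $2$, so $\tilde G$ has maximum degree at most $24$; and since each cell induces a connected subgraph of $G$, contracting every cell to a single vertex exhibits $\tilde G$ as a minor of $G$.

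Next I would bound $\tw(\tilde G)$ from below. From any tree decomposition of $\tilde G$ I would build a tree decomposition of $G$ on the same tree by replacing, in each bag, every cell by the at most $\Delta+1$ vertices of $G$ it contains; the intra-cell edges are covered because each cell occurs in some bag, the inter-cell edges because adjacent cells of $\tilde G$ co-occur in some bag, and the connectivity condition is inherited. Hence $\tw(G)+1\le(\Delta+1)(\tw(\tilde G)+1)$, so $\tw(\tilde G)=\Omega(\tw(G)/\Delta)$.

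It remains to extract a grid minor from $\tilde G$, and this is the step I expect to be the real obstacle. Here one uses that $\tilde G$ is a spanning subgraph of the explicit ``$L_\infty$-distance-$2$ power'' of the $t\times t'$ grid, which is essentially a unit disk graph on integer centres; in particular $\tilde G$ and all of its subgraphs inherit strongly sublinear balanced separators from the geometric separator theorems for unit disk graphs (cf.\ the separators used in~\cite{alber2002geometric,Chan03,SmithW98}), and this ``almost planar'' structure can be turned into a \emph{linear} grid-minor bound: a graph of this type with treewidth $w$ contains a grid of side $\Omega(w)$ as a minor — this is precisely the content established in~\cite{FominLS12}. Applying it to $\tilde G$ gives a grid minor of side $\Omega(\tw(\tilde G))=\Omega(\tw(G)/\Delta)$ in $\tilde G$, hence in $G$; carrying the constants through the separator and grid-minor estimates (where the $\Delta$-dependence accumulates) yields the stated $\tfrac{\tw}{100\Delta^3}\times\tfrac{\tw}{100\Delta^3}$ grid. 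The manipulations in the first two steps are routine bookkeeping; the weight of the proof is entirely in making the ``almost planarity'' of $\tilde G$ quantitative with the explicit constants in the statement.
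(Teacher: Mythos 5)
There is a genuine gap, and it sits exactly where you predicted it would: the final step. Note first that the paper itself does not prove this statement at all — it is imported verbatim as a cited result from~\cite{FominLS12} (with only a footnote about adapting the argument to unit squares). Your proposal reduces $G$ to the bounded-degree cell graph $\tilde G$ (correctly: $\tilde G$ is a minor of $G$ with maximum degree $24$, and expanding cells back into bags gives $\tw(G)+1\le(\Delta+1)(\tw(\tilde G)+1)$), but then, to extract a grid of side linear in $\tw(\tilde G)$, you appeal to ``precisely the content established in~\cite{FominLS12}.'' That is circular: the linear treewidth-versus-grid-minor relationship for (bounded-degree) unit-disk-like graphs \emph{is} the proposition you are asked to prove, so your argument proves nothing beyond what it assumes. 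Your reductions are also not needed for the statement as given, since the proposition already restricts to maximum degree $\Delta$; the entire mathematical content is the step you black-boxed.

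Moreover, the bridge you suggest for that step does not work as a general principle. Strongly sublinear balanced separators plus bounded degree do \emph{not} imply a linear grid-minor bound: the three-dimensional $\ell\times\ell\times\ell$ grid has degree at most $6$ and $\OO(n^{2/3})$ separators, yet its treewidth is $\Theta(\ell^2)$ while its largest square grid minor has side only $\Theta(\ell^{3/2})=\Theta(\tw^{3/4})$. So ``sublinear separators $\Rightarrow$ grid minor of side $\Omega(w)$'' is false, and the gap cannot be patched that way. The actual argument in~\cite{FominLS12} uses the geometry more strongly: a unit disk/square graph of maximum degree $\Delta$ has a drawing (disks at their centres, straight-line edges) in which every edge is crossed by at most ${\rm poly}(\Delta)$ other edges; planarizing the drawing by introducing a vertex at each crossing changes the treewidth by at most a ${\rm poly}(\Delta)$ factor, the linear excluded-grid theorem for \emph{planar} graphs then yields a grid of side $\Omega(\tw/{\rm poly}(\Delta))$ in the planarization, and that grid minor is pulled back to the original graph — this is where the $100\Delta^3$ constant comes from. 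If your intent is simply to use the result of~\cite{FominLS12} as a black box, then the honest proof is the citation alone, as the paper does; as a self-contained proof, your write-up is missing its core.
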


Thus, we have the following.

\begin{lemma}\label{lem:backboneTW}
Given a \gcgraph{} $G$ that is a unit disk/square graph, a representation $f$ of $G$ and an integer $\ell \in\mathbb{N}$, in time $2^{\OO(\ell)}\cdot n^{\OO(1)}$, one can either correctly conclude that $G$ contains a $\displaystyle{\frac{\ell}{100\cdot599^3}\times\frac{\ell}{100\cdot599^3}}$ grid as a minor, or obtain a minimal backbone $H$ for $(G,f)$ with a nice tree decomposition ${\cal T}$ of width at most $5k$.
\end{lemma}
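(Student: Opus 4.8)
The plan is to combine the structural facts already assembled in this section. By Observation~\ref{obs:computeBackbone}, we first compute a minimal backbone $H$ for $(G,f)$ in polynomial time. By Lemma~\ref{lem:backboneDeg}, $H$ has maximum degree at most $599$. Now, a minimal backbone $H$ is (by definition) an induced subgraph of $G$; since $G$ is a unit disk/square graph, any induced subgraph of $G$ is again a unit disk/square graph (simply restrict the set of points), so $H$ itself is a unit disk/square graph of maximum degree $\Delta \le 599$. We may therefore invoke Proposition~\ref{prop:gridUnitDisk} with $\Delta = 599$: the treewidth of $H$ and the size of the largest grid minor of $H$ are tied together by $\tw(H)/(100\cdot 599^3) \le (\text{side of largest grid minor of } H)$.

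With this in hand I would run the constructive treewidth algorithm of Proposition~\ref{prop:treewidth} on $H$ with parameter $k$. In time $2^{\OO(k)}\cdot n$ this either outputs a (nice, after applying Proposition~\ref{prop:nice}) tree decomposition of $H$ of width at most $5k$ — which is exactly the second alternative of the statement — or it reports that $\tw(H) > k$. In the latter case, Proposition~\ref{prop:gridUnitDisk} applied to $H$ gives that $H$ contains a $\frac{\tw(H)}{100\cdot 599^3} \times \frac{\tw(H)}{100\cdot 599^3}$ grid as a minor, and since $\tw(H) > k \ge \ell$ (here one should note the roles of $k$ and $\ell$; presumably the intended reading is $\ell = k$, or at least $\ell \le k$, so that $\tw(H) > \ell$), this grid has side at least $\frac{\ell}{100\cdot 599^3}$. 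A grid minor of $H$ is in particular a grid minor of $G$, since $H$ is an induced (hence ordinary) subgraph of $G$ and the minor relation is transitive. That establishes the first alternative. Finally, I would apply Proposition~\ref{prop:nice} to convert the width-$\le 5k$ tree decomposition of $H$ produced in the second case into a nice one of the same width, in linear time, so the total running time is $2^{\OO(\ell)}\cdot n^{\OO(1)}$ as claimed (absorbing the polynomial backbone-computation step).

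The only genuinely delicate points are bookkeeping rather than mathematical. First, one must be careful that the statement speaks of $G$ containing a grid minor while the grid minor is actually found inside $H$; this is fine precisely because $H \subseteq G$ and grid-minor-containment is monotone under taking subgraphs. Second, there is the mild notational mismatch between the parameter $\ell$ in the hypothesis and the parameter $k$ appearing in ``width at most $5k$'' and in Proposition~\ref{prop:treewidth}; the clean way to handle it is to run Proposition~\ref{prop:treewidth} with the integer $\ell$ in place of $k$, concluding either a tree decomposition of $H$ of width at most $5\ell$ (and then the statement's ``$5k$'' should read ``$5\ell$'', or the paper is tacitly identifying the two), or that $\tw(H) > \ell$, which feeds Proposition~\ref{prop:gridUnitDisk} to produce the $\frac{\ell}{100\cdot 599^3}\times\frac{\ell}{100\cdot 599^3}$ grid minor. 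With that alignment the proof is essentially a three-line chain: compute $H$; bound $\Delta(H) \le 599$ via Lemma~\ref{lem:backboneDeg}; run Propositions~\ref{prop:treewidth},~\ref{prop:nice} and, in the failure branch, Proposition~\ref{prop:gridUnitDisk}. I expect the main ``obstacle'', such as it is, to be stating the $\ell$-versus-$k$ convention precisely enough that the grid-minor side length and the treewidth bound are both expressed in terms of the same quantity.
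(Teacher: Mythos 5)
Your proposal is correct and follows essentially the same route as the paper's own proof: compute a minimal backbone via Observation~\ref{obs:computeBackbone}, bound its degree by Lemma~\ref{lem:backboneDeg}, apply Proposition~\ref{prop:gridUnitDisk} to the backbone (which is indeed a unit disk/square graph) in the high-treewidth branch, lift the grid minor to $G$ by subgraph monotonicity, and use Propositions~\ref{prop:treewidth} and~\ref{prop:nice} in the low-treewidth branch. Your observation that the ``$5k$'' in the statement should read ``$5\ell$'' matches the paper's usage (cf.\ Lemma~\ref{lem:mainCell}), so that is a typo in the statement rather than a gap in your argument.
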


\begin{proof}
By Lemma \ref{lem:backboneDeg},  Observation \ref{obs:computeBackbone} and Proposition \ref{prop:gridUnitDisk}, in polynomial time, one can compute a minimal backbone of $H$ such that $H$ either contains a $\displaystyle{\frac{\ell}{100\cdot599^3}\times\frac{\ell}{100\cdot599^3}}$ grid as a minor or has treewidth at most $\ell$. Since $H$ is a subgraph of $G$, it holds that if $H$ contains an $a\times b$ grid as a minor, then $G$ also contains an $a\times b$ grid as a minor. Thus, by Propositions \ref{prop:nice} and \ref{prop:treewidth}, we conclude the proof of the lemma.
\end{proof}
We use Lemma~\ref{lem:backboneTW} with $\ell=\OO(\sqrt{k})$.  
Next, we define a more compact representation of a \gcgraph.

\begin{definition}[cell graph]\label{def:cellGraph}
Given a \gcgraph{} $G$ with representation $f: V(G)\rightarrow[t]\times[t']$, the {\em cell graph} of $G$, denoted by $\cell(G)$, is the graph on the vertex-set $\{v_{i,j}: i\in[t],j\in[t'],f^{-1}(i,j)\neq\emptyset\}$ and edge-set $\{\{v_{i,j},v_{i',j'}\}: (i,j)\neq(i',j'), \exists u\in f^{-1}(i,j)\exists v\in f^{-1}(i',j')\ \mbox{ such that }\ \{u,v\}\in E(G)\}$. 
\end{definition}

By Definitions \ref{def:backbone} and \ref{def:cellGraph}, we directly conclude the following.

\begin{observation}\label{obs:cellGraphMinor}
For a \gcgraph{} $G$, a representation $f$ of $G$ and a backbone $H$ for $(G,f)$, it holds that $\cell(G)$ is a minor of $H$.
\end{observation}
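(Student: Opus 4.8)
The plan is to show that $\cell(G)$ can be obtained from the backbone $H$ by a sequence of edge deletions and edge contractions, which by definition makes $\cell(G)$ a minor of $H$. I would start from the observation that $H$, being a backbone for $(G,f)$, restricted to each cell $(i,j)$ with $f^{-1}(i,j)\neq\emptyset$ induces a nonempty (in fact connected, since $f^{-1}(i,j)$ is a clique) subgraph: indeed $V(H)\cap f^{-1}(i,j)$ must be nonempty whenever that cell participates in any inter-cell edge, and cells with no vertices at all in $G$ are irrelevant. Cells of $G$ that are nonempty but isolated (no edges leaving the cell in $G$) correspond to isolated vertices of $\cell(G)$; for those I would just keep one arbitrary vertex of $H$ in that cell (contracting the rest of the clique, if present in $H$) — though I should double-check whether such a vertex is guaranteed to lie in $V(H)$, since minimality of $H$ could delete an isolated clique entirely. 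This suggests the cleanest route is to prove the statement for an arbitrary backbone $H$ (not necessarily minimal) and note that for the cells that matter, $H$ always retains at least one vertex.

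The main construction: first delete from $H$ every edge that is ``intra-cell'', i.e.\ every edge $\{u,v\}$ with $f(u)=f(v)$. What remains is a graph $H'$ on $V(H)$ whose edges all go between distinct cells. Then, for each cell $(i,j)$ with $V(H)\cap f^{-1}(i,j)\neq\emptyset$, contract the set $V(H)\cap f^{-1}(i,j)$ down to a single vertex $\bar v_{i,j}$. The catch is that contraction of a vertex set is only a minor operation when that set induces a connected subgraph; after deleting the intra-cell edges this may no longer hold. So instead I would reverse the order: first contract each set $V(H)\cap f^{-1(i,j)}$ in the graph $H$ itself, where it is connected (a clique), obtaining a graph $H''$ with one vertex per nonempty-in-$H$ cell; then delete loops and parallel edges / superfluous edges. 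After all contractions, the vertex set of $H''$ is exactly $\{\bar v_{i,j} : V(H)\cap f^{-1}(i,j)\neq\emptyset\}$, and there is an edge $\{\bar v_{i,j},\bar v_{i',j'}\}$ precisely when $H$ had some edge between $f^{-1}(i,j)$ and $f^{-1}(i',j')$ with $(i,j)\neq(i',j')$.

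The final step is to check that this matches $\cell(G)$ up to deleting some vertices and edges. By the backbone property, for every pair of distinct cells joined by an edge in $G$ there is a corresponding edge in $H$, so every edge of $\cell(G)$ appears among the edges of $H''$; conversely every edge of $H''$ is an edge of $\cell(G)$ since $H\subseteq G$. As for vertices: $V(\cell(G))$ consists of cells nonempty in $G$, while $V(H'')$ consists of cells nonempty in $H$, a subset; and any cell nonempty in $G$ but empty in $H$ (necessarily an isolated cell) corresponds to an isolated vertex of $\cell(G)$, which can be freely added back as an isolated vertex is trivially obtained — but actually for the minor relation we only need $\cell(G)$ to be obtainable from $H$ by deletions/contractions, so we may instead first \emph{delete} from $\cell(G)$'s perspective: it is cleaner to argue $\cell(G)$ is obtained from $H''$ by deleting the (isolated) vertices in $V(H'')\setminus V(\cell(G))$ and noting the rest is an isomorphism. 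Wait — the inclusion goes the wrong way, so the honest statement is that $\cell(G)$ minus its isolated vertices is isomorphic to a subgraph of (in fact equals, after deleting isolated vertices of) $H''$, hence a minor of $H$; and since isolated vertices never affect the minor relation in the direction we need (we want $\cell(G) \preceq_m H$, and removing isolated vertices from the \emph{host} is a deletion, adding them is not allowed — so the precise bookkeeping is that every nonempty-in-$G$ cell with an incident edge is also nonempty in $H$ by the backbone property, and truly isolated cells can be handled by not deleting those clique vertices from $H$ during minimization, or by observing $\cell(G)$ restricted to non-isolated vertices already gives what we want and isolated vertices of a graph are a minor of any graph with at least one vertex). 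The anticipated main obstacle is exactly this last point: being careful that minimality of the backbone does not destroy vertices of $\cell(G)$ corresponding to isolated cells, which I would resolve either by stating the result for general backbones and invoking it via Observation~\ref{obs:computeBackbone}, or by observing that such isolated vertices are harmless for all subsequent uses (treewidth, grid minors) so one may as well assume $G$ has no isolated cells.
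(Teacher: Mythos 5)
Your construction---contracting each clique $V(H)\cap f^{-1}(i,j)$ inside $H$ and matching the resulting edges against $\cell(G)$ via the backbone property in one direction and $H\subseteq G$ in the other---is exactly the argument the paper intends; the paper offers no written proof, stating that the observation follows directly from Definitions~\ref{def:backbone} and~\ref{def:cellGraph}. Your caveat about cells that are nonempty in $G$ but meet no vertex of a (minimal) backbone is a fair catch of a degenerate case the paper glosses over, but it is harmless, since such cells yield only isolated vertices of $\cell(G)$ and the observation is used solely for the treewidth comparison in Observation~\ref{obs:twCell}, which isolated vertices do not affect.
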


Since for any graph $G$ and a minor $H$ of $G$, it holds that $\tw(H)\leq\tw(G)$, we have the following.

\begin{observation}\label{obs:twCell}
For a \gcgraph{} $G$, a representation $f$ of $G$ and a backbone $H$ for $(G,f)$, it holds that $\tw(\cell(G))\leq\tw(H)$.
\end{observation}

Overall, from Lemma \ref{lem:backboneTW} and Observation \ref{obs:twCell}, we directly have the following.

\begin{lemma}\label{lem:mainCell}
Given a \gcgraph{} $G$ that is a unit disk/square graph, a representation $f$ of $G$ and an integer $\ell\in\mathbb{N}$, in time $2^{\OO(\ell)}\cdot n^{\OO(1)}$, one can either correctly conclude that $G$ contains a $\displaystyle{\frac{\ell}{100\cdot599^3}\times\frac{\ell}{100\cdot599^3}}$ grid as a minor, or compute a nice tree decomposition of $\cell(G)$ of width at most $5\ell$.
\end{lemma}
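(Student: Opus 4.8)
The plan is to chain Lemma~\ref{lem:backboneTW}, which does the heavy lifting, with a routine transformation of a tree decomposition along a minor model. First I would run Lemma~\ref{lem:backboneTW} on $(G,f,\ell)$: in time $2^{\OO(\ell)}\cdot n^{\OO(1)}$ it either certifies that $G$ contains a $\frac{\ell}{100\cdot 599^3}\times\frac{\ell}{100\cdot 599^3}$ grid as a minor --- in which case we output this conclusion and stop --- or returns a minimal backbone $H$ for $(G,f)$ together with a nice tree decomposition $\TT_H$ of $H$ of width at most $5\ell$. It remains, in the latter case, to convert $\TT_H$ into a (nice) tree decomposition of $\cell(G)$ of width at most $5\ell$.

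Second, I would make the embedding of $\cell(G)$ as a minor of $H$ from Observation~\ref{obs:cellGraphMinor} explicit: for every cell $(i,j)$ with $B_{i,j}:=f^{-1}(i,j)\cap V(H)\neq\emptyset$, take $B_{i,j}$ as the branch set of the vertex $v_{i,j}$; this set is a clique in $H$ --- since $H$ is an induced subgraph of $G$ and $f^{-1}(i,j)$ is a clique by Definition~\ref{def:GridClique} --- hence connected in $H$, and the backbone property (Definition~\ref{def:backbone}) together with Definition~\ref{def:cellGraph} guarantees an $H$-edge between $B_{i,j}$ and $B_{i',j'}$ whenever $\{v_{i,j},v_{i',j'}\}$ is an edge of $\cell(G)$ with both cells meeting $V(H)$. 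Using the standard fact that a tree decomposition of a graph $N$ of width $w$, together with a minor model of $M$ in $N$ with connected branch sets, can be transformed in polynomial time into a tree decomposition of $M$ of width at most $w$ (contract each branch set to its vertex by relabelling the occurrences in every bag, and delete everything unused from all bags; neither step enlarges a bag), we obtain a tree decomposition of width at most $5\ell$ of the subgraph of $\cell(G)$ induced by $\{v_{i,j}: B_{i,j}\neq\emptyset\}$. The only vertices of $\cell(G)$ not covered are the $v_{i,j}$ with $f^{-1}(i,j)\neq\emptyset$ but $f^{-1}(i,j)\cap V(H)=\emptyset$; any such $v_{i,j}$ is isolated in $\cell(G)$ (an incident edge would, by the backbone property, force a vertex of this cell into $H$), so it can simply be added as a fresh singleton-bag leaf without raising the width (for $\ell\ge 1$). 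Finally I would apply Proposition~\ref{prop:nice} to make the decomposition nice without changing its width, and output it. Every step after Lemma~\ref{lem:backboneTW} is polynomial, so the total running time is $2^{\OO(\ell)}\cdot n^{\OO(1)}$.

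The one substantive point --- the ``hard part'', such as it is --- is that Observation~\ref{obs:twCell} only yields the \emph{bound} $\tw(\cell(G))\le\tw(H)\le 5\ell$, whereas the lemma demands an actual decomposition of $\cell(G)$; feeding $\cell(G)$ into Proposition~\ref{prop:treewidth} with parameter $5\ell$ would only produce width $25\ell$, which is too large. Hence it is essential to keep the explicit minor model in hand and push $\TT_H$ through it rather than recompute from scratch. The remaining content --- connectivity of the branch sets (immediate, since they are cliques in the induced subgraph $H$) and harmlessness of the cells invisible to $H$ --- is pure bookkeeping.
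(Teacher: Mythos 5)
Your proof is correct and is essentially the paper's own argument: the paper derives Lemma~\ref{lem:mainCell} ``directly'' by chaining Lemma~\ref{lem:backboneTW} with the minor relation between $\cell(G)$ and the backbone $H$ (Observations~\ref{obs:cellGraphMinor} and~\ref{obs:twCell}), and what you add is just the explicit, constructive projection of the tree decomposition of $H$ through the branch sets $f^{-1}(i,j)\cap V(H)$, which is exactly the content the paper leaves implicit. (Your side remark that this projection is strictly necessary is a slight overstatement---one could alternatively run Proposition~\ref{prop:treewidth} on $\cell(G)$ with parameter $\ell$ and, in the failure branch, conclude the grid minor via Observation~\ref{obs:twCell} and Proposition~\ref{prop:gridUnitDisk}---but this does not affect the correctness of your proof.)
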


Note that a nice tree decomposition of $\cell(G)$ of width $5\ell$ corresponds to a \NCTD{$5\ell$} of $G$. In other words, Lemma~\ref{lem:mainCell} implies the following.

\begin{corollary}\label{cor:mainNCTD}
Given a \gcgraph{} $G$ that is a unit disk/square graph, a representation $f$ of $G$ and an integer $\ell\in\mathbb{N}$, in time $2^{\OO(\ell)}\cdot n^{\OO(1)}$, one can either correctly conclude that $G$ contains a $\displaystyle{\frac{\ell}{100\cdot599^3}\times\frac{\ell}{100\cdot599^3}}$ grid as a minor, or compute a \NCTD{$5\ell$} of $G$.
\end{corollary}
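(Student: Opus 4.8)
The plan is to obtain the corollary directly from Lemma~\ref{lem:mainCell} by ``uncontracting'' the cells of $\cell(G)$. First I would invoke the algorithm of Lemma~\ref{lem:mainCell} on $G$, $f$ and $\ell$, in time $2^{\OO(\ell)}\cdot n^{\OO(1)}$. If it reports a $\frac{\ell}{100\cdot599^3}\times\frac{\ell}{100\cdot599^3}$ grid minor we output that and stop; otherwise it returns a nice tree decomposition $\TT'=(T,\beta')$ of $\cell(G)$ of width at most $5\ell$, and the remainder is a deterministic, polynomial-time transformation of $\TT'$, so the overall running time stays $2^{\OO(\ell)}\cdot n^{\OO(1)}$.

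The transformation keeps the same tree $T$ and, for each node $x$, sets $\beta(x)=\bigcup\{\,f^{-1}(i,j)\ :\ v_{i,j}\in\beta'(x)\,\}$, i.e.\ it replaces every cell-vertex $v_{i,j}$ occurring in a bag of $\TT'$ by the clique $f^{-1}(i,j)$ it represents. Since $\TT'$ has width at most $5\ell$, each $\beta'(x)$ consists of at most $5\ell+1$ distinct cell-vertices, so each $\beta(x)$ is a union of (at most $5\ell$) cells in the sense required by Definition~\ref{def:lcliqueTreeDecomp} (we follow the paper's convention for the additive constant). Then I would check the three tree-decomposition axioms for $\TT=(T,\beta)$ as a decomposition of $G$: vertex coverage is immediate since $u\in f^{-1}(f(u))$ and $v_{f(u)}$ appears in some bag of $\TT'$; edge coverage for $\{u,v\}$ splits into the case $f(u)=f(v)$ (one cell, which is a clique by Condition~\ref{condition:GridClique1}, hence contained in the bag wherever that cell-vertex appears) and $f(u)\neq f(v)$ (then $\{v_{f(u)},v_{f(v)}\}\in E(\cell(G))$ by Definition~\ref{def:cellGraph}, so some bag of $\TT'$ contains both cell-vertices and the corresponding bag of $\TT$ contains $f^{-1}(f(u))\cup f^{-1}(f(v))$); and the interpolation axiom follows because $u\in\beta(x)\cap\beta(y)$ forces $v_{f(u)}\in\beta'(x)\cap\beta'(y)$, so $v_{f(u)}$, and hence the whole cell $f^{-1}(f(u))$, sits in every bag on the $x$--$y$ path in $T$.

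Finally I would verify that the node-type structure of Definition~\ref{def:lcliqueTreeDecomp} is inherited: the root bag of $\TT'$ is empty, hence so is the root bag of $\TT$; a leaf of $\TT'$ stays a leaf; an introduce node of $\TT'$ adding the single vertex $v_{i,j}$ becomes an introduce node of $\TT$ adding exactly the cell $f^{-1}(i,j)$, and symmetrically forget nodes become cell-forget nodes; and a join node with $\beta'(x)=\beta'(u)=\beta'(w)$ maps to a join node with $\beta(x)=\beta(u)=\beta(w)$. Hence $\TT$ is a \NCTD{$5\ell$} of $G$.

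I do not expect a substantive obstacle: the content of the corollary already lives in Lemma~\ref{lem:mainCell}, and the only point requiring care is that the uncontraction is well defined and faithful — which holds precisely because each $f^{-1}(i,j)$ is a clique, so contracting it to $v_{i,j}$ loses no adjacency information and the inverse operation restores exactly the bags of a valid, nice \NCTD. If anything is mildly delicate it is the bookkeeping of node types together with the additive slack in the ``$5\ell$ cells'' bound, but neither is a real difficulty.
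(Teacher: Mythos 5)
Your proposal is correct and matches the paper's approach: the paper proves this corollary simply by observing that a nice tree decomposition of $\cell(G)$ of width $5\ell$ ``corresponds to'' a \NCTD{$5\ell$} of $G$, and your uncontraction construction (replacing each cell-vertex $v_{i,j}$ by $f^{-1}(i,j)$ in every bag and checking the axioms and node types) is exactly the detail behind that one-line observation. The only delicate point you already flag---the $+1$ additive slack between width $5\ell$ (bags of up to $5\ell+1$ cell-vertices) and the ``at most $5\ell$ cells'' phrasing---is a harmless bookkeeping issue the paper glosses over as well.
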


\section{Turing Kernels}\label{sec:turingKer}

For the sake of uniformity, throughout this section, we denote an instance $(G,O,k)$ ($(G,f,k)$) of \probKCycle{} on unit disk/square graphs (resp.~\gcgraphs) also by $(G,O,H,k)$ (resp.~$(G,f,H,k)$) where $H$ is the empty graph. Our objective is to show that both {\sc SI} and \probKCycle on unit disk/square graphs admit a Turing kernel. More precisely, we prove the following.

\begin{theorem}\label{thm:turingKerIS}
Let $(G,O,H,k)$ be an instance of \probSI (\probKCycle) on unit disk/square graphs. Then, in polynomial time, one can output a set ${\cal I}$ of instances of \probSI (resp.~\probKCycle) on unit disk/square graphs such that $(G,O,H,k)$ is a \Yes-instance if and only if at least one instance in ${\cal I}$ is a \Yes-instance, and for all $(\widehat{G},\widehat{O},\widehat{H},\widehat{k})\in{\cal I}$, it holds that $|V(\widehat{G})|=\OO(k^3)$, $|E(\widehat{G})|=\OO(k^4)$, $\widehat{H}=H$ and $\widehat{k}=k$.
\end{theorem}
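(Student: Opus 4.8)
The plan is to produce a Turing kernel via a localization argument: any connected pattern $H$ on $k$ vertices must embed into a "small-radius" region of the clique-grid graph, so we can afford to enumerate all candidate regions and, within each, discard cells that are too "thick" to possibly be needed. First I would apply Corollary~\ref{cor:firstPhaseToGrid} to reduce, in polynomial time, to the case where $G$ is a \gcgraph{} with a known representation $f:V(G)\to[t]\times[t']$; from now on I work entirely with cells. The key structural observation is that since $H$ is connected and has $k$ vertices, any subgraph of $G$ isomorphic to $H$ occupies at most $k$ cells, and by Condition~\ref{condition:GridClique2} consecutive occupied cells differ by at most $2$ in each coordinate, so the occupied cells all lie inside some axis-parallel window of side $\OO(k)$. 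Hence I would let $\mathcal{I}$ range over all such windows $W$: there are at most $t\cdot t' = \OO(n^2)$ choices of, say, the top-left occupied cell, so polynomially many candidate windows, and for each we form the induced subgraph $G[W]$ (restricting $f$ accordingly).

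The remaining issue is that a single window of side $\OO(k)$ still contains $\OO(k^2)$ cells, and each cell can contain arbitrarily many vertices, so $|V(G[W])|$ can be huge — we need to shrink each cell. Here I would use the bound from Lemma~\ref{lem:backboneDeg}: a minimal backbone $H'$ for $(G,f)$ keeps at most $24$ vertices per cell, and the number of non-isolated "cell-interactions" is controlled. More directly, the natural reduction rule is: if a cell $(i,j)$ contains more than $k$ vertices, then since $f^{-1}(i,j)$ is a clique and any copy of $H$ uses at most $k$ vertices in total, we may safely delete all but $k$ of the vertices of $f^{-1}(i,j)$ — but we must be careful to keep enough vertices to preserve all edges to neighboring cells. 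The clean way: keep, for each cell $(i,j)$, a set of at most $\OO(k)$ vertices consisting of, for each of the $\le 24$ neighboring occupied cells, a bounded number of representatives witnessing the cell-adjacency (as in the backbone), together with up to $k$ arbitrary further vertices of the clique; one checks that any copy of $H$ using deleted vertices can be rerouted to use only kept vertices, because within the clique all vertices are interchangeable up to their external neighborhoods, and $H$ has only $k$ vertices. This leaves each of the $\OO(k^2)$ cells with $\OO(k)$ vertices, giving $|V(\widehat G)|=\OO(k^3)$; the edge bound $|E(\widehat G)|=\OO(k^4)$ follows since each vertex has neighbors only in its $\le 25$ cells, each of size $\OO(k)$, so degree $\OO(k)$ and hence $\OO(k^3\cdot k)=\OO(k^4)$ edges — actually $\OO(k^3)\cdot\OO(k)$, consistent with the claim. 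Since $G[W]$ restricted to a unit disk/square representation is still a unit disk/square graph (we only delete points), each $\widehat G\in\mathcal I$ is a valid instance, and we set $\widehat H=H$, $\widehat k=k$.

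Correctness of the equivalence has two directions: if some $\widehat G\in\mathcal I$ contains a copy of $H$ then so does $G\supseteq\widehat G$; conversely if $G$ contains a copy of $H$, fix one, let $W$ be the $\OO(k)$-side window containing its occupied cells (which exists by the connectivity/distance argument above), and then argue that after the per-cell pruning the copy can be transformed into one living in the pruned $\widehat G$ — this is the rerouting step. I expect the main obstacle to be precisely this rerouting: making sure the reduction rule that thins each clique-cell down to $\OO(k)$ vertices provably preserves the existence of an $H$-subgraph. The subtlety is that a copy of $H$ may use several vertices of the same cell with genuinely different external neighborhoods, so we cannot naively keep an arbitrary $k$-subset; we must keep a "representative for every relevant external adjacency pattern," and bound the number of such patterns by $\OO(k)$ using that each cell talks to $\le 24$ others and the total pattern size is $\le k$. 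Getting this bookkeeping exactly right — and making it robust enough that the same construction serves both \probSI{} and \probKCycle{} (the latter being the $H=$ empty-graph / "long cycle" convention set up at the start of the section) — is where the real work lies; the window-enumeration and the final size counting are routine once the reduction rule is in place.
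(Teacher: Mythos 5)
There is a genuine gap, and it sits exactly where you flagged it: the per-cell thinning rule. Your construction needs to shrink each clique-cell to $\OO(k)$ vertices while preserving the existence of an $H$-copy, and the ``keep a representative for every relevant external adjacency pattern'' step is neither carried out nor obviously boundable by $\OO(k)$ (a cell talks to at most $24$ other cells, but each of those can be huge, so the number of distinct external neighborhoods is not a priori small). The paper avoids this entirely with a much simpler observation that your write-up misses: if some cell satisfies $|f^{-1}(i,j)|\geq k$, then that cell is a clique on at least $k$ vertices, so the connected pattern $H$ on $k$ vertices (and likewise a cycle on at least $k$ vertices) is already a subgraph of $G$, and one may output the single trivial instance $K_k$ and stop. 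Hence after this check every cell has fewer than $k$ vertices, and no pruning, rerouting, or twin/representative bookkeeping is needed at all; the window enumeration over all $\OO(k)\times\OO(k)$ cell-windows $G_{p,q}$ then immediately gives $|V(\widehat G)|=\OO(k^3)$ and $|E(\widehat G)|=\OO(k^4)$, exactly as in your counting. So your windowing and size analysis match the paper, but the step you identify as ``where the real work lies'' is both unresolved in your proposal and unnecessary in the actual proof.

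The second gap is the \probKCycle case, which you defer but never resolve. For \probKCycle the witness is a cycle on \emph{at least} $k$ vertices, which may be far longer than $k$ and therefore need not fit in any $\OO(k)$-side window, so your localization argument (which relies on the pattern having only $k$ vertices) does not apply. The paper handles this with a separate polynomial-time test: it checks, via a $2$-flow computation between every pair of vertices lying in cells at distance at least $2k$ in either coordinate, whether $G$ has a ``stretched'' cycle; if so, that cycle must have more than $k$ vertices (by the same connectivity/cell-distance bound applied to the cycle itself), so the instance is a \Yes-instance and again a trivial instance is output. Only after ruling out stretched cycles can one argue that some witnessing cycle lies inside one of the $2k\times 2k$ cell-windows. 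Without this step (or some substitute), the backward direction of your equivalence fails for \probKCycle.
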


To prove Theorem \ref{thm:turingKerIS}, we first need two definitions.

\begin{definition}
Let $G$ be a \gcgraph\ with representation $f: V(G)\rightarrow[t]\times[t']$, $H'$ be a subgraph of $G$, and $\ell\in\mathbb{N}$. We say that $H'$ is {\em $\ell$-stretched} if there exist cells $(i,j),(i',j')\in[t]\times[t']$ such that the following conditions are satisfied.
\begin{itemize}
\item It holds that $|i-i'|\geq 2\ell$ or $|j-j'|\geq 2\ell$ (or both).
\item It holds that $V(H')\cap f^{-1}(i,j)\neq\emptyset$ and $V(H')\cap f^{-1}(i',j')\neq\emptyset$.
\end{itemize}
\end{definition}

\begin{definition}
Let $I=(G,f: V(G)\rightarrow [t]\times[t'],k)$ be an instance of \probKCycle on \gcgraphs. We say that $I$ is a {\em stretched instance} if $G$ has a cycle $C$ that is $\ell$-stretched for some $\ell\geq 2k$.
\end{definition}

We proceed by proving two claims concerning solutions to \probKCycle{} and \probSI{} on \gcgraphs.

\begin{lemma}\label{lem:patternDistantCells}
Let $I=(G,f: V(G)\rightarrow [t]\times[t'],H,k)$ be an instance of \probSI{} on \gcgraphs. Then, for any subgraph $H'$ of $G$ that is isomorphic to $H$, it holds that $H'$ is not $2k$-stretched.
\end{lemma}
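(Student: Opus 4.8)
The statement says that a copy $H'$ of a connected pattern $H$ on $k$ vertices inside a clique-grid graph $G$ cannot be $2k$-stretched. The natural approach is to argue by contradiction using connectivity of $H$ together with the locality Condition~\ref{condition:GridClique2} in Definition~\ref{def:GridClique}: each edge of $G$ moves you by at most $2$ in each grid coordinate, so a connected subgraph with few vertices cannot reach two cells that are far apart.

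\medskip

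\textbf{The proof I would write.} Suppose, for the sake of contradiction, that $H'$ is a subgraph of $G$ isomorphic to $H$ and that $H'$ is $2k$-stretched. Then by definition there exist cells $(i,j),(i',j')\in[t]\times[t']$ with $V(H')\cap f^{-1}(i,j)\neq\emptyset$ and $V(H')\cap f^{-1}(i',j')\neq\emptyset$, and with $|i-i'|\geq 4k$ or $|j-j'|\geq 4k$ (taking $\ell=2k$ in the definition of $\ell$-stretched). Fix vertices $u\in V(H')\cap f^{-1}(i,j)$ and $v\in V(H')\cap f^{-1}(i',j')$. Since $H$ is connected and $H'\cong H$, the subgraph $H'$ is connected, so there is a $u$--$v$ path $P=[u=w_0 w_1\cdots w_r=v]$ in $H'$, hence in $G$. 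Because $H$ has $k$ vertices, $P$ has at most $k$ vertices, so $r\leq k-1$.

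\medskip

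Now I would track the first coordinate of the cells of the path vertices. Write $f(w_s)=(i_s,j_s)$ for $0\leq s\leq r$, so $(i_0,j_0)=(i,j)$ and $(i_r,j_r)=(i',j')$. For each consecutive pair $w_{s-1},w_s$, either they lie in the same cell (and then $i_{s-1}=i_s$, $j_{s-1}=j_s$), or $\{w_{s-1},w_s\}\in E(G)$ with $w_{s-1},w_s$ in distinct cells, in which case Condition~\ref{condition:GridClique2} gives $|i_{s-1}-i_s|\leq 2$ and $|j_{s-1}-j_s|\leq 2$. In all cases $|i_{s-1}-i_s|\leq 2$ and $|j_{s-1}-j_s|\leq 2$. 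Summing over $s=1,\ldots,r$ and using the triangle inequality,
\[
|i-i'| \;=\; |i_0-i_r| \;\leq\; \sum_{s=1}^{r}|i_{s-1}-i_s| \;\leq\; 2r \;\leq\; 2(k-1) \;<\; 4k,
\]
and the same bound holds for $|j-j'|$. This contradicts the assumption that $|i-i'|\geq 4k$ or $|j-j'|\geq 4k$. Hence $H'$ is not $2k$-stretched.

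\medskip

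\textbf{Expected main obstacle.} There is essentially no deep obstacle here; the only thing requiring a little care is bookkeeping the constant: the definition of ``$\ell$-stretched'' uses the threshold $2\ell$, so $2k$-stretched means the cells differ by at least $2\cdot(2k)=4k$ in some coordinate, while a path on at most $k$ vertices has at most $k-1$ edges and thus spans at most $2(k-1)$ cells in each coordinate — comfortably below $4k$, so the argument even has slack. The one genuinely load-bearing hypothesis is that $H$ (and hence $H'$) is \emph{connected}; this is exactly where the problem statement's assumption that the pattern graph in \probSI is connected gets used. If I wanted to be slightly more careful I could also note that $u$ and $v$ may be chosen to be any such witnessing vertices and the bound is independent of that choice, but this is not needed for the contradiction.
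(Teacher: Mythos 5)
Your proof is correct and rests on the same two ingredients as the paper's own argument: connectivity of $H'$ (inherited from the connected pattern $H$) and Condition~\ref{condition:GridClique2} in Definition~\ref{def:GridClique}. The only cosmetic difference is bookkeeping: the paper counts the distinct column (row) indices occupied by $V(H')$, noting that consecutive occupied indices differ by at most $2$ and each requires a vertex of $H'$, which yields the stronger bound $i_{\max}-i_{\min}<2k$, while you sum displacements along a single $u$--$v$ path to get $2(k-1)$, comfortably below the required threshold $4k$; both are sound.
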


\begin{proof}
Let $H'$ be a subgraph of $G$ that is isomorphic to $H$. Denote $i_{\min}=\min\{i\in [t]: (\bigcup_{j\in[t']}f^{-1}(i,j))\cap V(H')\neq\emptyset\}$, $i_{\max}=\max\{i\in [t]: (\bigcup_{j\in[t']}f^{-1}(i,j))\cap V(H')\neq\emptyset\}$, $j_{\min}=\min\{j\in [t']: (\bigcup_{i\in[t]}f^{-1}(i,j))\cap V(H')\neq\emptyset\}$ and $j_{\max}=\max\{j\in [t']: (\bigcup_{i\in[t]}f^{-1}(i,j))\cap V(H')\neq\emptyset\}$. To prove that $H'$ is not $2k$-stretched, we need to prove that $i_{\max}-i_{\min}<2k$ and $j_{\max}-j_{\min}<2k$. We only prove that $i_{\max}-i_{\min}<2k$, as the proof that $j_{\max}-j_{\min}<2k$ is symmetric.

Let $i_1<i_2<\cdots<i_\ell$ for the appropriate $\ell$ be the set of indices $i\in [t]$ such that $\bigcup_{j\in[t']}f^{-1}(i,j))\cap V(H')\neq\emptyset\}$. Note that $i_1=i_{\min}$ and $i_\ell=i_{\max}$. We claim that for all $r\in[\ell-1]$, it holds that $i_{r+1}-i_r\leq 2$. Suppose, by way of contradiction, that there exists $r\in[\ell-1]$ such that $i_{r+1}-i_r>2$.
Recall that $H$ is a connected graph, and therefore $H'$ is also a connected graph. 
Thus, there exists an edge $\{u,v\}\in E(H')\subseteq E(G)$ and indices $i\leq i_r$ and $i'\geq i_{r+1}$ such that $u\in \bigcup_{j\in[t']}f^{-1}(i,j))$ and $v\in \bigcup_{j\in[t']}f^{-1}(i',j))$. However, this contradicts the fact that $f$ is a representation of $G$.

Now, since for all $r\in[\ell-1]$, we proved that $i_{r+1}-i_r\leq 2$, we have that that there exist at least $\displaystyle{\frac{i_{\max}-i_{\min}}{2}+1}$ indices $i\in[t]$ such that $\bigcup_{j\in[t']}f^{-1}(i,j))\cap V(H')\neq\emptyset$. However, $|V(H)|\leq k$. Therefore $\displaystyle{\frac{i_{\max}-i_{\min}}{2}+1}\leq k$, which implies that $i_{\max}-i_{\min}< 2k$.
\end{proof}

\begin{lemma}\label{lem:cycDistantCells}
Let $I=(G,f: V(G)\rightarrow [t]\times[t'],k)$ be an instance of \probKCycle on \gcgraphs. Then, it can be determined in polynomial time whether $I$ is a stretched instance, in which case it is also a \Yes-instance.
\end{lemma}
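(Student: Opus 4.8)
The plan is to show that detecting whether $I=(G,f,k)$ is a stretched instance reduces to a polynomial-time search over pairs of cells, and that the existence of an $\ell$-stretched cycle for $\ell\geq 2k$ automatically produces a cycle on at least $k$ vertices. The key structural fact powering the second part is exactly the contrapositive-style argument from Lemma~\ref{lem:patternDistantCells}: because $f$ is a representation, any edge of $G$ can only join cells whose coordinates differ by at most $2$, so any connected subgraph occupying cells spanning a coordinate range $R$ must touch at least $R/2+1$ distinct rows (or columns). Applied to a cycle $C$ that is $\ell$-stretched with $\ell\geq 2k$, this forces $C$ to pass through at least $\ell/2+1\geq k+1$ distinct cells, hence to contain at least $k+1>k$ vertices, so $I$ is indeed a \Yes-instance.

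\textbf{Deciding stretchedness in polynomial time.} First I would observe that $I$ is a stretched instance if and only if there is a cycle in $G$ touching two cells that are ``far'' (coordinate difference $\geq 4k$) in the grid. To test this, iterate over all ordered pairs of cells $(i,j),(i',j')$ with $|i-i'|\geq 4k$ or $|j-j'|\geq 4k$ — there are only $\OO((tt')^2)=n^{\OO(1)}$ such pairs. For each such pair, pick any $u\in f^{-1}(i,j)$ and any $v\in f^{-1}(i',j')$ (if the cells are nonempty) and test in polynomial time whether $G$ contains a cycle through both $u$ and $v$; this is a classic two-vertex disjoint-paths / cycle-through-two-vertices question solvable in polynomial time (e.g.\ via a max-flow / Menger-type computation checking that $u$ and $v$ lie on a common cycle, i.e.\ that there are two internally vertex-disjoint $u$–$v$ paths). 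If such a cycle $C$ exists for some far pair, then $C$ is $\ell$-stretched with $\ell = \lceil \max(|i-i'|,|j-j'|)/2\rceil \geq 2k$, so $I$ is stretched; conversely if no far pair yields such a cycle, then every cycle of $G$ lies within a $4k\times 4k$ block of cells and hence is not $2\ell$-stretched for any $\ell\geq 2k$ with room to spare, so $I$ is not stretched. (One must be mildly careful about the exact threshold: if $C$ is $\ell$-stretched with $\ell\geq 2k$ it touches cells with coordinate spread $\geq 2\ell \geq 4k$, so the threshold $4k$ is the right one; I would double-check the off-by-constant here in the write-up.)

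\textbf{Stretched implies \Yes.} Suppose $I$ is stretched, witnessed by a cycle $C$ that is $\ell$-stretched, $\ell\geq 2k$. Reusing verbatim the counting argument of Lemma~\ref{lem:patternDistantCells} applied to $C$ (which is connected): letting $i_{\min},i_{\max}$ (resp.\ $j_{\min},j_{\max}$) be the extreme occupied row (resp.\ column) indices of $V(C)$, the $\ell$-stretch hypothesis gives $i_{\max}-i_{\min}\geq 2\ell$ or $j_{\max}-j_{\min}\geq 2\ell$; say the former. Since consecutive occupied rows differ by at most $2$ (any edge of $C\subseteq G$ respects Condition~\ref{condition:GridClique2}), $C$ occupies at least $\frac{i_{\max}-i_{\min}}{2}+1\geq \ell+1\geq 2k+1$ distinct rows, so it occupies at least $2k+1$ distinct cells, and as the cells of $C$ are pairwise disjoint vertex sets each contributing $\geq 1$ vertex, $|V(C)|\geq 2k+1\geq k$. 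Hence $G$ has a cycle on at least $k$ vertices and $I$ is a \Yes-instance.

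\textbf{Main obstacle.} The only genuinely non-routine ingredient is the polynomial-time subroutine ``does $G$ have a cycle through two prescribed vertices $u,v$''; everything else is the now-familiar cell-counting lemma and a brute-force scan over cell pairs. I expect the write-up's delicate point to be phrasing this subroutine cleanly (it is the $2$-linkage problem with fixed terminals $\{u,v\},\{v,u\}$, or equivalently a Menger/flow check for two internally disjoint $u$–$v$ paths) and getting the ``far pair'' threshold constant consistent with the definition of $\ell$-stretched, so that ``stretched'' is characterized exactly, not just implied, by the existence of such a cycle for some far pair.
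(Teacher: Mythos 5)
Your overall strategy matches the paper's: scan far-apart pairs, decide ``is there a cycle through both'' via a flow/Menger check for two internally vertex-disjoint paths, and conclude \Yes{} from the cell-counting argument (the paper gets this by applying Lemma~\ref{lem:patternDistantCells} to the instance $(G,f,C,|V(C)|)$ of \probSI{}, while you re-derive the same counting directly on $C$; both are fine, and your threshold $4k$ is in fact the one that matches the definition of a stretched instance exactly). However, there is one genuine flaw in your detection step: for each far pair of cells you ``pick any $u\in f^{-1}(i,j)$ and any $v\in f^{-1}(i',j')$'' and test only that single representative pair. A stretched cycle witnessed by some $u'\in f^{-1}(i,j)$ and $v'\in f^{-1}(i',j')$ need not pass through your chosen representatives; for example, if $f^{-1}(i,j)=\{u,u'\}$ and $u$ has no neighbour outside its cell, then no cycle of $G$ contains $u$ at all, yet a long cycle through $u'$ and $v'$ may exist. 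With your choice the algorithm would report ``not stretched'' on a stretched instance, so the claimed equivalence (``conversely, if no far pair yields such a cycle, then every cycle lies within a $4k\times 4k$ block of cells'') does not follow as stated.

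The fix is exactly what the paper does: iterate over \emph{every} pair of vertices $(u,v)$ whose cells satisfy the distance condition (equivalently, over all $u\in f^{-1}(i,j)$ and all $v\in f^{-1}(i',j')$ for each far cell pair), which is still only $\OO(n^2)$ flow computations and hence polynomial. With that quantifier repaired, your argument is correct and is essentially the paper's proof; the rest of your write-up (the two-disjoint-paths subroutine and the ``stretched implies \Yes'' counting giving $|V(C)|\geq \ell+1\geq 2k+1>k$) needs no change.
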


\begin{proof}
It is well known that for any given graph and pair of vertices in this graph, one can determine (in polynomial time) whether the given graph has a cycle that contains both given vertices by checking whether there exists a flow of size $2$ between them (see, e.g., \cite{DBLP:confBjorklundHT}). Thus, by considering every pair $(u,v)$ of vertices in $V(G)$ such that $|i-i'|\geq 2k$ or $|j-j'|\geq 2k$ (or both) where $f(u)=(i,j)$ and $f(v)=(i',j')$, we can determine (in polynomial time) whether $I$ is a stretched instance.

Now, suppose that $I$ is a stretched instance. Then, $G$ has a cycle $C$ that is $\ell$-stretched for some $\ell\geq 2k$. Note that $I'=(G,f,C,|V(C)|)$ is a \Yes-instance of \probSI{} on \gcgraphs. Thus, by Lemma~\ref{lem:patternDistantCells}, it holds that $C$ is not $2|V(C)|$-stretched. Therefore, $\ell < 2|V(C)|$, and since $\ell\geq 2k$, we conclude that $k<|V(C)|$. Thus, $I$ is a \Yes-instance.
\end{proof}

Next, we prove a statement similar to the one of Theorem \ref{thm:turingKerIS}, but which concerns \gcgraphs. Our method is inspired by Baker's technique~\cite{Baker94}.

\begin{lemma}\label{lem:turingKernISGrid}
Let $I=(G,f: V(G)\rightarrow [t]\times[t'],H,k)$ be an instance of \probSI (\probKCycle) on \gcgraphs. Then, in polynomial time, one can output a set ${\cal I}$ of instances of \probSI (resp.~\probKCycle) on \gcgraphs{} such that $(G,f,H,k)$ is a \Yes-instance if and only if at least one instance in ${\cal I}$ is a \Yes-instance, and for all $(\widehat{G},\widehat{f}: V(\widehat{G})\rightarrow[\widehat{t}]\times [\widehat{t}'],\widehat{H},\widehat{k})\in{\cal I}$, it holds that $\widehat{G}$ is either an induced subgraph of $G$ or $K_{\widehat{k}}$, $\widehat{t},\widehat{t}'\leq 2k=\OO(\widehat{k})$, $|\widehat{f}^{-1}(i,j)|<\widehat{k}$ for any cell $(i,j)\in[t]\times[t']$, $\widehat{H}=H$ and $\widehat{k}=k$.
\end{lemma}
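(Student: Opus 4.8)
The plan is to use a ``shifting'' argument in the spirit of Baker's technique, applied to the first coordinate (and then, inside each piece, to the second coordinate). First I would handle the easy degenerate cases: if $G$ contains a clique on $k$ vertices, i.e.\ some cell $(i,j)$ with $|f^{-1}(i,j)|\geq k$, then for \probKCycle{} we can immediately output the single trivial \Yes-instance $(K_k, \widehat f, \emptyset, k)$ (with $\widehat f$ mapping everything to cell $(1,1)$), since $K_k$ contains a cycle on $k$ vertices; and for \probSI{} with pattern $H$ on $k$ vertices we can output $(K_k,\widehat f, H, k)$ because a connected $k$-vertex graph embeds into $K_k$. Also, for \probKCycle{}, by Lemma~\ref{lem:cycDistantCells} we first test in polynomial time whether $I$ is a stretched instance; if so, output the trivial \Yes-instance. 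Henceforth we may assume no cell is large and, in the \probKCycle{} case, $I$ is not stretched; by Lemma~\ref{lem:patternDistantCells} (and the definition of a non-stretched instance) any solution is confined to a window of fewer than $2k$ consecutive column-indices and fewer than $2k$ consecutive row-indices.

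Next I would do the two-dimensional shifting. For each residue $a\in\{0,1,\dots,2k-1\}$, delete from $G$ all vertices lying in columns $i$ with $i\equiv a \pmod{2k}$ (equivalently, remove the cells $(i,*)$ for those $i$). This splits the remaining graph into ``horizontal strips'' of at most $2k-1$ consecutive columns each; since no edge of $G$ joins cells whose first coordinates differ by more than $2$ (Condition~\ref{condition:GridClique2}), distinct strips are in different connected components, so a connected solution (a cycle, or a connected subgraph isomorphic to the connected graph $H$) lives inside a single strip. Because the solution spans fewer than $2k$ consecutive columns, there is at least one choice of $a$ for which the solution avoids all deleted columns entirely — this is exactly the averaging step of Baker's technique. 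Then, inside each strip, repeat the same argument on the second coordinate with another shift parameter $b\in\{0,1,\dots,2k-1\}$, cutting the strip into blocks of at most $2k-1$ consecutive rows; again some choice of $b$ leaves the solution untouched. Each resulting block $B$ is an induced subgraph of $G$ supported on at most $(2k-1)\times(2k-1)$ cells; after translating indices we get a representation $\widehat f\colon V(B)\to[\widehat t]\times[\widehat t']$ with $\widehat t,\widehat t'\leq 2k-1<2k$, and since no cell is large, $|\widehat f^{-1}(i,j)|<\widehat k=k$ for every cell. The instance set ${\cal I}$ consists of all these blocks (over all $a,b$ and all strips/blocks), each paired with $\widehat H=H$ and $\widehat k=k$, plus the trivial $K_k$ instance if we are in a degenerate case. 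The number of pairs $(a,b)$ is $(2k)^2$ and for each pair the number of blocks is polynomial in $n$, so $|{\cal I}|$ and the construction time are polynomial.

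For correctness: any \Yes-instance in ${\cal I}$ is a \Yes-instance of the original, because each $\widehat G$ is either an induced subgraph of $G$ (a \probSI-solution or a cycle in $\widehat G$ is the same object in $G$) or the trivial $K_k$ produced only when $G$ already has a $k$-clique or is stretched. Conversely, if $I$ is a \Yes-instance: either it is degenerate (large cell, or stretched in the cycle case) and the trivial instance is in ${\cal I}$ and is a \Yes-instance; or else fix a solution $S$ (a $k$-cycle, or a subgraph isomorphic to $H$), let $a^\star$ be a shift in the first coordinate missing all column-indices used by $S$ — which exists since $S$ uses $<2k$ consecutive indices — and similarly $b^\star$ in the strip containing $S$; then $S$ survives into one of the blocks $B$, so $(B,\widehat f, H, k)\in{\cal I}$ is a \Yes-instance.

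The main obstacle, and the only place where real care is needed, is pinning down precisely why a solution occupies a bounded window of consecutive indices so that the averaging over $2k$ shifts is guaranteed to succeed. For \probSI{} this is exactly Lemma~\ref{lem:patternDistantCells} (a copy of the connected pattern $H$ is not $2k$-stretched), and the ``consecutive'' part follows from the stronger fact proved inside that lemma that the occupied indices form a set any two successive elements of which differ by at most $2$ — hence they all lie in an interval of length $<2k$. For \probKCycle{} the same statement holds once we have discarded stretched instances: a cycle is a connected subgraph, so running it through Lemma~\ref{lem:patternDistantCells} with $H=C$ shows $C$ is not $2|V(C)|$-stretched, and combined with non-stretchedness ($C$ is not $2k$-stretched) this confines $C$ to a window of $<2k$ consecutive indices in each coordinate. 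The remaining bookkeeping — verifying that deleting a residue class of columns disconnects strips, that index-translation yields a valid representation with $\widehat t,\widehat t'<2k$, and that cell sizes stay below $k$ — is routine given Condition~\ref{condition:GridClique2} and the no-large-cell assumption.
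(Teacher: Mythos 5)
Your overall route is the same as the paper's: dispose of the degenerate cases (a cell with at least $k$ vertices gives the trivial $K_k$ instance; for \probKCycle{} test stretchedness via Lemma~\ref{lem:cycDistantCells}), then use the confinement of any solution to fewer than $2k$ consecutive column- and row-indices (Lemma~\ref{lem:patternDistantCells}, resp.\ non-stretchedness) to cut $G$ into polynomially many pieces supported on $\OO(k)\times\OO(k)$ cells. The paper implements the cutting more directly than you do: it simply enumerates, for every $(p,q)\in[t]\times[t']$, the induced subgraph $G_{p,q}$ on the $2k\times 2k$ window of cells anchored at $(p,q)$, and the confined solution lies wholly inside $G_{i_{\min},j_{\min}}$; no vertex deletion or shift averaging is needed.

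Two steps of your shifting implementation are not right as written. First, the claim that deleting the single residue class of columns $i\equiv a\pmod{2k}$ disconnects the strips is false: by Condition~\ref{condition:GridClique2} of Definition~\ref{def:GridClique}, edges may join cells whose column indices differ by $2$, so vertices in columns $a-1$ and $a+1$ can be adjacent across a deleted column; one must delete two consecutive columns to separate (this is precisely why Lemma~\ref{lem:exactcyclepw} removes blocks of two consecutive columns). Fortunately this claim is not actually needed: once $a^\star$ is chosen so that the solution's column interval avoids the residue class, that interval lies inside a single gap between deleted columns, which is all you use. Second, the averaging over shifts modulo $2k$ is off by one in the \probKCycle{} branch: there the witnessing cycle may have far more than $k$ vertices, and non-stretchedness only bounds $i_{\max}-i_{\min}<2k$, so the occupied columns can fill an interval of $2k$ consecutive integers, which meets every residue class modulo $2k$; in that boundary case no shift $a$ avoids the solution. (For \probSI{} the solution occupies at most $k$ distinct columns with consecutive gaps at most $2$, forcing $i_{\max}-i_{\min}\leq 2k-2$, so your argument is fine there.) The fix is trivial --- shift modulo $2k+1$, widen the blocks, or adopt the paper's window enumeration, whose windows span a full $2k$ columns and rows and therefore accommodate the extreme spread --- but as stated the Longest Cycle case of your averaging step can fail.
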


\begin{proof}
First, suppose that there exists a cell $(i,j)\in[t]\times[t']$ such that $|f^{-1}(i,j)|\geq k$, then by Definition~\ref{def:GridClique}, $G[f^{-1}(i,j)]$ is a clique on at least $k$ vertices. In particular, the pattern $H$ is a subgraph of $G[f^{-1}(i,j)]$, and therefore it is also a subgraph of $G$. Thus, in this case, we conclude the proof by setting $\cal I$ to be the set that contains only one instance, $(K_{k},\widehat{f}: V(K_k)\rightarrow [1]\times[1],H,k)$. From now on, suppose that for all cells $(i,j)\in[t]\times[t']$, it holds that $|f^{-1}(i,j)|<k$.

Second, in case the input instance $I$ is of \probKCycle, we use the computation given by Lemma \ref{lem:cycDistantCells} to determine whether $I$ is a stretched  instance. If the answer is positive, then by Lemma \ref{lem:cycDistantCells}, it holds that $I$ is a \Yes-instance. In this case, we  again conclude the proof by setting $\cal I$ to be the set that contains only one instance, $(K_{k},\widehat{f}: V(K_k)\rightarrow [1]\times[1],H,k)$. From now on, also suppose that if the input instance $I$ is of \probKCycle, then it is not stretched .

Now, our kernelization algorithm works as follows. For every $(p,q)\in [t]\times[t']$, it computes
$$\displaystyle{G_{p,q}=G[\bigcup_{\substack{p\leq i<\min\{p+2k,t+1\}\\ q\leq j<\min\{q+2k,t'+1\}}}f^{-1}(i,j)]}.$$
Accordingly, it computes $f_{p,q}: V(G_{p,q})\rightarrow [\min\{2k,t\}]\times[\min\{2k,t'\}]$ as follows. For every $v\in V(G_{p,q})$, compute $f_{p,q}(v)=(i-p+1,j-q+1)$ where $(i,j)=f(v)$. Note that for all $(i,j)\in [\min\{2k,t\}]\times[\min\{2k,t'\}]$, it holds that $f^{-1}_{p,q}(i,j)=f^{-1}(i+p-1,j+q-1)$. Thus, since $f$ is a representation of $G$, it holds that $f_{p,q}$ is a representation of $G_{p,q}$. Finally, our kernelization algorithm outputs ${\cal I}=\{I_{p,q}=(G_{p,q},f_{p,q},H,k): (p,q)\in [t]\times[t']\}$.

To conclude the proof, it remains to show that $(G,f,H,k)$ is a \Yes-instance if and only if at least one instance in ${\cal I}$ is a \Yes-instance. Since for all $(G_{p,q},f_{p,q},H,k)\in{\cal I}$, it holds that $G_{p,q}$ is an induced subgraph of $G$, we have that if $(G,f,H,k)$ is a \No-instance, then every instance in ${\cal I}$ is \No-instance as well. Next, suppose that $(G,f,H,k)$ is a \Yes-instance. Let us consider two cases.

\begin{itemize}
\item $(G,f,H,k)$ is an instance of \probSI. Then, let $H'$ be a subgraph of $G$ that is isomorhpic to $H$. Denote $i_{\min}=\min\{i\in [t]: (\bigcup_{j\in[t']}f^{-1}(i,j))\cap V(H')\neq\emptyset\}$, $i_{\max}=\max\{i\in [t]: (\bigcup_{j\in[t']}f^{-1}(i,j))\cap V(H')\neq\emptyset\}$, $j_{\min}=\min\{j\in [t']: (\bigcup_{i\in[t]}f^{-1}(i,j))\cap V(H')\neq\emptyset\}$ and $j_{\max}=\max\{j\in [t']: (\bigcup_{i\in[t]}f^{-1}(i,j))\cap V(H')\neq\emptyset\}$. By Lemma \ref{lem:patternDistantCells}, it holds that both $i_{\max}-i_{\min}<2k$ and $j_{\max}-j_{\min}<2k$. Therefore, $H'$ is a subgraph of $G_{i_{\min},j_{\min}}$, which implies that $I_{p,q}$ is a \Yes-instance.

\item $(G,f,H,k)$ is an instance of \probKCycle. Then, let $C$ be a subgraph of $G$ that is a cycle on at least $k$ vertices. Denote $i_{\min}=\min\{i\in [t]: (\bigcup_{j\in[t']}f^{-1}(i,j))\cap V(C)\neq\emptyset\}$, $i_{\max}=\max\{i\in [t]: (\bigcup_{j\in[t']}f^{-1}(i,j))\cap V(C)\neq\emptyset\}$, $j_{\min}=\min\{j\in [t']: (\bigcup_{i\in[t]}f^{-1}(i,j))\cap V(C)\neq\emptyset\}$ and $j_{\max}=\max\{j\in [t']: (\bigcup_{i\in[t]}f^{-1}(i,j))\cap V(C)\neq\emptyset\}$. Since $(G,f,H,k)$ is not stretched, it holds that both $i_{\max}-i_{\min}<2k$ and $j_{\max}-j_{\min}<2k$. Therefore, $C$ is a subgraph of $G_{i_{\min},j_{\min}}$, which implies that $I_{p,q}$ is a \Yes-instance.\qedhere
\end{itemize}
\end{proof}

Towards proving Theorem \ref{thm:turingKerIS}, we extract a claim that is reused in Sections \ref{sec:exactCyc} and \ref{sec:longCyc}.

\begin{corollary}\label{cor:secPhaseToGrid}
Let $(G,O,H,k)$ be an instance of \probSI (\probKCycle) on unit disk/square graphs. Then, in polynomial time, one can output a set ${\cal I}$ of instances of \probSI (resp.~\probKCycle) on \gcgraphs{} such that $(G,O,H,k)$ is a \Yes-instance if and only if at least one instance in ${\cal I}$ is a \Yes-instance, and for all $(\widehat{G},\widehat{f},\widehat{H},\widehat{k})\in{\cal I}$, it holds that $\widehat{G}$ is either an induced subgraph of $G$ or $K_{\widehat{k}}$, $\widehat{t},\widehat{t}'=\OO(\widehat{k})$, $|\widehat{f}^{-1}(i,j)|<\widehat{k}$ for any cell $(i,j)\in[\widehat{t}]\times[\widehat{t}']$, $\widehat{H}=H$ and $\widehat{k}=k$.
\end{corollary}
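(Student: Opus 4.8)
The statement to prove is Corollary~\ref{cor:secPhaseToGrid}, which merely composes two results already proved in the excerpt. The plan is to chain Corollary~\ref{cor:firstPhaseToGrid} with Lemma~\ref{lem:turingKernISGrid}. First I would apply Corollary~\ref{cor:firstPhaseToGrid} to the input instance $(G,O,H,k)$ (resp.\ $(G,O,k)$) on unit disk/square graphs; this runs in polynomial time and produces a representation $f$ so that $(G,f,H,k)$ is an equivalent instance of \probSI{} (resp.\ \probKCycle) on \gcgraphs. Note that this step does not change $G$, $H$, or $k$; it only equips $G$ with a clique-grid representation.

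Next I would feed the instance $(G,f,H,k)$ into Lemma~\ref{lem:turingKernISGrid}. That lemma, again in polynomial time, outputs a set $\cal I$ of \gcgraph{} instances $(\widehat G,\widehat f\colon V(\widehat G)\to[\widehat t]\times[\widehat t'],\widehat H,\widehat k)$ such that $(G,f,H,k)$ is a \Yes-instance iff some instance in $\cal I$ is, with each $\widehat G$ an induced subgraph of $G$ or equal to $K_{\widehat k}$, $\widehat t,\widehat t'\le 2k=\OO(\widehat k)$, $|\widehat f^{-1}(i,j)|<\widehat k$ for every cell, $\widehat H=H$ and $\widehat k=k$. Since \Yes-instance equivalence is transitive, $(G,O,H,k)$ is a \Yes-instance iff at least one instance in $\cal I$ is, and every promised size bound is inherited verbatim from Lemma~\ref{lem:turingKernISGrid}. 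Finally, the total running time is polynomial, being the sum of two polynomial-time procedures. The handling of \probKCycle{} as the $H=\emptyset$ special case is exactly the convention set up at the start of Section~\ref{sec:turingKer}, so the argument applies uniformly.

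There is essentially no obstacle here: the corollary is a bookkeeping composition of two earlier results. The only point requiring a sentence of care is that Corollary~\ref{cor:firstPhaseToGrid} is stated for \probSI{} and \probKCycle{} but not literally phrased with the $H=\emptyset$ bundling; I would remark that the \probKCycle{} case of Corollary~\ref{cor:firstPhaseToGrid} is precisely the $H=\emptyset$ case once we adopt the Section~\ref{sec:turingKer} convention, so that a single application of Lemma~\ref{lem:turingKernISGrid} covers both problems. If one wanted to be maximally explicit, one could instead observe that Lemma~\ref{lem:unitDisk} and Lemma~\ref{lem:unitSquare} already give the representation directly, so the chain is \textbf{(i)} compute $f$ via Lemma~\ref{lem:unitDisk}/\ref{lem:unitSquare}, \textbf{(ii)} apply Lemma~\ref{lem:turingKernISGrid}; either route yields the corollary immediately.
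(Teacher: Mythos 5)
Your proposal is correct and matches the paper's own proof exactly: the paper likewise obtains a representation via Corollary~\ref{cor:firstPhaseToGrid} and then applies Lemma~\ref{lem:turingKernISGrid} to the resulting clique-grid instance, with the \probKCycle{} case handled by the $H=\emptyset$ convention of Section~\ref{sec:turingKer}. No gaps to report.
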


\begin{proof}
First, by Corollary \ref{cor:firstPhaseToGrid}, we obtain (in polynomial time) an instance $(G,f,H,k)$ of \probSI (\probKCycle) on \gcgraphs{} that is equivalent to $(G,O,H,k)$. Then, by Lemma~\ref{lem:turingKernISGrid} with $(G,f,H,k)$, we obtain (in polynomial time) the desired set $\cal I$.
\end{proof}

We are now ready to prove Theorem \ref{thm:turingKerIS}.

\subparagraph*{Proof of Theorem \ref{thm:turingKerIS}.}
First, by Corollary \ref{cor:secPhaseToGrid}, we obtain (in polynomial time) a set ${\cal I}$ of instances of \probSI (\probKCycle) on \gcgraphs{} such that $(G,O,H,k)$ is a \Yes-instance if and only if at least one instance in ${\cal I}$ is a \Yes-instance, and for all $(\widehat{G},\widehat{f},\widehat{H},\widehat{k})\in{\cal I}$, it holds that $\widehat{G}$ is either an induced subgraph of $G$ or $K_{\widehat{k}}$, $\widehat{t},\widehat{t}'=\OO(\widehat{k}^2)$, $|\widehat{f}^{-1}(i,j)|\leq \widehat{k}$ for a cell $(i,j)\in[\widehat{t}]\times[\widehat{t}']$, $\widehat{H}=H$ and $\widehat{k}=k$. Thus, to conclude our proof, it is sufficient to show that for any instance $(\widehat{G},\widehat{f},\widehat{H},\widehat{k})\in{\cal I}$, we can compute (in polynomial time) an equivalent instance $(G',O',H',k')$ of \probSI (\probKCycle) on unit disk/square graphs such that $|O'|=\OO(k'^3)$, $H'=H$ and $k'=k$. To this end, fix some instance $(\widehat{G},\widehat{f},\widehat{H},\widehat{k})\in{\cal I}$. Let us first handle the simple case where $\widehat{G}$ is equal to $K_{\widehat{k}}$. Here, we conclude the proof by defining $p'_i=(0,i/k)$ for $i\in[k]$, and then setting $O'=\{p'_i: i\in[k]\}$.

Now, suppose that $\widehat{G}$ is an induced subgraph of $G$. Then, we have that $V(\widehat{G})\subseteq O$, and the unit disk/square graph of $V(\widehat{G})$ is exactly $\widehat{G}$. Thus, we define $(G',O',H',k')=(\widehat{G},V(\widehat{G}),\widehat{H},\widehat{k})$. It remains to show that $|V(\widehat{G})|=\OO(k^3)$,$|E(\widehat{G})|=\OO(k^4)$.

By Definition \ref{def:GridClique}, it holds that
$$|V(\widehat{G})|=\displaystyle{\sum_{(i,j)\in[\widehat{t}]\times[\widehat{t}']}|f^{-1}(i,j)|}.$$
Thus, since $\widehat{t},\widehat{t}'=\OO(k)$ and $|\widehat{f}^{-1}(i,j)|\leq k$ for $(i,j)\in[\widehat{t}]\times[\widehat{t}']$, we have that $|V(\widehat{G})|=\OO(k^3)$.

Now, denote $X=\{((i,j),(i',j')): i,i'\in[t], j,j'\in[t'], |i-i'|\leq 2, |j-j'|\leq 2\}$. Since $\widehat{t},\widehat{t}'=\OO(k)$, we have that $|X|=\OO(k^2)$. By Definition \ref{def:GridClique}, it also holds that
$$|E(\widehat{G})|\leq \displaystyle{\sum_{((i,j),(i',j'))\in X}|f^{-1}(i,j)|\cdot|f^{-1}(i',j')|}.$$
Thus, since $|X|=\OO(k^2)$, and $|\widehat{f}^{-1}(i,j)|\leq k$ for $(i,j)\in[\widehat{t}]\times[\widehat{t}']$, we have that $|E(\widehat{G})|=\OO(k^4)$.\qed


\section{Exact $k$-Cycle}\label{sec:exactCyc}

In this section we prove the following theorem. 
\begin{theorem}
\label{thm:exactcycle}
\probKexactCycle{}  on unit disk/square graphs  
can be solved  in $2^{\OO(\sqrt k\log k)} \cdot n^{\OO(1)}$ time. 
\end{theorem}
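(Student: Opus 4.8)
The plan is to reduce \probKexactCycle on unit disk/square graphs to a bounded-treewidth instance in a way that controls how often a $k$-cycle can "cross" between bags, exactly as the introduction sketches, and then run a careful dynamic program. First I would apply Corollary~\ref{cor:secPhaseToGrid} to $(G,O,k)$ (with $H$ the empty graph) to obtain, in polynomial time, a family $\mathcal{I}$ of equivalent instances on \gcgraphs{} where either the graph is $K_k$ (a trivial \Yes{} if $k\geq 3$, and we are done) or it is an induced subgraph of $G$ with $\widehat t,\widehat t'=\OO(k)$ and every cell containing fewer than $k$ vertices. So from now on I work with a single \gcgraph{} instance $(\widehat G,\widehat f,k)$ with a grid of side $\OO(k)$ and cell sizes $<k$. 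Here, however, there is a subtlety special to \probKexactCycle that is absent from \probKCycle: a shortest "long route" argument (the stretched-instance trick of Lemma~\ref{lem:cycDistantCells}) does not apply, since we cannot freely shorten a cycle whose length must be exactly $k$. This is why a second idea is needed.

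The second idea is the "double layering" of Baker's technique mentioned in the introduction. An exact $k$-cycle $C$, being connected, spans a set of columns whose indices form an interval of length $\OO(k)$ and likewise for rows (this is the same reasoning as in Lemma~\ref{lem:patternDistantCells}: consecutive occupied column indices differ by at most $2$, and $C$ has only $k$ vertices). But unlike the pattern case, I do not know $C$'s location; the Baker-style argument handles this by shifting. Concretely, I would partition the columns into $\OO(\sqrt k)$-wide strips using a shift parameter, and similarly the rows, so that for some choice of the two shifts the cycle $C$ lies within a single $\OO(\sqrt k)\times\OO(\sqrt k)$ block of cells — wait, $C$ itself already has column- and row-span $\OO(k)$, so instead the layering must be used differently: I use one layer of shifts to cut the grid into vertical strips of width $\OO(\sqrt k)$ and a second, orthogonal layer to cut into horizontal strips, and argue that there is a choice of shifts under which the \emph{boundary cells} that $C$ uses — the cells through which $C$ passes between adjacent strips — number only $\OO(\sqrt k)$ in total, by an averaging argument over the $\OO(\sqrt k)$ possible shift offsets (each vertex of $C$ is "charged" to a boundary for only $\OO(1)$ of the offsets, so some offset is charged $\OO(k/\sqrt k)=\OO(\sqrt k)$ times). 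Iterating this gives a \NCTD{$\OO(\sqrt k)$}-like decomposition of $\widehat G$ where adjacent bags are separated by $\OO(\sqrt k)$ cells, hence by $\OO(k^{1.5})$ vertices, but — crucially — $C$ crosses each such separator only $\OO(\sqrt k)$ times.

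With such a decomposition in hand, the third step is the problem-specific dynamic program over the tree decomposition of $\widehat G$ obtained by "uncontracting" the cell-graph decomposition from Corollary~\ref{cor:mainNCTD} with $\ell=\OO(\sqrt k)$. A standard DP for \probKexactCycle over a tree decomposition of width $w$ would track, for each bag, a partial matching on the $\leq w$ boundary vertices recording which pairs are joined by path-segments of the partial solution together with the total number of vertices used so far; this costs $w^{\OO(w)}$, which is too large since $w=k^{\OO(1)}$. The key observation, which must be established here, is that because $C$ crosses every separator at most $\OO(\sqrt k)$ times, we only ever need to consider partial solutions in which at most $\OO(\sqrt k)$ boundary vertices are "active"; thus the number of relevant states per bag is $\binom{w}{\OO(\sqrt k)}\cdot (\OO(\sqrt k))^{\OO(\sqrt k)} \cdot k = \omega^{\OO(\sqrt k)}=2^{\OO(\sqrt k \log k)}$, where $\omega=k^{\OO(1)}$ bounds the bag size. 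Running the DP over each of the polynomially many instances in $\mathcal{I}$ and over the $\OO(\sqrt k)^2$ shift choices gives total running time $2^{\OO(\sqrt k\log k)}\cdot n^{\OO(1)}$.

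The main obstacle I expect is making the "double layering" argument rigorous: one must show that the two orthogonal families of shifts can be chosen simultaneously so that $C$'s total crossing number across \emph{all} separators of the resulting decomposition is $\OO(\sqrt k)$, and then that this bound is actually what the DP can exploit — i.e., that "few crossings" translates into "few active boundary vertices at every node of the tree decomposition," not merely on average. Handling the interaction between the cell-graph treewidth bound (Corollary~\ref{cor:mainNCTD}, which needs the absence of a large grid minor) and the Baker strips, and bookkeeping the exact-length constraint through join nodes, are the delicate points; everything else is routine tree-decomposition dynamic programming.
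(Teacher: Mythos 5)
Your first step (the Turing-kernel reduction to clique-grid instances with an $\OO(k)\times\OO(k)$ grid and cells of size $<k$) matches the paper, but the core of your argument has a genuine gap: the decomposition in your second step does not exist as claimed. Cutting both dimensions into $\OO(\sqrt k)$-wide strips gives blocks whose adjacency graph is a $\Theta(\sqrt k)\times\Theta(\sqrt k)$ grid, so any tree decomposition respecting these blocks has bags containing $\Omega(\sqrt k)$ blocks, i.e.\ $\Omega(k)$ cells --- nothing like an \NCTD{$\OO(\sqrt k)$}. Your fallback, Corollary~\ref{cor:mainNCTD}, is unusable for \probKexactCycle{}: its other outcome, a large grid minor, does not certify a cycle of length \emph{exactly} $k$, which is precisely why the paper never invokes the grid-minor/treewidth win--win in Section~\ref{sec:exactCyc}. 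The paper resolves this differently: after the kernel it uses only \emph{one} layer of Baker strips, on the columns, combined with guessing. By pigeonhole some residue class of column-pairs contains at most $\sqrt k$ vertices of the solution cycle; the algorithm guesses those vertices ($k^{\OO(\sqrt k)}$ choices), deletes all other vertices of that class, and the remaining graph falls apart into strips of $2\sqrt k$ columns; a row sweep inside each strip (three consecutive rows, $6\sqrt k$ cells) with the guessed set $Y$ added to every bag yields a genuine \NCPD{$7\sqrt k$} (Lemma~\ref{lem:exactcyclepw}). The ``double layering'' mentioned in the introduction refers to the kernel's windowing plus this single column layer, not to two orthogonal strip families.

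The second gap concerns the crossing bound and the DP itself. Your averaging argument bounds the crossings only in total and only for a lucky shift, whereas the dynamic program needs a bound at \emph{every} separator and transitions it can actually compute. The paper instead proves a problem-specific rerouting lemma (Lemma~\ref{lemm:path:shortinteraction}): because cells are cliques, an exact $k$-cycle can be rerouted without changing its vertex set (hence its length) so that at most five of its edges run between any fixed pair of cells; since every bag is a union of $\OO(\sqrt k)$ cells and each cell interacts with at most $24$ others, this yields the $\OO(\sqrt k)$ interface bound deterministically at every bag. Moreover, the paper's introduce step adds a single cell --- a clique --- at a time, so the possible traces of the solution inside the introduced piece can be enumerated in $k^{\OO(1)}$ time (Claim~\ref{claim:cycle_clique_enu}). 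In your scheme the introduce step would add an entire $\OO(\sqrt k)\times\OO(\sqrt k)$ block (or a whole strip), inside which deciding whether prescribed attachment points can be linked by vertex-disjoint paths of prescribed total length is essentially the original problem again; this is not routine tree-decomposition bookkeeping, and avoiding it is exactly what the cell-by-cell \NCPD{$7\sqrt k$} together with the rerouting lemma is designed for.
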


Towards proving Theorem~\ref{thm:exactcycle}, 
we design an algorithm which given a \gcgraph{} $G$ along with its representation 
$f: V(G)\rightarrow [t]\times[t']$ and an integer $k$ as input, runs in time $2^{\OO(\sqrt k\log k)}\vert V(G)\vert^{\OO(1)}$ and 
decides whether $G$ has a cycle of length $k$ or not. In Section~\ref{sec:turingKer} (Lemma~\ref{lem:turingKernISGrid}) we have seen that \probSI admits a polynomial sized Turing kernel on \gcgraph{s}. Hence to give an algorithm of  running time  
$2^{\OO(\sqrt k\log k)}\vert V(G)\vert^{\OO(1)}$, we can restrict to instances of size bounded 
by polynomial in $k$.
More precisely, because of Lemma~~\ref{lem:turingKernISGrid}, we can assume that  the input to our algorithm is $(G,f: V(G)\rightarrow [t]\times [t'],k)$ 
where $G$ is a \gcgraph{} with a representation $f$, $\vert f^{-1}{(i,j)}\vert < k$ for all $(i,j)\in [t]\times [t']$  
and $t,t'\leq 2k$. Without loss of generality we can assume that  $f$ is a function from $V(G)$ to $[2k]\times [2k]$, 
because $[t]\times [t'] \subseteq [2k]\times [2k]$. 

Given an instance  $(G,f: V(G)\rightarrow [2k]\times [2k],k)$, the algorithm applies a method inspired by Baker's technique~\cite{Baker94}  and obtains a family,  $\mathscr F$, of  $2^{\OO(\sqrt k \log k)}$ instances of \probKexactCycle. The family $\mathscr F$ has following properties. 
\begin{enumerate}
\item  In each of these instances the input graph is an induced subgraph of $G$ and has size $k^{\OO(1)}$. 
\item  The input $(G,f: V(G)\rightarrow [2k]\times [2k],k)$ is a \Yes-instance if and only if  there exists an instance 
$(H,f^*: V(H)\rightarrow [2k]\times [2k],k) \in \mathscr F$ which is a  \Yes-instance. 
\item  More over, for any instance $(H,f^*: V(H)\rightarrow [2k]\times [2k],k) \in \mathscr F$,  
$H$ has a \NCPDlong{$7\sqrt k$} (\NCPD{$7\sqrt k$}). 
\end{enumerate}
We will call the family $\mathscr F$ satisfying the above properties as  {\em good family}. 
%
%
Let $(H,f^*: V(H)\rightarrow [2k]\times [2k],k)$ be an instance of  $\mathscr F$. 
Let $\PP=(X_1,\ldots,X_q)$ be a  \NCPD{$7\sqrt k$} of $H$. 
We first prove that if there is a cycle of length $k$ in $H$, then there is a cycle $C$ of length $k$ 
in $H$ such that for any two distinct cells $(i,j)$ and $(i',j')$ of $f$, the number of edges with one end point in 
$(i,j)$ and other in $(i',j')$ is at most $4$. 
Let $C$ be such a cycle in $H$. 
Then using the property of $C$ 
we get the following important property.   
\begin{framed}
\noindent 
For any $i\in [q]$, the number of edges of $V(C)$ with one end point in $X_i$ and other in 
$\bigcup_{i<j\leq q} X_j$ is upper bounded by $\OO(\sqrt k)$.      
\end{framed}

The above mentioned property allows us to design a dynamic programming (DP) algorithm 
over 
\NCPD{$7\sqrt k$},  
$\PP$, for \probKexactCycle{} in time $2^{\OO(\sqrt k\log k)}$. 
Now we are ready to give formal details about the algorithm. As explained before,  
we assume that the number of vertices in the input graph is bounded by $k^{\OO(1)}$. 

\begin{lemma}
\label{lem:exactcyclepw}
Let $(G,f: V(G)\rightarrow [2k]\times [2k],k)$ be an instance of \probKexactCycle, where 
$G$ is a \gcgraph{} with representation $f$, 
$\vert f^{-1}{(i,j)}\vert < k$ for all $(i,j)\in [2k]\times [2k]$ 
and $\vert V(G)\vert= k^{\OO(1)}$.   Given $(G,f: V(G)\rightarrow [2k]\times [2k],k)$, there is an algorithm running  in time $2^{\OO(\sqrt k\log k)}$ that outputs a good family  $\mathscr F$. 
\end{lemma}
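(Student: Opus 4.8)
The goal is to construct, in $2^{\OO(\sqrt k \log k)}$ time, a good family $\mathscr F$ of induced subgraphs of $G$, each of size $k^{\OO(1)}$, each admitting a \NCPD{$7\sqrt k$}, and such that $G$ has a $k$-cycle iff some member of $\mathscr F$ does. The plan is to combine a ``double layering'' of Baker's technique with Corollary~\ref{cor:mainNCTD} applied with $\ell = \OO(\sqrt k)$.

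First I would observe that if $G$ contains a clique on $\OO(\sqrt k)$ many whole cells forming a $c\sqrt k \times c\sqrt k$ grid minor (for the constant $c = 100\cdot 599^3$ from Corollary~\ref{cor:mainNCTD} scaled so that $\ell/c = 7\sqrt k$ gives the desired clique-path width), then because a $c'\sqrt k \times c'\sqrt k$ grid minor of a \gcgraph{} that is a unit disk/square graph yields a $k$-cycle for a suitable constant $c'$ (a grid of that side length contains a cycle through $\geq k$ vertices, and on the exact-cycle problem one can then extract a closed walk of exactly $k$ vertices inside it — this is where a careful argument is needed, see below), we are done: output a single trivial \Yes-instance, e.g.\ $K_k$ with the all-ones representation, which trivially has a \NCPD{$1$}. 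Otherwise, by Corollary~\ref{cor:mainNCTD} with $\ell = 7\sqrt k / 5$ (so $5\ell = 7\sqrt k$), we may compute a \NCTD{$7\sqrt k$} $\TT = (T,\beta)$ of $G$ in time $2^{\OO(\sqrt k)}\cdot n^{\OO(1)}$. But a tree decomposition is not yet a \emph{path} decomposition; this is the crux of why Baker's technique enters.

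The main work is to pass from the clique-\emph{tree} decomposition to a family of clique-\emph{path} decompositions. Here I would use the layering structure of the grid $[2k]\times[2k]$ itself rather than the cell graph's tree decomposition directly. Slice the grid of columns $\{1,\dots,2k\}$ into consecutive blocks of width $\Theta(\sqrt k)$; there are $\OO(\sqrt k)$ blocks, and $\OO(\sqrt k)$ choices of ``offset'' for where to start the slicing. For a fixed offset, deleting the cells in every $(\sqrt k)$-th column-block breaks $G$ into pieces each spanning $\OO(\sqrt k)$ consecutive columns; doing the analogous thing for rows with a second, independent offset breaks each piece into a union of $\OO(\sqrt k)\times\OO(\sqrt k)$ ``windows''. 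A $k$-cycle $C$, having $\OO(\sqrt k)$ extent is too strong to assume — instead the double-layering argument: for the right pair of offsets $C$ avoids all deleted cells, OR $C$ crosses few of the deleted strips, and in the surviving induced subgraph (a band of $\OO(\sqrt k)$ consecutive columns, of which there are $\OO(k)$, glued along a path) one has a natural clique-path decomposition of clique-width $7\sqrt k$: sweep column-blocks left to right, the bag at step $i$ being the union of cells in column-blocks $i-1, i, i+1$ intersected with the band, which is $\OO(\sqrt k)$ cells. Formally $\mathscr F$ is indexed by the $\OO(\sqrt k)\cdot\OO(\sqrt k) = \OO(k)$ offset pairs times the $\OO(k)$ choices of which band the cycle lives in, plus the single trivial instance; each graph $G_{p,q}$ (as in Lemma~\ref{lem:turingKernISGrid}, with window parameter $\Theta(\sqrt k)$ rather than $2k$) has $\OO(\sqrt k \cdot \sqrt k \cdot k) = \OO(k^2)$ vertices, and building its \NCPD{$7\sqrt k$} is the column-sweep just described, done in polynomial time.

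The correctness direction that needs care is: if $G$ has a $k$-cycle $C$, some member of $\mathscr F$ has one. If $C$ fits inside a single $\OO(\sqrt k)\times\OO(\sqrt k)$ window for some offset pair, that window's instance works. If not, $C$ must be ``long'' in one coordinate; but then — and this is exactly the place the trivial-\Yes{} branch pays off — $C$ already witnesses an $\ell$-stretched structure with $\ell$ large, and via the \gcgraph{} connectivity constraint (Condition~\ref{condition:GridClique2}, as used in Lemma~\ref{lem:patternDistantCells}) a long cycle touches $\Omega(\sqrt k)$ many column-blocks, forcing the grid-minor branch to have fired. I expect \textbf{the main obstacle} to be this dovetailing: showing that for \probKexactCycle{} (unlike \probKCycle, where a longer cycle is only ``better''), whenever no single window contains $C$ we can still certify a \Yes-answer — one must either find a $k$-cycle inside a large grid minor (extracting \emph{exactly} $k$ vertices from a grid, e.g.\ by a boustrophedon path of tunable length plus closing edge), or argue by a counting/averaging over the $\OO(\sqrt k)$ offsets in each coordinate that some offset pair is missed by $C$, so that $C$ survives into $G_{p,q}$. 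Getting the two layerings and the offset counts to interact so that at least one of ``small window captures $C$'' or ``grid minor present'' always holds is the delicate part; everything else (the size bounds, the explicit clique-path decomposition of a band, the polynomial running time per instance, the $2^{\OO(\sqrt k)}$ count of instances) is routine bookkeeping.
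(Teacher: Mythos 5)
There is a genuine gap, and it sits exactly where you flagged the ``delicate part.'' Your dichotomy --- for some offset pair either the $k$-cycle $C$ fits inside a single $\OO(\sqrt k)\times\OO(\sqrt k)$ window, or a large grid minor is present --- is false. A cycle on $k$ vertices may legitimately spread over $\Theta(k)$ columns and rows of cells (the only a priori bound, via Lemma~\ref{lem:patternDistantCells}, is $2k$ in each direction), so it need not fit in any $\OO(\sqrt k)\times\OO(\sqrt k)$ window under any offset; and it can meet \emph{every} column class for \emph{every} offset, since a connected subgraph, unlike an independent set in Baker's original setting, cannot simply be assumed to dodge a deleted layer. Moreover, such a $G$ can be essentially just that cycle (a unit disk graph can be a long induced cycle), which has no large grid minor, so neither branch of your dichotomy fires. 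The second, related problem is that even when a large grid minor is present, this does not certify a \Yes-answer for \probKexactCycle: a grid \emph{minor} has uncontrolled branch sets, so it yields long cycles but not a cycle on exactly $k$ vertices; the paper explicitly notes that bidimensionality-type win conditions fail for this problem, and its proof of Lemma~\ref{lem:exactcyclepw} uses no grid-minor branch and no invocation of Corollary~\ref{cor:mainNCTD} at all.

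The missing idea, which is how the paper closes the gap, is to \emph{guess the cycle's intersection with the deleted layer} rather than hope the cycle avoids it. The paper uses a single (column-only) layering: pair up columns into $k$ blocks, label block $i$ with $i \bmod \sqrt k$, and by pigeonhole some label $\ell$ carries at most $\sqrt k$ vertices of $V(C)$. One then guesses this set $Y$ (at most $k^{\OO(\sqrt k)}=2^{\OO(\sqrt k\log k)}$ choices, using $|V(G)|=k^{\OO(1)}$), deletes all other vertices in the label-$\ell$ columns, and keeps $Y$; the cycle survives intact in $G\setminus S$. The resulting graph splits, apart from $Y$, into bands of $\OO(\sqrt k)$ consecutive columns, and each band has an explicit path decomposition obtained by sweeping \emph{rows}: each bag is three consecutive rows restricted to the band, i.e.\ $6\sqrt k$ cells, and adding $Y$ (at most $\sqrt k$ more cells) to every bag gives the \NCPD{$7\sqrt k$}. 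Note also that your proposed bags (three column-blocks of width $\Theta(\sqrt k)$) would contain $\Theta(\sqrt k)\cdot 2k$ cells unless the row extent is simultaneously restricted, which is another reason the row-sweep inside a narrow column band is the right construction; no second (row) layering and no treewidth machinery are needed.
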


\begin{proof}
Let $C$ be a $k$ length cycle in $G$. 
First we define a column of the  $2k\times 2k$ grid. For any $j\in [2k]$ 
the set of cells $\{(i,j)~:~i\in [2k]\}$ is called a column. There are $2k$ columns 
for the  $2k\times 2k$ grid. 
We partition $2k$ columns of the  $2k\times 2k$ grid with $k$ blocks 
of two consecutive columns and label them from the set of labels $[\sqrt k]$. 
That is, each pair of columns $2i-1$ and $2i$, where $i\in [k]$ is labelled 
with $i \mod \sqrt k$. In other words both column $2i-1$ and $2i$ are labelled 
with $i \mod \sqrt k$. 
Then by pigeon hole principle 
there is a label $\ell \in \{1,2,\ldots, \sqrt k\}$ such that the number of vertices from $V(C)$ which are in 
columns labelled $\ell$ is at most $\sqrt k$. As $\vert V(G)\vert\leq k^{\OO(1)}$,  
the number of vertices of $G$ in columns labelled $\ell$ is at most  $k^{\OO(1)}$. We guess 
the vertices of $V(C)$ which are in the columns labelled $\ell$. The number of potential 
guesses is bounded by $k^{\OO(\sqrt k)}$. Let $Y$ be the set of guessed vertices of $V(C)$ 
which are in the columns labelled by $\ell$. Notice that $\vert Y \vert \leq \sqrt k$.   Then 
we delete all the vertices in columns labelled $\ell$, except the vertices of $Y$. Let $S$ be the 
set of deleted vertices. By the property $2$ of \gcgraph, $G\setminus (S\cup Y)$ is a disjoint union 
of   \gcgraph{s} each of which  is represented by a function with at most $2\sqrt k$ columns. 
That is, let 
$G_1=G[\bigcup_{j=1}^{2(\ell-1)}f^{-1}(*,j)]$, and 
$G_{i+1}=G[\bigcup_{j=i\cdot 2\ell+1}^{\min \{ i\cdot 2\ell+2\sqrt k,2k\}}f^{-1}(*,j)]$ for all $i\in \{1,\ldots, \lceil \sqrt k\rceil \}$. 
Notice that $G_i$ is \gcgraph{} with representation $f_i : V(G_i)\rightarrow [2k]\times [2\sqrt k]$ defined as, 
$f_i(u)=(r,j)$, when $f(u)=(r,(i-1)2\ell+j)$.  By the property $2$ of \gcgraph, $G\setminus (S\cup Y)=G_1\uplus\ldots \uplus G_{\lceil \sqrt k\rceil  +1}$. 

\begin{claim}
\label{lemm:pwfewcells}
$G\setminus S$ has a 
\NCPDlong{$7\sqrt k$} (\NCPD{$7\sqrt k$}). 
\end{claim}
\begin{proof}
First, for each  $i\in \{1,\ldots,\lceil \sqrt k\rceil  +1\}$, we define a path decomposition of $G_i$  
such that each bag is a union of at most $6\sqrt k$ many 
cells of $f_i$.   As $G\setminus(S\cup Y)=G_1\uplus\ldots \uplus G_{\lceil \sqrt k\rceil  -1}$, and $\vert Y \vert \leq \sqrt k$, 
by adding $Y$ to each bag of all path decompositions we can get a required 
\NCPDlong{$7\sqrt k$} for 
$G\setminus S$.

Now, for each $G_i$, we define a path decomposition $\PP_i=(X_{i,1}, X_{i,2},\ldots X_{i, 2k-2})$ where 
$X_{i,j}=f_i^{-1}(j,*)\cup f_i^{-1}(j+1,*)\cup f_i^{-1}(j+2,*)$. We claim that $\PP_i$ is indeed a path decomposition of 
$G_i$. Notice that  $\bigcup_{j=1}^{k-1} X_{i,j}=f_i^{-1}(*,*)=V(G_i)$. By property $2$ of \gcgraph, 
we have that for each edge $\{u,v\}\in E(G)$, there exists $j\in [2k-2]$ such that $\{u,v\}\in X_{i,j}$. For each $u\in V(G)$, $u$ 
is contained in at most three bags and these bags are consecutive in the sequence $(X_{i,1}, X_{i,2},\ldots X_{i, 2k-2})$. 
Hence $\PP_i$ is a path decomposition of $G_i$. Since $X_{i,j}=f_i^{-1}(j,*)\cup f_i^{-1}(j+1,*)\cup f_i^{-1}(j+2,*)$, number of columns 
in $f_i$ is at most $2\sqrt k$ and each cell of $f_i$ is a cell of $f$, each $X_{i,j}$ is a union of $6 \sqrt k$ many 
cells of $f$. 
%
%
Since $G\setminus (S\cup Y)=G_1\uplus\ldots \uplus G_{\lceil \sqrt k\rceil  +1}$,   
the sequence  
$\PP'=(X_{1,1},\ldots X_{1, 2k-2},X_{2,1},\ldots X_{2, 2k-2}, \ldots, X_{\lceil \sqrt k\rceil  -1,1},\ldots X_{\lceil \sqrt k\rceil  -1, 2k-2})$ is a 
path decomposition of $G\setminus (S\cup Y)$. 
More over, the vertices of each bag is a union of vertices from at most $6\sqrt k$  cells of $f$. 
Also, since $\vert Y \vert \leq \sqrt k$, 
the sequence $\PP=(X_{1,1}\cup Y ,\ldots X_{1, k-2}\cup Y,\ldots X_{\lceil \sqrt k\rceil  -1, k-2}\cup Y)$ 
obtained by adding $Y$ to each bag of $\PP'$ we get a  path decomposition of $G\setminus S$. 
More over, the vertices of each bag in $\PP$ is a union of vertices from at most $7\sqrt k$  cells of $f$. 
We can turn the path decomposition $\PP$ to a \NCPD{$7\sqrt k$} 
by an algorithm similar to the one mentioned in
 Proposition~\ref{prop:nice}.
\end{proof}
Our algorithm will construct a family $\mathscr F$ as follows. For each $\ell \in \{1,\ldots, \lceil \sqrt k\rceil\}$ and 
for two subsets of vertices $S$ and $Y$  such 
that $S\cup Y$ is a set of vertices in the columns labelled $\ell$ and 
$\vert Y \vert \leq \sqrt k$, our algorithm will include an instance 
$(G\setminus S, f|_{V(G)\setminus S},k)$ in $\mathscr F$. The number of choices of $S$ and $Y$ is bounded 
by $2^{\OO(\sqrt k\log k)}$ and thus the size of $\mathscr F$ is bounded by $2^{\OO(\sqrt k\log k)}$. 

We claim that $\mathscr F$ is indeed a good family. Suppose there is a cycle $C$ of length $k$ in $G$. 
Then, by pigeon hole principle there is $\ell \in \{1,\ldots, \lceil \sqrt k\rceil\}$ such that at most $\sqrt{k}$ 
vertices from $V(C)$ are in the columns labelled by $\ell$. Let $S'$ be the set of vertices in the columns labelled by $\ell$. 
 Let  $Y=S'\cap V(C)$ and $S=S'\setminus Y$. Notice that $\vert Y \vert \leq \sqrt k$.  The instance $(G\setminus S, f|_{V(G)\setminus S},k)$ 
in  $\mathscr F$, is a \Yes\ instance.  
This completes the proof of the lemma.
\end{proof}

%
%
%
%
%
%
%
%
%
%
%
%

Now we can assume that we are solving \probKexactCycle{} on $(H,f,k)$, where 
$(H,f,k)\in \mathscr F$ 
(here we rename the function $f|_{V(H)}$ with $f$ for ease of presentation).  
Now we prove that if there is a cycle of length $k$ in $H$, then there is a cycle $C$ of length $k$ 
in $H$ such that for any two cells $(i,j)$ and $(i',j')$ of $f$, the number of edges of $E(C)$ with one end point in 
$(i,j)$ and other $(i',j')$ is at most $5$.   

\begin{lemma}
\label{lemm:path:shortinteraction}
Let $(H,f: V(H)\rightarrow [2k]\times [2k],k)$ be a \Yes{} instance of \probKexactCycle.  Then 
there is a cycle $C$ of length $k$ in $H$ such that for any two distinct cells $(i,j)$ and $(i',j')$ of $f$, 
the number of edges of $E(C)$ with one end point in $(i,j)$ and other $(i',j')$ is at most $5$.
\end{lemma}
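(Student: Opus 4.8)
The plan is to take a shortest cycle $C$ of length exactly $k$ among all $k$-cycles in $H$ that, subject to that, minimizes some secondary potential — but in fact a cleaner route is an exchange argument that directly removes ``excess'' parallel edges between a fixed pair of cells. Fix two distinct cells $(i,j)$ and $(i',j')$, and suppose $C$ uses at least six edges $e_1,\dots,e_6$ each with one endpoint in $f^{-1}(i,j)$ and the other in $f^{-1}(i',j')$. Orient $C$ cyclically; each such edge is traversed in one of the two directions (from the $(i,j)$-side to the $(i',j')$-side, or vice versa). The key observation is that $f^{-1}(i,j)$ and $f^{-1}(i',j')$ are each \emph{cliques} (Condition~\ref{condition:GridClique1} of Definition~\ref{def:GridClique}), and moreover every vertex of $f^{-1}(i,j)$ is adjacent to every vertex of $f^{-1}(i',j')$ that $C$ actually uses — wait, that is not automatic, so instead I would argue via rerouting along the cycle itself.

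First I would set up the following reduction step. Cut $C$ at the edges $e_1,\dots,e_6$; this decomposes $C$ into six arcs $A_1,\dots,A_6$ (in cyclic order), where each $A_s$ is a path whose endpoints lie in $f^{-1}(i,j)\cup f^{-1}(i',j')$. Contract each cell to a single ``supernode'' conceptually: the cyclic sequence of sides visited, read off at $e_1,\dots,e_6$, is a closed walk on the two supernodes that uses $6$ edges. Since a closed walk on two vertices using $6$ edges must revisit a configuration, there exist two indices $s<s'$ such that the arc $A_s$ starts at the $(i,j)$-side and ends at the $(i',j')$-side and $A_{s'}$ does the same (or both do the reverse); more carefully, among six consecutive crossings there are three crossings in the same direction, and between two same-direction crossings the portion of $C$ strictly between them together with one crossing edge forms a closed sub-walk. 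The goal of the rerouting is: replace a long stretch of $C$ between two well-chosen crossing points by a single edge inside one of the cliques $f^{-1}(i,j)$ or $f^{-1}(i',j')$ (legal, since the cell is a clique and both endpoints lie in it), thereby \emph{shortening} $C$; then re-extend it elsewhere to restore length exactly $k$. Concretely, pick two crossing edges $e_s=\{a_s,b_s\}$ and $e_{s'}=\{a_{s'},b_{s'}\}$ with $a_s,a_{s'}\in f^{-1}(i,j)$ and $b_s,b_{s'}\in f^{-1}(i',j')$; the cycle $C$ splits into two arcs between these edges, and at least one of these arcs, say the one from $b_s$ to $b_{s'}$ not through $a_s,a_{s'}$, has length $\ge 2$; replace that arc by the single edge $\{b_s,b_{s'}\}\in E(H)$ (available because $f^{-1}(i',j')$ is a clique). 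This yields a shorter closed walk through $a_s$; a symmetric move is available on the other side.

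The subtlety — and what I expect to be the main obstacle — is \emph{restoring the length to exactly $k$} after shortening, while not increasing the number of crossings between \emph{this} pair $(i,j),(i',j')$ and without creating an uncontrolled number of crossings between \emph{other} pairs of cells (otherwise one cannot iterate over all pairs). The standard fix: do not insist on a single cycle step but instead choose, among all $k$-cycles in $H$, one minimizing $\sum_{(i,j)\ne(i',j')} \max(0, m_{(i,j),(i',j')}-5)$ or — simpler and what I would actually use — minimize the total number of edges of $C$ that lie \emph{inside} a single cell (i.e., both endpoints in the same $f^{-1}(i,j)$), call it $\Phi(C)$, breaking ties by something. If some pair has $\ge 6$ crossing edges, the rerouting above replaces a sub-path of length $\ge 2$ by an in-cell edge, strictly decreasing the cycle length; to get back to length $k$ we must re-insert vertices, and here we use that $|f^{-1}(i,j)|<k$ so cliques are ``small'' and a vertex swapped out can be swapped back along an in-cell edge — but the cleanest statement is obtained by \emph{not} returning to a cycle of length $k$ at intermediate stages and instead proving the contrapositive: if every $k$-cycle had some pair with $\ge 6$ crossings, take one with $C$ of \emph{minimum total length over all closed walks of length $\ge k$ visiting $k$ distinct vertices}, derive a shorter such walk, contradiction. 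I would therefore restructure the argument around a minimal counterexample: let $C$ be a $k$-cycle in $H$ minimizing the number of edges, and among those minimizing the lexicographically-ordered multiset of crossing-counts; the clique-rerouting produces either a shorter cycle or a $k$-cycle with a strictly smaller crossing-count multiset, contradicting minimality. Verifying that the rerouting genuinely decreases the potential and keeps the object a genuine cycle (no repeated vertices) on exactly $k$ vertices is the technical heart; everything else — the pigeonhole that $6$ crossings force a repeated direction, the fact that cells are cliques so the shortcut edge exists — is routine.
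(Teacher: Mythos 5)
Your proposal contains the right ingredients (cells are cliques, a pigeonhole argument showing that among six crossing edges two are traversed ``in the same direction''), but the core exchange move is the wrong one, and this is a genuine gap rather than a presentational issue. You replace the arc of $C$ between two crossing points by a single intra-cell edge, which \emph{shortens} the cycle; since the problem is to find a cycle of length \emph{exactly} $k$, you then have to restore the length, and none of your patches achieves this. Minimizing the number of intra-cell edges $\Phi(C)$ points the potential in the wrong direction (with $|E(C)|=k$ fixed this is the same as \emph{maximizing} the number of inter-cell edges); ``a $k$-cycle minimizing the number of edges'' is vacuous since every $k$-cycle has exactly $k$ edges; and deriving ``a shorter cycle'' contradicts nothing, because the minimality you assume is only over cycles of length exactly $k$. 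So the contrapositive/minimal-counterexample restructuring at the end does not close the argument.

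The missing idea is a \emph{length-preserving} two-edge swap. Take a $k$-cycle $C$ minimizing the number of edges of $E(C)$ whose endpoints lie in different cells. If some pair of cells is joined by at least six edges of $C$, observe that these edges (having maximum degree $2$ inside $C$) contain a matching of size three, and by pigeonhole two matching edges $\{u_r,v_r\},\{u_{r'},v_{r'}\}$ have $u_r,u_{r'}\in f^{-1}(i,j)$ and $v_r,v_{r'}\in f^{-1}(i',j')$. Writing $C=[u_rv_r]Q_1[u_{r'}v_{r'}]Q_2$, delete these two crossing edges and insert the edges $\{u_r,u_{r'}\}$ and $\{v_r,v_{r'}\}$ (both exist because each cell is a clique), reversing the arc $Q_1$. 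The resulting $C'$ is a cycle on the \emph{same vertex set}, hence still of length exactly $k$ --- no re-insertion of vertices is ever needed --- and it has two fewer inter-cell edges, contradicting the choice of $C$. This uncrossing step, rather than shortcutting an arc, is what makes the induction on the potential go through; without it your argument does not yield the lemma.
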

\begin{proof}
Let $C$ be a $k$ length cycle such that the number edges of $E(C)$ whose end points are in different cells is minimized. 
We claim that   for any two disjoint cells $(i,j)$ and $(i',j')$, the number of edges of $E(C)$ with one end point in 
$(i,j)$ and other $(i',j')$ is at most $4$. Suppose not. Then there exist $(i,j)$ and $(i',j')$ such that 
the number of edges of $E(C)$ with one end point in $(i,j)$ and other in $(i',j')$ is at least $6$. 
Let $C=P_1 [u_1v_1] P_2  [u_2v_2] P_3 [u_3v_3] P_4 [u_4v_4] P_5 [u_5v_5] P_6 [u_6v_6]$  
where for each $\{u_r,v_r\},r\in [6]$, one end point is in the cell $(i,j)$ and other in the cell $(i'j')$, and  each subpath $P_\ell, \ell \in [6]$, can be empty too. Since  $C$ is a cycle, at least 3 edges from $\{\{u_r,v_r\}~:~i\in [6]\}$ form a matching.  Let 
$\{u_{r_1},v_{r_1}\},\{u_{r_2},v_{r_2}\}$ and $\{u_{r_3},v_{r_3}\}$ be a matching of size $3$, where $\{r_1,r_2,r_3\}\subseteq [6]$. 
Then, by pigeon hole principle there exist $r,r' \in \{r_1,r_2,r_3\}$ such that either $u_r,u_{r'} \in f^{-1}{(i,j)}$ or 
$u_r,u_{r'} \in f^{-1}{(i',j')}$. Without loss of generality assume that  $u_r,u_{r'} \in f^{-1}{(i,j)}$ (otherwise 
we rename cell $(i,j)$ with $(i',j')$ and vice versa).  That is, $C=[u_rv_r]Q_1[u_{r'}v_{r'}]Q_2$ 
such that $u_r,u_{r'}\in f^{-1}{(i,j)}$ and $v_r,v_{r'}\in f^{-1}{(i',j')}$.  
Then, since $f^{-1}{(i,j)}$ and $f^{-1}{(i',j')}$ are cliques,   
$C'=[u_ru_{r'}] \overleftarrow{Q}_1 [v_rv_{r'}] Q_2$  is a $k$ length cycle in $G$,  such that 
the number edges of $E(C')$ whose end points are in different cells is less than that of $E(C)$, 
which is contradiction to our assumption.    
See Fig.~\ref{figure_cycle_replacement} for  an illustration of $C$ and $C'$. This completes 
the proof of the lemma. 
\end{proof}

\begin{figure}
        \centering

\begin{tikzpicture}[ scale=1]

\draw (0,0) ellipse (0.5 cm and 1.5cm); 
\draw (2,0) ellipse (0.5 cm and 1.5cm);
\node  (a) at (0,0.5) {$\bullet$};
\node  (a) at (0,-0.5) {$\bullet$};
\node  (a) at (0,0.75) {$u_r$};
\node  (a) at (0,-0.75) {$u_{r'}$};
\node  (a) at (2,0.5) {$\bullet$};
\node  (a) at (2,-0.5) {$\bullet$};
\node  (a) at (2,0.75) {$v_r$};
\node  (a) at (2,-0.75) {$v_{r'}$};

\draw[line width=0.35mm,red] (0,0.5)--(2,0.5);
\draw[line width=0.35mm,red] (0,-0.5)--(2,-0.5);
\draw[thick]  plot [smooth] coordinates {(2,0.5) (4,0) (4,-1) (2,-2) (-1,-2) (-1,-0.5)(0,-0.5)};
\draw[thick, dotted]  plot [smooth] coordinates {(2,-0.5) (3,0) (4,1) (2,2) (-1,2) (-1,0.5)(0,0.5)};

\draw (7,0) ellipse (0.5 cm and 1.5cm); 
\draw (9,0) ellipse (0.5 cm and 1.5cm);
\draw (0,0) ellipse (0.5 cm and 1.5cm); 
\draw (2,0) ellipse (0.5 cm and 1.5cm);

\node  (a) at (7,0.5) {$\bullet$};
\node  (a) at (7,-0.5) {$\bullet$};
\node  (a) at (9,0.5) {$\bullet$};
\node  (a) at (9,-0.5) {$\bullet$};
\node  (a) at (7,0.75) {$u_r$};
\node  (a) at (7,-0.75) {$u_{r'}$};
\node  (a) at (9,0.75) {$v_r$};
\node  (a) at (9,-0.75) {$v_{r'}$};

\draw[line width=0.35mm,green] (7,0.5)--(7,-0.5);
\draw[line width=0.35mm,green] (9,0.5)--(9,-0.5);
\draw[thick]  plot [smooth] coordinates {(9,0.5) (11,0) (11,-1) (9,-2) (6,-2) (6,-0.5)(7,-0.5)};
\draw[thick, dotted]  plot [smooth] coordinates {(9,-0.5) (10,0) (11,1) (9,2) (6,2) (6,0.5)(7,0.5)};

%
%

\end{tikzpicture}
\caption{Illustration of Lemma~\ref{lemm:path:shortinteraction}.
Figure on the left is the cycle $C=[u_rv_r]Q_1[u_{r'}v_{r'}]Q_2$ 
and the  one on the right is the cycle 
$C'=[u_ru_{r'}] \protect\overleftarrow{Q}_1 [v_rv_{r'}] Q_2$}
\label{figure_cycle_replacement}
\end{figure}
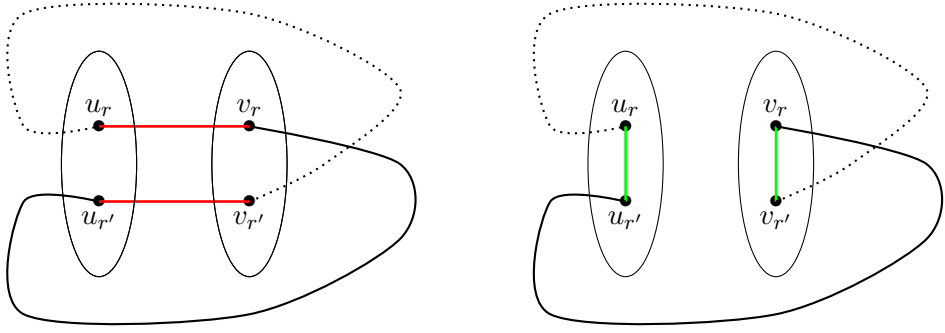



Next we design a DP algorithm that  finds a cycle of length $k$, if it exists, satisfying properties of  Lemma~\ref{lemm:path:shortinteraction}.

\begin{lemma}
\label{lem:exact_k_cycle_restricted}
Let $(H,f: V(H)\rightarrow [2k]\times [2k], k)\in \mathscr F$ be an instance of \probKexactCycle\ 
and  $\PP$ be a \NCPD{$7\sqrt k$} of $H$. 
 Then, given  $(H,f: V(H)\rightarrow [2k]\times [2k], k)$ and $\PP$, there is an algorithm ${\cal A}$ which 
runs in time  $2^{\OO(\sqrt k\log k)}$, and outputs \Yes, if there is 
a cycle $C$ in $H$ such that for any two distinct cells $(i,j)$ and $(i',j')$ of $f$, the number of edges with one end point in 
$(i,j)$ and other $(i',j')$ is at most $5$. Otherwise algorithm ${\cal A}$ will output \No. 
\end{lemma}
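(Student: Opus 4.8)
The plan is to run a dynamic programming algorithm over the \NCPD{$7\sqrt k$} $\PP = (X_1, \ldots, X_q)$ of $H$, processing the bags left to right, while maintaining partial solutions that will eventually close up into a single cycle of length exactly $k$ in which every pair of cells is crossed by at most $5$ edges. The crucial point — already flagged in the framed observation — is that since each bag $X_i$ is a union of at most $7\sqrt k$ cells, and since the solution cycle $C$ we seek crosses each pair of cells at most $5$ times, the number of edges of $C$ with one endpoint in $X_i$ and the other in $\bigcup_{j > i} X_j$ is $\OO(\sqrt k)$: each of the at most $7\sqrt k$ cells in $X_i$ interacts (by Condition~\ref{condition:GridClique2} of Definition~\ref{def:GridClique}) with only $\OO(1)$ other cells, and each such cell-pair contributes at most $5$ cut edges, so the cut has size $\OO(\sqrt k)$. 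This bounded-cut property is what makes the number of DP states subexponential.

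First I would set up the state. For a prefix $X_1, \ldots, X_i$, a partial solution is a collection of vertex-disjoint paths (the ``fragments'' of $C$ seen so far) whose internal vertices all lie in $\bigcup_{j \le i} X_j$ and whose endpoints lie in $X_i$ (together with, possibly, a number of already-completed vertices). The state records: (i) which vertices of the current bag $X_i$ are used and how they are matched up as endpoints of path fragments — equivalently, a partial matching on the at most $\OO(\sqrt k)$ ``active'' vertices in the cut, so $\OO(\sqrt k)^{\OO(\sqrt k)} = 2^{\OO(\sqrt k \log k)}$ possibilities; (ii) the number of vertices used so far, an integer in $[0,k]$; and (iii) for bookkeeping of the cell-crossing constraint, a counter for each relevant cell-pair recording how many of the at most $5$ permitted crossing edges have already been spent — there are $\OO(\sqrt k)$ such pairs straddling the cut, each with a counter in $\{0,\ldots,5\}$, giving another $6^{\OO(\sqrt k)} = 2^{\OO(\sqrt k)}$ factor. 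Since $X_i$ itself has $k^{\OO(1)}$ vertices but only $\OO(\sqrt k)$ of them are ``active'' (incident to cut edges going right) at any time, and since $|f^{-1}(i,j)| < k$ makes each bag polynomial in size, the total number of states is $2^{\OO(\sqrt k \log k)} \cdot k^{\OO(1)}$.

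Next I would describe the transitions along the three non-leaf node types of the \NCPD{$7\sqrt k$}. At an Introduce node one adds a cell $f^{-1}(i,j)$: we guess which of its (at most $k$) vertices participate in $C$, which edges among them and to already-present vertices are used (respecting, for each cell-pair, the global budget of $5$), update the matching on fragment endpoints accordingly, and increment the vertex counter; since the number of participating vertices from a single newly introduced cell that can be endpoints of cut-crossing fragments is $\OO(1)$ per adjacent cell, the branching here is $k^{\OO(1)} \cdot 2^{\OO(\sqrt k)}$. At a Forget node one drops a cell $f^{-1}(i,j)$: we verify that no fragment still has an endpoint in that cell that will need to reach rightward (every such endpoint must already be matched into the fragment structure or the state is discarded), and we ``finalize'' any fragment both of whose endpoints have now been forgotten — but since we want a \emph{single} cycle, we only permit such a closure at the very last step. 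At a Join node (present only if we allow trees, but here $T$ is a path so Join does not occur in an \NCPD) — so in fact only Leaf, Introduce, Forget arise, and the DP is a straightforward left-to-right sweep. At the end, we accept iff some state at the root has exactly $k$ used vertices and a single fragment whose two endpoints can be joined by one final edge within the appropriate cell to close the cycle.

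The correctness argument has two directions. Soundness: any accepting run of the DP assembles, by construction, a set of vertex-disjoint paths that close into one cycle on exactly $k$ vertices with every cell-pair crossed $\le 5$ times. Completeness: if $H$ has a cycle $C$ with the cell-crossing property of Lemma~\ref{lemm:path:shortinteraction}, then reading $C$ against the path decomposition, its restriction to each prefix is exactly a union of fragments with endpoints in the current bag (by the third axiom of a path decomposition), the cut size is $\OO(\sqrt k)$ by the framed observation, and the cell-pair counters never exceed $5$; hence the corresponding sequence of states is a valid DP path, so the DP outputs \Yes. For running time, multiplying the $2^{\OO(\sqrt k \log k)}$ state count by the $k^{\OO(1)} \cdot 2^{\OO(\sqrt k \log k)}$ transition cost (dominated by enumerating active-vertex matchings after an Introduce) gives $2^{\OO(\sqrt k \log k)}$ overall, as $|V(H)| = k^{\OO(1)}$.

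\textbf{Main obstacle.} The delicate part is the precise definition of the DP state so that (a) it is genuinely of size $2^{\OO(\sqrt k \log k)}$ — this requires arguing that although a bag contains $k^{\OO(1)}$ vertices, only $\OO(\sqrt k)$ of them can ever be fragment endpoints that matter for future transitions, which in turn leans on the cell-structure (Condition~\ref{condition:GridClique2}) plus the $\le 5$-crossing guarantee to bound the cut — and (b) it faithfully enforces that the partial solution is \emph{extendible} to a single cycle of length \emph{exactly} $k$, not a union of several cycles nor a shorter/longer one; tracking the "matching of fragment endpoints" plus a single global length counter, and forbidding premature fragment closure, is the standard but fiddly device for this, and one must be careful that the Introduce step correctly accounts for new edges both inside the new cell (a clique, so any matching is realizable) and to the $\OO(1)$ neighboring cells without double-counting against the per-cell-pair budget of $5$.
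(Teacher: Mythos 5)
Your overall skeleton is the same as the paper's: a left-to-right DP over the \NCPD{$7\sqrt k$} whose states record the pairing of the $\OO(\sqrt k)$ ``active'' fragment endpoints in the current bag together with an edge/vertex counter, the $\OO(\sqrt k)$ bound on the cut coming from (at most $7\sqrt k$ cells per bag) times ($\OO(1)$ interacting cells each) times (at most $5$ crossing edges per cell pair). However, there is a genuine gap in how you complete the cycle. You forbid closing a fragment into a cycle before ``the very last step'' and you accept only if a state at the root contains a single open fragment whose two endpoints can then be joined by one final edge. In an \NCPD{$7\sqrt k$} the root bag is empty, and, more importantly, the sought cycle is not spanning: all of its cells may be forgotten long before the last bag. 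Since at every forget node you (correctly) discard states in which an open fragment endpoint is being forgotten, a cycle that lives entirely in early cells admits no accepting run under your rules: it cannot be closed early (you forbid it) and its endpoints cannot survive to the root (they get forgotten), so completeness fails. The paper's device is to allow the closing edge to be added at any introduce step, at which point the state becomes ``no open fragments, exactly $k$ edges used'' (the entries ${\cal A}[\ell,\emptyset,k]$, Equations~\ref{eng:cycle:intro2a} and \ref{eng:cycle:intro2}, and Claim~\ref{claim:claim:cycle:empty}), and to propagate this completed flag to the last node, accepting iff ${\cal A}[q,\emptyset,k]=1$; also note the closing edge need not lie inside a single cell, as your acceptance test assumes.

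A second, smaller imprecision is the introduce transition: ``guess which of the (at most $k$) vertices of the new cell participate in $C$ and which edges among them are used'' is a $2^{\Theta(k)}$ enumeration if taken literally, since the cycle may traverse many vertices of the newly introduced clique. What makes the step cheap is that only the endpoint pairing of the path system inside the new clique, its total edge count, and the $\OO(1)$ connecting edges to previously active vertices matter; the paper enumerates exactly these objects (the families $\widehat{\QQ}_{\ell}(r)$ with at most $120$ endpoints, Claim~\ref{claim:cycle_clique_enu}) in $k^{\OO(1)}$ time, with realizability inside the clique checked by a simple counting condition. You gesture at this (``a clique, so any matching is realizable''), but the transition must be phrased over pairings-plus-counts rather than vertex/edge subsets to get the claimed $2^{\OO(\sqrt k\log k)}$ bound. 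Your additional per-cell-pair crossing counters are unnecessary (the at-most-$5$ bound is only needed in the completeness analysis to bound the cut, as in the paper), though keeping them does no harm.
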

\begin{proof}
Algorithm ${\cal A}$ is a DP algorithm over the 
\NCPD{$7\sqrt k$}
$\PP=(X_1,\ldots X_{q})$ of $H$. For any $\ell\in [q]$, we define 
$H_{\ell}$ be the induced subgraph $H[\bigcup_{i\leq \ell} X_i]$ of $H$. 
%
%
Define ${\cal C}$ to be the set of $k$ length cycles in $H$ such that for any 
$C\in{\cal C}$ and  two disjoint cells $(i,j)$ and $(i',j')$ of $f$, the number of edges of $E(C)$ with one end point in 
$(i,j)$ and other $(i',j')$ is at most $5$.
Let $C\in {\cal C}$. 
Since $\PP$ is a \NCPD{$7\sqrt k$} and the fact 
that for any two distinct cells $(i,j)$ and $(i',j')$ of $f$, the number of edges of $C$ with one end point in 
$(i,j)$ and other $(i',j')$ is at most $5$, we have that for any bag $X_{\ell}$ of $\PP$, the number of 
vertices of $V(C)\cap X_{\ell}$ which has a neighbour in $V(H)\setminus X_{\ell}$ is bounded by 
$\OO(\sqrt k)$. 
This allows us to keep only $2^{\OO(\sqrt k\log k)}$ states in the DP 
algorithm.  Fix any $\ell\in [q]$ and define 
${C}_L$ the set of paths  of $C$ (or the cycle $C$ itself) when we restrict 
$C$ to $H_{\ell}$. That is ${C}_L=H_{\ell}[E({C})]$.  
Let $\widehat{C}_L=\{\{u,v\}~|~ \mbox{there is a $u$-$v$ path in } {C}_L\}$. 
Notice that $\bigcup_{P\in \widehat{C}_L} P$  is the set of vertices of degree $0$ or $1$  
in  ${C}_L$ and  $\bigcup_{P\in \widehat{C}_L} P \subseteq X_{\ell}$. 
Since $X_{\ell}$ is  a union of vertices from at most $7 \sqrt k$ many cells of $f$ and 
for any two distinct cells $(i,j)$ and $(i',j')$ of $f$, the number of edges of $E(C)$ with one end point in 
$(i,j)$ and other $(i',j')$ is at most $5$, 
and by property $2$ of the \gcgraph, 
we have that the cardinality of $\bigcup_{P\in \widehat{C}_L} P$  
is at most $5\cdot 24 \cdot 7 \sqrt k= 840\sqrt k$. In our DP algorithm 
we will have state indexed by $(\ell,\widehat{C}_L,\vert E(C_L) \vert)$  which 
will be set to $1$.   
%
Formally, for any $\ell\in [q]$, $k'\in [k]$ and a family $\ZZ$ of vertex disjoint sets of size at most $2$ of $X_{\ell}$ with 
the property that the cardinality of $\bigcup_{Z\in \ZZ} Z$ is at most  $840\sqrt k$, we will have a DP table 
entry ${\cal A}[\ell, \ZZ, k']$. For each $\ell \in [q]$, we maintain the following correctness invariant. 

\begin{quote}
{\bf Correctness Invariants:}
$(i)$ For every $C\in {\cal C}$, let ${C}_L =H_{\ell}[E(C)]$ and $\widehat{C}_L =$\\
$ \{\{u,v\} ~|~ \mbox{there is a connected component $P$  in ${C}_L$  and $P$ is a $u$-$v$ path as well} \}$. 
Then ${\cal A}[\ell,\widehat{C}_L, \vert E(C_L) \vert]=1$, 
$(ii)$ for any 
family $\ZZ$ of vertex disjoint sets of size at most $2$ of $X_{\ell}$ with  $0<\vert \bigcup_{Z\in \ZZ} Z \vert \leq 840\sqrt k$, 
$k'\in [k]$, and ${\cal A}[\ell,\ZZ,k']=1$, there is a set ${\cal Q}$ of $\vert \ZZ \vert$ vertex disjoint paths in $H_{\ell}$
where the end points of each path are specified by a set in $\ZZ$ and $\vert E(\QQ)\vert=k'$, 
and $(iii)$  If ${\cal A}[\ell,\emptyset ,k]=1$, then there is a cycle of length $k$ 
in $H_{\ell}$. 
\end{quote}

The correctness of the our algorithm will follow from the correctness invariant. 
Before explaining the computation of DP table entries, we first 
define some notations. Fix $\ell\in [q]$.  
For any $C\in {\cal C}$,  
define ${C}_L=H_{\ell}[E(C)]$ and $C_R=H[E(C)\setminus E(C_{L})]$.
For any family $\QQ$ of vertex disjoint paths with end points in $X_{\ell}$, 
define $\widehat{\QQ}=\{\{u,v\} ~|~ \mbox{there is a $u$-$v$ path in } \QQ\}$. 
That is, for any $C\in \CC$, $\widehat{C}_L =\{\{u,v\} ~|~ \mbox{there is a $u$-$v$ path in } {C}_L\}$. 
Now we explain how to compute the values ${\cal A}[\cdot,\cdot,\cdot]$. 
In what follows we explain how to compute   ${\cal A}[\ell,\ZZ,k']$
for every $\ell \in [q]$, $k'\in [k]$ and 
family $\ZZ$ of vertex disjoint sets of size at most $2$ of $X_{\ell}$ with $\vert \bigcup_{Z\in\ZZ} Z\vert \leq 840\sqrt k$, 
the running to compute it from the previously computed DP table entries, and  prove the correctness 
invariants. While proving the correctness invariants for $\ell$, we assume that the correctness 
invariant holds for $\ell-1$.  
When $\ell=1$, $X_1=\emptyset$ and the DP table entries are defined as follows.

\begin{eqnarray*}
{\cal A}[1,\emptyset,k'] =
\left\{ \begin{array}{rl}
1 & \mbox{if }  k'=0\\ 
0 & \mbox{otherwise} 
\end{array}\right.
\end{eqnarray*}
Since $H_1=(\emptyset,\emptyset)$, the correctness invarient follows. The values ${\cal A}[1,*,*]$ can be computed 
in $\OO(1)$ time.  Now we move to the case where $\ell>1$.

\subparagraph*{Case 1:  $X_{\ell}$ is an Introduce  bag.}  That is $X_{\ell}= X_{\ell-1} \cup  f^{-1}(i,j)$ 
for some cell $(i,j)$. 
Fix a family $\ZZ$ of vertex disjoint sets of size at most $2$ of $X_{\ell}$ such that $\vert \bigcup_{Z\in \ZZ} Z\vert$ is at most  $840\sqrt k$ and $k'\in [k]$.  Define $\QQ_{\ell}(r)$ to be the set $Q$ of vertex disjoint paths in $H[X_{\ell}\setminus X_{\ell-1}]=H[f^{-1}(i,j)]$ with at most $120$ end points 
and $\vert E(\QQ)\vert=r$.  
Recall that that   $\widehat{Q}=\{\{u,v\} ~|~ \mbox{there is a $u$-$v$ path in } Q\}$. 
Define $\widehat{\QQ}_{\ell}(r)=\{\widehat{Q}~:~ Q \in {\QQ}_{\ell}(r)\}$. 

\begin{claim}
\label{claim:cycle_clique_enu}
$\vert \widehat{\QQ}_{\ell}(r)\vert = k^{\OO(1)}$ and 
$\widehat{\QQ}_{\ell}(r)$ can be enumerated in time  $k^{\OO(1)}$.  
\end{claim}
\begin{proof}
We know that $H[X_{\ell}\setminus X_{\ell-1}]$ is a clique on $f^{-1}(i,j)$. 
For any family $\WW$ of 
vertex disjoint sets of size at most $2$ of $f^{-1}(i,j)$, one can get  $\vert \WW\vert$ vertex disjoint paths 
with end points being the one specified by the sets in $\WW$ with total number of edges $r$, only if 
either $\vert \{W\in \WW~:~\vert W \vert=2\}\vert = r$ or $1\leq \vert \{W\in \WW~:~\vert W \vert=2\}\vert < r$    
and $\vert \{W\in \WW~:~\vert W \vert=2\}\vert + \vert f^{-1}(i,j) \setminus  \bigcup_{W\in \WW} W\vert \geq r$. 
Hence we can enumerate   $\widehat{\QQ}_{\ell}(r)$ in time  $k^{\OO(1)}$
\end{proof}

For any family ${\cal Y}$ of sets of size at most $2$, define ${\cal Y}'=\{A\in {\cal Y}~:~ \vert A \vert =2\}$. 
For any three families ${\cal Y}_1, {\cal Y}_2$ and ${\cal Y}_3$ of sets of size at most $2$ in $X_{\ell}$, 
we say that ${\cal Y}_1\cup {\cal Y}_2 \cup {\cal Y}_3$ for a family of paths (respectively, a cycle) in $G'=K[X_{\ell}]$, 
if the graph $(\bigcup_{\substack{i\in[3]\\ Y\in {\cal Y}_i}} Y, \bigcup_{i\in [3]} {\cal Y}_i')$ form a family of paths (respectively, a cycle). 
Consider the case when $\ZZ\neq\emptyset$.
\begin{equation}
\mbox{If 
$ {\ZZ}\in  \widehat{\QQ}_{\ell}(k')$, then we set } {\cal A}[\ell,\ZZ,k']=1  \label{eng:cycle:intro1a}
\end{equation}
Otherwise, 
\begin{eqnarray}
{\cal A}[\ell,\ZZ,k'] =\max & \Big\{& {\cal A}[\ell-1,\ZZ',k'']~:~ \ZZ'\neq \emptyset \mbox{ and there exist } r\in {\mathbb N}, \widehat{\QQ}\in  \widehat{\QQ}_{\ell}(r), \nonumber\\
&&  E'\subseteq E\left (\bigcup_{Q\in \widehat{\QQ}} Q, \bigcup_{Z\in \widehat{\ZZ'} }Z\right) 
\mbox{ such that },     
\vert E'\vert \leq 120, k'=k'' +r+\vert E'\vert,  \nonumber \\
&&  \ZZ'\cup  \widehat{\QQ} \cup E'  \mbox{ forms a set ${\RR}$ of paths in } K[X_{\ell}],  
\mbox{ and }  \widehat{\cal R}=\ZZ\Big\}  \label{eng:cycle:intro1}
\end{eqnarray}

Now consider that case when $\ZZ=\emptyset$. 

\begin{equation}
\mbox{If 
${\cal A}[\ell-1,\emptyset,k']=1$, then we set } {\cal A}[\ell,\emptyset,k']=1  \label{eng:cycle:intro2a}
\end{equation}

Otherwise 

\begin{eqnarray}
{\cal A}[\ell,\emptyset,k'] =\max & \Big\{& {\cal A}[\ell-1,\ZZ',k'']~:~\mbox{there exist } r\in {\mathbb N}, \widehat{\QQ}\in  \widehat{\QQ}_{\ell}(r), \nonumber\\
&& E'\subseteq E\left (\bigcup_{Q\in \widehat{\QQ}} Q, \bigcup_{Z\in \widehat{\ZZ'} }Z\right) 
\mbox{ such that }, \vert E'\vert \leq 120,  \nonumber \\
&&   k'=k'' +r+\vert E'\vert, \mbox{ and } \ZZ'\cup  \widehat{\QQ} \cup E'  \mbox{ forms a cycle in } K[X_{\ell}] \Big\}
\label{eng:cycle:intro2}
\end{eqnarray}

Notice that in the above computation (Equations~\ref{eng:cycle:intro1} and \ref{eng:cycle:intro2}) the number of potential choices for $\ZZ'$ is bounded by $2^{\OO(\sqrt k\log k)}$. 
By Claim~\ref{claim:cycle_clique_enu} we know that the cardinality of $ \widehat{\QQ}_{\ell}(r)$  is at most $k^{\OO(1)}$ and 
it can be enumerated in time  $k^{\OO(1)}$. Since $\vert X_{\ell} \vert =k^{\OO(1)}$, the number of choices 
for $E'$ is the above computation is bounded by $k^{\OO(1)}$. This implies that we can compute 
${\cal A}[\ell,\ZZ,k']$ using previously computed DP entries in time $2^{\OO(\sqrt k\log k)}$. 

Before proving the correctness invariant, we state the following simple claim, 
which can be proved using induction of $\ell$. 

\begin{claim}
\label{claim:claim:cycle:empty}
For any $\ell\in [q]$, ${\cal A}[\ell,\emptyset, 0]=1$. 
\end{claim}

Now we prove the correctness invariants. 
Let $C\in {\cal C}$. Recall that  ${C}_L=H_{\ell}[E(C)]$ and 
$\widehat{C}_L =
\{\{u,v\} ~|~ \mbox{there is a connected component $P$  in ${C}_L$  and $P$ is a $u$-$v$ path as well} \}$. 
Partition the edges ${C}_L$ into $E_1\uplus E_2 \uplus E'$ where 
$E_1=E(C_L)\cap E(H_{\ell-1})$, $E_2= E(C_L)\cap E(X_{\ell}\setminus X_{\ell-1})$ 
and $E'=E(C_L)\cap E(X_{\ell-1}, X_{\ell}\setminus X_{\ell-1})$. Let $C_L'= H[E_1]$. That is, 
$C_L'=H_{\ell-1}[E(C)]$. 
Let $\widehat{C}_L' =\{\{u,v\} ~|~ \mbox{there is a $u$-$v$ path in } {C}_L'\}$.
We have two cases based whether $\widehat{C}_L=\emptyset$ or not.  

Suppose $\widehat {C}_L\neq \emptyset$. If $C_L$ is a subgraph of $H[X_{\ell}\setminus X_{\ell-1}]$, 
then $\widehat{C}_L \in \widehat{Q}(\vert E(C_L)\vert)$.  
Then by Equation~\ref{eng:cycle:intro1a}, we have that ${\cal A}[\ell,\widehat{C}_L, \vert E(C_L) \vert]=1$. 
So now we have that $C_L$ is not a subgraph of $H[X_{\ell}\setminus X_{\ell-1}]$. 
This implies that either $E_1\neq \emptyset$ or $E_2\neq \emptyset$. In either case, 
we have that $\widehat{C}_L'\neq \emptyset$.
Since $X_{\ell-1}$ is a union of vertices from $7\sqrt k$ cells, the property $2$ of \gcgraph{} and $C\in \CC$, 
we have that the number of edges with one endpoint $X_{\ell-1}$ and other in $H\setminus X_{\ell-1}$ is 
at most $5\cdot 24 \cdot 7 \sqrt k = 840 \sqrt k$. This implies that 
$\vert \bigcup_{D\in \widehat{C}'_L} D \vert \leq 840 \sqrt k$. 
By the correctness invariant of statement $(i)$ for $\ell-1$, we have that 
$(a)$ ${\cal A}[\ell-1,\widehat{C}'_L, \vert E(C_L') \vert]=1$. 
Consider the graph $H[E_2]$. The graph 
$H[E_2]$ is a collection ${\cal Q}$ of paths in $H[X_{\ell}\setminus X_{\ell-1}]= K[f^{-1}(i,j)]$. 
Since $C\in \CC$, the number of edges of $E(C)$ with one end point in $f^{-1}(i,j)$ and other 
in $X_{\ell-1}$ is at most $4\cdot 25=120$.
This implies that $(b)$ $\vert E'\vert \leq 120$  
and $(c)$ $\widehat{\QQ}\in \widehat{Q}_{\ell} (r)$, where $r=\vert E(\QQ)\vert$. 
By facts $(a)$, $(b)$ and $(c)$, using Equation~\ref{eng:cycle:intro1}, 
we get ${\cal A}[\ell,\widehat{C}_L, \vert E(C_L) \vert]=1$. 

Now consider the case when $\widehat {C}_L= \emptyset$. In this case either 
 $E(C)\cap H_{\ell}=\emptyset$ or $E(C)\subseteq E(H_{\ell})$. If  $E(C)\cap H_{\ell}=\emptyset$, 
then $\vert E(C_L)\vert =0$ and hence ${\cal A}[\ell,\widehat{C}_L, \vert E(C_L) \vert]= {\cal A}[\ell,\emptyset, 0]=1$, 
by Claim~\ref{claim:claim:cycle:empty}. 
Now we have $\widehat {C}_L= \emptyset$ and $E(C)\subseteq E(H_{\ell})$. 
This implies that $\vert E(C_L)\vert =k$. 
If $E(C_L)=E(C_L')$, then ${\cal A}[\ell-1, \widehat{C}'_{L},k] ={\cal A}[\ell-1, \emptyset,k]=1$, 
by the statement $(i)$ of the correctness invariant for $\ell-1$. 
So, now we have $E(C_L)\neq E(C'_L)$. This implies that either $E_2$ or $E'$ is not empty. 
Since $\vert f^{-1}(i,j)\vert <k$ and $C_L=C$ form a cycle and  $E(C_L)\neq E(C'_L)$, 
we have that $\widehat{C}'_L\neq \emptyset$. 
The graph 
$H[E_2]$ is a collection ${\cal Q}$ of paths in $H[X_{\ell}\setminus X_{\ell-1}]= K[f^{-1}(i,j)]$. 
As like before, we can bound 
$(d)$ $\vert \bigcup_{D\in \widehat{C}'_L} D \vert \leq 840 \sqrt k$, 
$(e)$ $\vert E'\vert \leq 120$  
and $(g)$ $\widehat{\QQ}\in \widehat{Q}_{\ell} (r)$, where $r=\vert E(\QQ)\vert$. 
By $(d)$ and the statement $(i)$ of the correctness invariant for $\ell-1$, 
we get $(h)$ ${\cal A}[\ell-1,\widehat{C}'_L, \vert E(C_L') \vert]=1$. 
By facts $(h)$, $(e)$ and $(d)$, using Equation~\ref{eng:cycle:intro2}, 
we get ${\cal A}[\ell,\widehat{C}_L, \vert E(C_L) \vert]={\cal A}[\ell,\emptyset, k]=1$. 
This completes the proof of statement $(i)$.

%
%
%
%
%
%
%
%
%
%
%
%
%
%
%
%
%
%
%
%
%
%

Now we need to prove statement $(ii)$ in the correctness invariants. 
Let $\ZZ$ be a family of vertex disjoint sets of size at most $2$ of $X_{\ell}$ with  $0<\vert \bigcup_{Z\in \ZZ} Z \vert \leq 840\sqrt k$
and $k'\in [k]$. 
Notice that $\ZZ\neq \emptyset$. 
Suppose in the above computation we set ${\cal A}[\ell,\ZZ,k']=1$.  
Either ${\cal A}[\ell,\ZZ,k']$ is set to $1$ because of Equation~\ref{eng:cycle:intro1a} or 
because of Equation~\ref{eng:cycle:intro1}. 
If  ${\cal A}[\ell,\ZZ,k']$ is set to $1$ because of Equation~\ref{eng:cycle:intro1a}, then 
we know that $ {\ZZ}\in  \widehat{\QQ}_{\ell}(k')$. By the definition of 
$\widehat{\QQ}_{\ell}(k')$, we get that there is a set ${\cal R}$ of vertex disjoint paths in $H[X_{\ell}\setminus X_{\ell-1}]$, 
hence in $H[X_{\ell}]$ and $\widehat{ \cal R}=\ZZ$.  
So, now assume that  ${\cal A}[\ell,\ZZ,k']$ is set to $1$ because of Equation~\ref{eng:cycle:intro1}. 
This implies that there exist $k'',r\in {\mathbb N}$,  a family $\ZZ'$ of vertex disjoint sets of size at most $2$ of $X_{\ell-1}$, 
$\widehat{\QQ}\in  \widehat{\QQ}_{\ell}(r)$,
and $E'\subseteq E\left (\bigcup_{Q\in \widehat{\QQ}} Q, \bigcup_{Z\in \widehat{\ZZ'} }Z\right)$ 
such that  ${\cal A}[\ell-1,\ZZ',k'']=1$, $\vert E'\vert \leq 120, k'=k'' +r+\vert E'\vert$,   $\ZZ'\cup  \widehat{\QQ} \cup E'$ 
forms a set of paths ${\cal R}$  in  $K[X_{\ell}]$ with  $ \widehat{\cal R}=\ZZ$.  
Since ${\cal A}[\ell-1,\ZZ',k'']=1$, by the statement $(ii)$ of the correctness invariant for  $\ell-1$, we have that there is 
 a set ${\cal Y}$ of $\vert \ZZ' \vert$ vertex disjoint paths  in $H_{\ell-1}$ 
where the end points of each path are specified by a set in $\ZZ'$ and $\vert E(\YY)\vert=k''$.
Let $\QQ$ be the set of paths in $\QQ_{\ell}(r)$ corresponding to the set $\widehat{\QQ}$.  
Thus by replacing each edge of $\ZZ'$ in  $\ZZ'\cup  \widehat{\QQ} \cup E'$ by the corresponding path in 
$\YY$ and each edge of $ \widehat{\QQ}$ by a corresponding path in $\QQ$, 
 we can get a set $\WW$ of vertex disjoint paths in $H_{\ell}$, because the internal vertices of paths in $\YY$ are disjoint from 
$(X_{\ell} \setminus X_{\ell-1})\cup \bigcup_{Z\in \ZZ} Z $ and the interval vertices of paths in $\QQ$ are disjoint from 
$X_{\ell-1}\cup \bigcup_{Z\in \ZZ} Z$. 
More over,  $ \widehat{\WW}=\ZZ$ 
and $\vert E(\WW)\vert=\vert E(\YY)\vert +\vert E(\QQ)\vert+\vert E'\vert = k''+r+\vert E'\vert=k'$.  This 
completes the proof of statement $(ii)$ in the correctness invariant. 

Now we prove statement $(iii)$ in the correctness invariants. Suppose 
we set $A[\ell,\emptyset,k]=1$. 
Then either ${\cal A}[\ell,\emptyset,k]$ is set to $1$ because of Equation~\ref{eng:cycle:intro2a} or 
because of Equation~\ref{eng:cycle:intro2}.  
If  ${\cal A}[\ell,\emptyset,k]$ is set to $1$ because of Equation~\ref{eng:cycle:intro1a}, then 
we know that ${\cal A}[\ell-1,\emptyset,k]=1$, then by the statement $(iii)$ of the correctness invariant for $\ell-1$, 
we have that there is a cycle of length $k$ in $H_{\ell-1}$ and hence in $H_{\ell}$. 
Suppose ${\cal A}[\ell,\emptyset,k]$ is set to $1$ because of Equation~\ref{eng:cycle:intro2}. 
Then, there exist $k'',r\in {\mathbb N}$,  a family $\ZZ'$ of vertex disjoint sets of size at most $2$ of $X_{\ell-1}$, 
$\widehat{\QQ}\in  \widehat{\QQ}_{\ell}(r)$,
and $E'\subseteq E\left (\bigcup_{Q\in \widehat{\QQ}} Q, \bigcup_{Z\in \widehat{\ZZ'} }Z\right)$ 
such that  ${\cal A}[\ell-1,\ZZ',k'']=1$, $\vert E'\vert \leq 120, k=k'' +r+\vert E'\vert$,   $\ZZ'\cup  \widehat{\QQ} \cup E'$ 
forms a cycle  in  $K[X_{\ell}]$.  
Since ${\cal A}[\ell-1,\ZZ',k'']=1$, by the statement $(ii)$ of the correctness invariant for $\ell-1$, we have that there is 
 a set ${\cal Y}$ of $\vert \ZZ' \vert$ vertex disjoint paths  in $H_{\ell-1}$
 where the end points of each path are specified by a set in $\ZZ'$ and $\vert E(\YY)\vert=k''$.
Let $\QQ$ be the set of paths in $\QQ_{\ell}(r)$ corresponding to the set $\widehat{\QQ}$.  
Thus by replacing each edge of $\ZZ'$ in  $\ZZ'\cup  \widehat{\QQ} \cup E'$ by the corresponding path in 
$\YY$ and each edge of $ \widehat{\QQ}$ by a corresponding path in $\QQ$, 
 we can get a cycle $C$ in $H_{\ell}$, because the internal vertices of paths in $\YY$ are disjoint from $(X_{\ell} \setminus X_{\ell-1})\cup \bigcup_{Z\in \ZZ} Z $ and the interval vertices of paths in $\QQ$ are disjoint from 
$X_{\ell-1}\cup \bigcup_{Z\in \ZZ} Z$. 
More over,  
$\vert E(C)\vert=\vert E(\YY)\vert +\vert E(\QQ)\vert+\vert E'\vert = k''+r+\vert E'\vert=k$.  This 
completes the proof of statement $(iii)$ in the correctness invariant.

\subparagraph*{Case 2: $X_{\ell}$ is a forget bag.} 
Fix a family $\ZZ$ of vertex disjoint sets of size at most $2$ of $X_{\ell}$ such that $\vert \bigcup_{Z\in \ZZ} Z\vert$ is at most  $840\sqrt k$ and $k'\in [k]$.
\begin{equation}
{\cal A}[\ell,\ZZ,k']={\cal A}[\ell-1,\ZZ,k'] \label{eqn:cycle:forget}
\end{equation}
Clearly ${\cal A}[\ell,\ZZ,k']$ can be computed in $\OO(1)$ time using the previously computed DP table entries.

Now we prove the correctness invariant. 
Let $C\in {\cal C}$. 
Recall that  ${C}_L=H_{\ell}[E(C)]$ and $\widehat{C}_L =
\{\{u,v\} ~|~ \mbox{there is a connected component $P$  in ${C}_L$  and $P$ is a $u$-$v$ path as well} \}$. 
By arguments similar to those in Case 1, we get $\vert \bigcup_{D\in  \widehat{C}_L} D\vert \leq 840 \sqrt k$.
Since $H_{\ell}=H_{\ell-1}$, we have that $C_L=H_{\ell-1}[E(C)]$. 
Hence by the correctness invariant for $\ell-1$, we have that 
${\cal A}[\ell-1,\widehat{C}_L, \vert E(C_L) \vert]=1$. 
Hence, by Equation~\ref{eqn:cycle:forget}, ${\cal A}[\ell-1,\widehat{C}_L, \vert E(C_L) \vert]=1$.    

Now we need to prove statement $(ii)$ of the correctness invariants.
Let $\ZZ$ be a family of vertex disjoint sets of size at most $2$ of $X_{\ell}$ with  $0<\vert \bigcup_{Z\in \ZZ} Z \vert \leq 840\sqrt k$
and $k'\in [k]$. 
Suppose in the above computation (Equation~\ref{eqn:cycle:forget}) we set ${\cal A}[\ell,\ZZ,k']=1$.  
This implies that ${\cal A}[\ell-1,\ZZ,k']=1$. 
Since ${\cal A}[\ell-1,\ZZ,k']=1$, by the correctness invariant for $\ell-1$, we have that there is 
 a set ${\cal Y}$ of $\vert \ZZ \vert$ vertex disjoint paths  in $H_{\ell-1}$ 
where the end points of each path are specified by a set in $\ZZ$ and $\vert E(\YY)\vert=k'$.
Since $H_{\ell}=H_{\ell-1}$, $\YY$ is the required set of vertex disjoint paths 
and this  completes the proof of statement $(ii)$ in the correctness invariant. 

Now we prove the statement $(iii)$ in the correctness invariants.
Suppose in the above computation (Equation~\ref{eqn:cycle:forget}) we set ${\cal A}[\ell,\emptyset,k]=1$.  
This implies that ${\cal A}[\ell-1,\emptyset,k]=1$
Since ${\cal A}[\ell-1,\emptyset,k]=1$, by the correctness invariant for $\ell-1$, we have that there is 
 $k$ length cycle in $H_{\ell-1}$, and hence in $H_{\ell}$. 
This completes the proof of correctness invariants.

Algorithm ${\cal A}$ output  \Yes{} if ${\cal A}[q,\emptyset,k]=\emptyset$ and a output \No{} otherwise.  
The correctness of the algorithm follows from the correctness invariants. Now we analyse the total 
running time. Notice that $\vert V(H)\vert = k^{\OO(1)}$ and number of DP table entries is bounded by    
$2^{\OO(\sqrt k\log k)}$. Each DP table entry can be computed in time  $2^{\OO(\sqrt k\log k)}$ using 
the previously stored values in the DP table. Hence the total running time of the 
algorithm is $2^{\OO(\sqrt k\log k)}$.  
\end{proof}

Lemmata~\ref{lem:turingKernISGrid}, \ref{lem:exactcyclepw}, \ref{lemm:path:shortinteraction} and \ref{lem:exact_k_cycle_restricted} implies the following Lemma.   

\begin{lemma}
\label{lem:exactKCycle}
\probKexactCycle{}  on \gcgraph{s}  
can be solved  in time  $2^{\OO(\sqrt k\log k)} n^{\OO(1)}$. 
\end{lemma}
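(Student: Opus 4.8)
The plan is to assemble Lemma~\ref{lem:exactKCycle} directly from the four lemmata already established, treating the input as a \gcgraph{} $(G,f,k)$ (with no geometric assumption needed beyond what these lemmata already use). First I would invoke Lemma~\ref{lem:turingKernISGrid} on $(G,f,k)$ (viewed as the \probKCycle-style triple with empty pattern, or rather the \probKexactCycle{} analogue, which the section treats the same way): in polynomial time this produces a set $\mathscr{I}$ of polynomially many instances, each on a \gcgraph{} $\widehat G$ that is either $K_{\widehat k}$ or an induced subgraph of $G$ with $\widehat t,\widehat t'=\OO(k)$ and $|\widehat f^{-1}(i,j)|<\widehat k$, and the original instance is a \Yes-instance iff some instance in $\mathscr{I}$ is. The instances with $\widehat G=K_{\widehat k}$ are trivially resolved (a $K_k$ has a $k$-cycle iff $k\ge 3$), so we may focus on the remaining instances, each of which has $|V(\widehat G)|=k^{\OO(1)}$ and a representation $\widehat f:V(\widehat G)\to[2k]\times[2k]$ as required by Lemma~\ref{lem:exactcyclepw}.

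Next, for each such bounded-size instance $(\widehat G,\widehat f,k)$ I would run the algorithm of Lemma~\ref{lem:exactcyclepw}, which in time $2^{\OO(\sqrt k\log k)}$ outputs a good family $\mathscr F$ of $2^{\OO(\sqrt k\log k)}$ instances $(H,f^*,k)$, each with $H$ an induced subgraph of $\widehat G$, $|V(H)|=k^{\OO(1)}$, and crucially each $H$ equipped with a \NCPD{$7\sqrt k$}; moreover $(\widehat G,\widehat f,k)$ is a \Yes-instance iff some member of $\mathscr F$ is. Then for each $(H,f^*,k)\in\mathscr F$ I would first appeal to Lemma~\ref{lemm:path:shortinteraction}: if $H$ has a $k$-cycle at all, then it has one $C$ in which, for every pair of distinct cells, at most $5$ edges of $C$ cross between them. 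This is exactly the hypothesis under which the dynamic-programming algorithm $\mathcal A$ of Lemma~\ref{lem:exact_k_cycle_restricted} is guaranteed correct, so running $\mathcal A$ on $(H,f^*,k)$ together with its \NCPD{$7\sqrt k$} decides in time $2^{\OO(\sqrt k\log k)}$ whether such a restricted $k$-cycle exists, and by Lemma~\ref{lemm:path:shortinteraction} that is equivalent to deciding whether $H$ has any $k$-cycle.

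Finally I would chain the equivalences: the input $(G,f,k)$ is a \Yes-instance iff some instance in $\mathscr I$ is, iff (discarding the trivial $K_{\widehat k}$ cases, which are settled outright) for some $(\widehat G,\widehat f,k)\in\mathscr I$ the family $\mathscr F$ it generates contains a \Yes-instance, iff for some such $(H,f^*,k)$ the algorithm $\mathcal A$ returns \Yes. The running-time bookkeeping is routine: Lemma~\ref{lem:turingKernISGrid} contributes polynomially many instances in polynomial time; for each, Lemma~\ref{lem:exactcyclepw} contributes $2^{\OO(\sqrt k\log k)}$ instances in $2^{\OO(\sqrt k\log k)}$ time; and for each of those, Lemma~\ref{lem:exact_k_cycle_restricted} runs in $2^{\OO(\sqrt k\log k)}$ time, so the grand total is $2^{\OO(\sqrt k\log k)}\cdot n^{\OO(1)}$. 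The only point needing a word of care — the part I'd expect to be the mild obstacle — is the bookkeeping that the instances emitted by Lemma~\ref{lem:turingKernISGrid} genuinely satisfy the precise hypotheses of Lemma~\ref{lem:exactcyclepw} (namely $f:V\to[2k]\times[2k]$, cell sizes $<k$, and $|V|=k^{\OO(1)}$), which follows since $[\widehat t]\times[\widehat t']\subseteq[2k]\times[2k]$ and $|V(\widehat G)|=\sum_{(i,j)}|\widehat f^{-1}(i,j)|\le (2k)^2\cdot k=k^{\OO(1)}$; everything else is a direct substitution of one lemma into the next.
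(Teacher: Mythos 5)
Your proposal is correct and follows exactly the route the paper intends: the paper's "proof" of this lemma is literally the observation that Lemmata~\ref{lem:turingKernISGrid}, \ref{lem:exactcyclepw}, \ref{lemm:path:shortinteraction} and \ref{lem:exact_k_cycle_restricted} compose, and your chaining of the Turing kernel, the good-family construction with its \NCPD{$7\sqrt k$}, the $5$-crossing cycle lemma, and the DP, together with the running-time bookkeeping, is precisely that composition.
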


Theorem~\ref{thm:exactcycle} follows from Lemma~\ref{lem:exactKCycle} and Corollary~\ref{cor:firstPhaseToGrid}. 
We can design a similar algorithm to  solve {\probKPath} in time $2^{\OO(\sqrt k\log k)} n^{\OO(1)}$.


\section{Longest Cycle}\label{sec:longCyc}

In this section, we show that \probKCycle admits a subexponential-time parameterized algorithm. More precisely, we prove the following.

\begin{theorem}\label{thm:longestCyc}
\probKCycle on unit disk/square graphs can be solved in time $2^{\OO(\sqrt{k}\log k)}\cdot n^{\OO(1)}$.
\end{theorem}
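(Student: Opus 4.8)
The plan is to follow the same three-phase strategy that was just used for \probKexactCycle, but with two simplifications coming from the fact that we only need a cycle on \emph{at least} $k$ vertices. First I would invoke Corollary~\ref{cor:secPhaseToGrid} to reduce, in polynomial time, the input unit disk/square instance $(G,O,k)$ to a polynomially-bounded set $\cal I$ of \probKCycle{} instances on \gcgraphs{}, each of the form $(\widehat G,\widehat f\colon V(\widehat G)\to[\widehat t]\times[\widehat t'],k)$ with $\widehat t,\widehat t'=\OO(k)$ and every cell satisfying $|\widehat f^{-1}(i,j)|<k$ (and the trivial case $\widehat G=K_k$, which is a \Yes-instance iff $k\le 3$ or, more precisely, iff $K_k$ has a cycle on $\ge k$ vertices, i.e.\ $k\le k$ — always \Yes{} for $k\ge 3$). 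It then suffices to solve each such bounded instance in time $2^{\OO(\sqrt k\log k)}$ and return \Yes{} iff some instance answers \Yes. Since $\widehat t,\widehat t'\le 2k$ we may assume $\widehat f\colon V(\widehat G)\to[2k]\times[2k]$, exactly the setting of Section~\ref{sec:exactCyc}.

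Next I would apply the column-labelling / Baker-style argument of Lemma~\ref{lem:exactcyclepw} verbatim: partition the $2k$ columns into $k$ blocks of two consecutive columns, label block $i$ by $i\bmod\sqrt k$, and observe that any cycle $C$ on at least $k$ vertices — note we may first invoke Lemma~\ref{lem:cycDistantCells} to assume the instance is not stretched, so $C$ spans at most $2k$ consecutive columns, hence we only need one "wrap-free" window — has at most $\sqrt k$ of its vertices in some labelled column class $\ell$. Guessing that vertex set $Y$ (at most $k^{\OO(\sqrt k)}$ choices) and deleting the remaining vertices $S$ of the $\ell$-labelled columns yields, by Condition~\ref{condition:GridClique2}, a disjoint union of \gcgraphs{} each on $\le 2\sqrt k$ columns; adding $Y$ to every bag of the obvious $3$-consecutive-columns path decomposition of each piece produces a \NCPD{$7\sqrt k$} of $\widehat G\setminus S$, exactly as in Claim~\ref{lemm:pwfewcells}. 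This gives a good family $\mathscr F$ of size $2^{\OO(\sqrt k\log k)}$ such that the instance is \Yes{} iff some $(H,f,k)\in\mathscr F$ with its \NCPD{$7\sqrt k$} contains a cycle on $\ge k$ vertices.

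Then I would prove the short-interaction analogue of Lemma~\ref{lemm:path:shortinteraction}: if $H$ has a cycle on at least $k$ vertices, it has one, say $C$, on at least $k$ vertices in which every pair of distinct cells is crossed by at most $5$ edges of $C$. The proof is the same clique-swapping argument — take $C$ with $|V(C)|\ge k$ minimizing the number of bichromatic edges; if six edges join two cells $(i,j),(i',j')$, three of them form a matching, two matched edges $\{u_r,v_r\},\{u_{r'},v_{r'}\}$ have $u_r,u_{r'}$ in the same cell, and rerouting through the clique edges $\{u_r,u_{r'}\}$ and $\{v_r,v_{r'}\}$ gives a cycle $C'$ on the \emph{same} vertex set (hence still $\ge k$ vertices) with fewer bichromatic edges, a contradiction. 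Crucially this operation preserves $|V(C)|$, so "at least $k$" is maintained. Consequently, as in the boxed claim of Section~\ref{sec:exactCyc}, for every bag $X_\ell$ of the \NCPD{$7\sqrt k$} at most $\OO(\sqrt k)$ vertices of $V(C)\cap X_\ell$ have a neighbour in $V(C)$ outside $X_\ell$ (bounded by $5\cdot 24\cdot 7\sqrt k=840\sqrt k$).

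Finally I would run essentially the dynamic program of Lemma~\ref{lem:exact_k_cycle_restricted} over the \NCPD{$7\sqrt k$}, with the only change that the target length is treated as a lower bound: the table entry ${\cal A}[\ell,\ZZ,k']$ records, for $k'\le k$, whether $H_\ell$ contains $|\ZZ|$ vertex-disjoint paths with endpoint pattern $\ZZ$ and \emph{exactly} $k'$ edges (we still track exact edge counts up to $k$, clamping larger values, because once the partial solution already has $\ge k$ edges we can stop extending), and we answer \Yes{} if some ${\cal A}[q,\emptyset,k']=1$ with $k'\ge\min(k,\cdot)$ — more cleanly, we keep the exact counter capped at $k$ and declare success once a closed cycle of $\ge k$ edges is formed. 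The Introduce, Forget and Join transitions, the enumeration of $\widehat\QQ_\ell(r)$ inside the clique $f^{-1}(i,j)$, and the $2^{\OO(\sqrt k\log k)}$ bound on the number of states and on the per-state work are all identical to the \probKexactCycle{} case. Combining the reduction, the good-family construction, the short-interaction lemma and the DP, and then composing with Corollary~\ref{cor:firstPhaseToGrid}, yields the claimed $2^{\OO(\sqrt k\log k)}\cdot n^{\OO(1)}$ running time. The main obstacle is a bookkeeping one rather than a conceptual one: making the ``cap the edge counter at $k$'' modification interact correctly with the correctness invariants — in particular ensuring that a partial path system which has already accumulated $\ge k$ edges is never wrongly discarded and that closing it off still certifies a genuine cycle on $\ge k$ vertices — but this is handled exactly as in the treatment of \probKCycle{} on bounded-treewidth graphs and poses no real difficulty.
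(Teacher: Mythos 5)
There is a genuine gap, and it sits exactly where your proposal leans hardest on the \probKexactCycle{} machinery: the Baker-style column-deletion step (the analogue of Lemma~\ref{lem:exactcyclepw}). That argument pigeonholes the \emph{vertices of the witness cycle} over the $\sqrt k$ column labels, and it only yields a class containing at most $\sqrt k$ cycle vertices because the witness has \emph{exactly} $k$ vertices. For \probKCycle{} the witness cycle may have far more than $k$ vertices --- after the Turing kernel the graph still has $\Theta(k^3)$ vertices, and ruling out stretched instances via Lemma~\ref{lem:cycDistantCells} only bounds the geometric span of the cycle (at most $\OO(k)$ columns), not its vertex count. A single long induced cycle winding through the $\OO(k)\times\OO(k)$ window can place $\omega(\sqrt k)$ of its vertices in \emph{every} label class; then your guess $Y$ is no longer of size $\OO(\sqrt k)$ (so the number of guesses is not $k^{\OO(\sqrt k)}$ and the $7\sqrt k$-NCPD width bound breaks), and deleting the non-guessed vertices of the chosen label class destroys every witness, so the constructed family $\mathscr F$ is not good. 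Note also that you cannot assume a witness of length in $[k,2k]$: a graph whose only long cycle is a single induced cycle of length $\gg 2k$ has no cycle of length in $[k,2k]$ at all, so ``capping the counter at $k$'' in the DP does not rescue the construction. (Your other adaptations are fine: the rerouting in Lemma~\ref{lemm:path:shortinteraction} preserves the vertex set, hence ``at least $k$'', and the $840\sqrt k$ boundary bound is length-independent.)

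This is precisely why the paper takes a different route in Section~\ref{sec:longCyc}. It first checks, via Corollary~\ref{cor:nearKCycle} (the exact algorithm run for each target length in $[k,2k]$), whether a cycle of length between $k$ and $2k$ exists; if not, any witness must have at least $2k$ vertices, and then contracting an edge between two vertices in the same cell preserves the answer (Lemma~\ref{lem:longCycPreserveYes}: a cycle on $\ell\ge 2k$ vertices survives contraction as a cycle on at least $\ell/2\ge k$ vertices), while contraction never creates a \Yes{} from a \No{} (Observation~\ref{obs:longCycPreserveNo}). Iterating ``check $[k,2k]$, then contract a contractible pair'' eventually collapses the graph to $\cell(G)$, which has treewidth $\OO(\sqrt k)$, where \probKCycle{} is solved directly by the standard treewidth DP (Observation~\ref{obs:longCycEndCell}). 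To fix your write-up you would need either this contraction argument or some other device that reduces the ``very long cycle'' case to the bounded-length case; the column-deletion family alone cannot do it.
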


We start by stating a direct implication of Lemma~\ref{lem:exactKCycle}.

\begin{corollary}\label{cor:nearKCycle}
Given a graph $G$, representation $f$ and $k\in\mathbb{N}$, it can be determined in time $2^{\OO(\sqrt{k}\log k)}\cdot n^{\OO(1)}$ whether $G$ contain a cycle whose number of vertices is between $k$ and $2k$.
\end{corollary}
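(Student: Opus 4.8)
The plan is to derive this directly from Lemma~\ref{lem:exactKCycle} by brute force over all admissible cycle lengths. The first observation is the trivial equivalence: $G$ contains a cycle on between $k$ and $2k$ vertices if and only if there is an integer $k'\in\{k,k+1,\ldots,2k\}$ such that $G$ contains a cycle on exactly $k'$ vertices. Accordingly, the algorithm would, for each such $k'$, run the \probKexactCycle{} algorithm of Lemma~\ref{lem:exactKCycle} on the clique-grid instance $(G,f,k')$, and answer \Yes{} if and only if at least one of these $k+1$ calls answers \Yes.

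For the running-time bound, I would note that for every $k'\le 2k$ we have $\sqrt{k'}\le\sqrt{2}\cdot\sqrt{k}$ and $\log k'\le\log(2k)=\OO(\log k)$, so a single call costs $2^{\OO(\sqrt{k'}\log k')}\cdot n^{\OO(1)}=2^{\OO(\sqrt{k}\log k)}\cdot n^{\OO(1)}$. Since there are only $k+1=2^{o(\sqrt{k}\log k)}$ such calls, the overall running time remains $2^{\OO(\sqrt{k}\log k)}\cdot n^{\OO(1)}$.

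Correctness is immediate from the equivalence above together with the correctness guarantee of Lemma~\ref{lem:exactKCycle}. There is essentially no obstacle here; the only point worth keeping in mind is that the same representation $f$ is reused unchanged across all $\OO(k)$ invocations, which is unproblematic since Lemma~\ref{lem:exactKCycle} takes a clique-grid graph together with its representation as part of the input.
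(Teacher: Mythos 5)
Your proposal is correct and matches the paper's own proof: run the \probKexactCycle{} algorithm of Lemma~\ref{lem:exactKCycle} for each target length $k'\in\{k,\ldots,2k\}$ and return \Yes{} if any call succeeds, with correctness and the $2^{\OO(\sqrt{k}\log k)}\cdot n^{\OO(1)}$ bound following directly since the $\OO(k)$ calls each stay within that time. Your extra remarks about $\sqrt{k'}\le\sqrt{2k}$ and reusing the representation $f$ are fine but not needed beyond what the paper states.
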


\begin{proof}
Run the algorithm given by Lemma~\ref{lem:exactKCycle} with $k=\ell,\ell+1,\ldots,2\ell$, and return \Yes{} if and only if at least one of the executions returns \Yes. Correctness and running time follow directly from Lemma~\ref{lem:exactKCycle}.
\end{proof}

Next, we examine the operation that contracts an edge. To this end, we need the following. 

\begin{definition}\label{def:contractiblePair}
A pair $(u,v)$ of distinct vertices $u,v\in V(G)$ is {\em contractible} if $f(u)=f(v)$.
\end{definition}

Note that if $(u,v)$ is a contractible pair, then by Condition \ref{condition:GridClique1} in Definition \ref{def:GridClique}, it holds that $\{u,v\}\in E(G)$. Now, given a contractible pair $(u,v)$, denote $e=\{u,v\}$, and define the function $f_{/e}: V(G/e)\rightarrow[t]\times[t']$ as follows. For all $w\in V(G)\setminus\{v,u\}$, define $f_{/e}(w)=f(w)$. Moreover, define $f_{/e}(x_{\{u,v\}})=f(u)$. By Definitions \ref{def:GridClique} and \ref{def:cellGraph}, we immediately have the following.

\begin{observation}\label{obs:funcFe}
The function $f_{/e}$ is a representation of $G/e$. Furthermore, $G$ and $G/e$ have the same cell graph.
\end{observation}

In particular, we deduce that $(G/e,f_{/e},k)$ is an instance of \probKCycle on \gcgraphs. Next, we note that the operation that contracts an edge preserves the answer \No---the correctness of this claim follows from the fact that $G/e$ is a minor of $G$.

\begin{observation}\label{obs:longCycPreserveNo}
Let $(G,f,k)$ be an instance of \probKCycle on \gcgraphs. Then, $(G/e,f_{/e},k)$ is an instance of \probKCycle on \gcgraphs such that if $(G,f,k)$ is a \No-instance, then $(G/e,f_{/e},k)$ is a \No-instance.
\end{observation}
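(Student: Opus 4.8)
The plan is to treat the two assertions in the observation separately. The first one---that $(G/e,f_{/e},k)$ is a legitimate instance of \probKCycle on \gcgraphs---is immediate from Observation~\ref{obs:funcFe}, which already guarantees that $f_{/e}$ is a representation of $G/e$. So the only real content is the implication, and I would prove it in contrapositive form: assuming $(G/e,f_{/e},k)$ is a \Yes-instance, I would exhibit a cycle on at least $k$ vertices in $G$. Conceptually this is the statement that contracting an edge does not increase the circumference, and the proof amounts to ``lifting'' a long cycle of $G/e$ back up to $G$; since $G/e$ is obtained from $G$ by contracting the single edge $e$, this is precisely the minor-based argument alluded to before the observation.

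Concretely, let $C$ be a cycle of $G/e$ on at least $k$ vertices, write $e=\{u,v\}$, and recall that $V(G/e)=(V(G)\setminus\{u,v\})\cup\{x_{\{u,v\}}\}$ and that every edge of $G/e$ not incident to $x_{\{u,v\}}$ is also an edge of $G$. First I would dispose of the easy case $x_{\{u,v\}}\notin V(C)$, in which $C$ is already a cycle of $G$ on at least $k$ vertices. Otherwise let $a,b$ be the two (distinct) neighbours of $x_{\{u,v\}}$ along $C$; since $\{x_{\{u,v\}},a\},\{x_{\{u,v\}},b\}\in E(G/e)$, the definition of edge contraction gives $a,b\in N_G(\{u,v\})$. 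Now the lift splits into two subcases. If some $w\in\{u,v\}$ is $G$-adjacent to both $a$ and $b$, I replace the vertex $x_{\{u,v\}}$ in $C$ by $w$; the replacement is legal because $V(C)\setminus\{x_{\{u,v\}}\}\subseteq V(G)\setminus\{u,v\}$, so $w$ does not already lie on $C$, and we obtain a cycle of $G$ on $|V(C)|\geq k$ vertices. Otherwise, after possibly swapping the names of $u$ and $v$, we may assume that $a$ is $G$-adjacent only to $u$ and $b$ only to $v$; then I replace $x_{\{u,v\}}$ in $C$ by the two-vertex segment $u,v$, using the edge $\{u,v\}\in E(G)$ that exists by Condition~\ref{condition:GridClique1} in Definition~\ref{def:GridClique} (as noted in the remark following Definition~\ref{def:contractiblePair}), which yields a cycle of $G$ on $|V(C)|+1\geq k$ vertices. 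In every case $G$ has a cycle on at least $k$ vertices, so $(G,f,k)$ is a \Yes-instance.

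The only step that needs a moment's thought---and hence the ``main obstacle,'' modest though it is---is the subcase analysis on the neighbours $a,b$ of the contracted vertex: one has to observe that if it is impossible to re-route the cycle through a single endpoint of $e$, then the two relevant adjacencies are forced to point to \emph{different} endpoints of $e$, so that routing through the whole edge $u$--$v$ necessarily works. Everything else---that the newly introduced vertices do not already appear on $C$, and that all edges used in the lift are present in $G$---is routine bookkeeping that follows directly from the description of $V(G/e)$ and $E(G/e)$.
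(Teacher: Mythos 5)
Your proof is correct and follows essentially the same route as the paper: the paper simply invokes the fact that $G/e$ is a minor of $G$ (so a long cycle in $G/e$ lifts to one at least as long in $G$), and your argument is an explicit, careful unrolling of exactly that lifting for a single edge contraction, together with the same appeal to Observation~\ref{obs:funcFe} for the representation $f_{/e}$. The case analysis on the neighbours $a,b$ of $x_{\{u,v\}}$ is sound, so nothing further is needed.
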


Now, we also verify that in case there exists a cycle on at least $2k$ vertices, the operation that contracts an edge also preserves the answer \Yes.

\begin{lemma}\label{lem:longCycPreserveYes}
Let $(G,f,k)$ be an instance of \probKCycle on \gcgraphs{} such that $G$ contains a cycle $C$ on at least $2k$ vertices. Then, $(G/e,f_{/e},k)$ is a \Yes-instance.
\end{lemma}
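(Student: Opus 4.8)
The plan is a short case analysis according to how the $(\ge 2k)$-cycle $C$ meets the contracted edge $e=\{u,v\}$. Recall that since $(u,v)$ is contractible we have $f(u)=f(v)$, so $\{u,v\}\in E(G)$ by Condition~\ref{condition:GridClique1}, and that in $G/e$ the new vertex $x_{\{u,v\}}$ is adjacent precisely to $N(\{u,v\})$; hence every edge of $G$ incident to $u$ or $v$ has a counterpart in $G/e$ obtained by replacing that endpoint with $x_{\{u,v\}}$. I will use this fact repeatedly without further comment. Also, by Observation~\ref{obs:funcFe}, $f_{/e}$ is a representation of $G/e$, so the object we produce is a legitimate instance of \probKCycle on \gcgraphs; it remains to exhibit a cycle on at least $k$ vertices in $G/e$.

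\emph{Case 1: $|V(C)\cap\{u,v\}|\le 1$.} If $V(C)\cap\{u,v\}=\emptyset$, then $C$ is unaffected by the contraction and is already a cycle of $G/e$ on $\ge 2k\ge k$ vertices. If $V(C)\cap\{u,v\}=\{u\}$ (the case $\{v\}$ is symmetric), then replacing the unique occurrence of $u$ along $C$ by $x_{\{u,v\}}$ yields a cycle of $G/e$ on the same $\ge 2k\ge k$ vertices: the two edges of $C$ at $u$ survive contraction as noted above, the remaining edges are inherited, and the vertex sequence is still repetition-free because $x_{\{u,v\}}$ is a fresh vertex.

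\emph{Case 2: $\{u,v\}\subseteq V(C)$.} The vertices $u$ and $v$ split $C$ into two $u$--$v$ subpaths $P_1,P_2$ that are internally vertex-disjoint and satisfy $|V(P_1)|+|V(P_2)|=|V(C)|+2\ge 2k+2$; assume without loss of generality that $|V(P_1)|\ge k+1$. Write $P_1=[u\,a_1a_2\cdots a_m\,v]$, so that $a_1,\dots,a_m$ are pairwise distinct, distinct from $u$ and $v$, and $m=|V(P_1)|-2$. Contracting $e$ identifies the two endpoints of $P_1$, and I claim $C'=[x_{\{u,v\}}\,a_1a_2\cdots a_m\,x_{\{u,v\}}]$ is a cycle of $G/e$: its internal edges are inherited from $P_1$; its two end edges $\{x_{\{u,v\}},a_1\}$ and $\{a_m,x_{\{u,v\}}\}$ lie in $G/e$ because $a_1,a_m\in N(\{u,v\})$; and the vertices $x_{\{u,v\}},a_1,\dots,a_m$ are pairwise distinct. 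Hence $C'$ has $m+1=|V(P_1)|-1\ge k$ vertices, so $(G/e,f_{/e},k)$ is a \Yes-instance.

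I do not anticipate a genuine obstacle; the only delicate points are checking that the structure produced in Case~2 is a simple cycle rather than a degenerate closed walk --- ensured by the pairwise distinctness of the $a_i$ together with $m\ge 1$, valid once $k$ is at least a small constant, which we may assume --- and confirming that contraction does not delete the edges we need, which is immediate from the definition of $G/e$. The pigeonhole step ``$|V(P_1)|\ge k+1$'' is the crux of the length bound, and it is precisely here that the hypothesis $|V(C)|\ge 2k$ (rather than merely $|V(C)|\ge k$) is used.
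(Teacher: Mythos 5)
Your proposal is correct and follows essentially the same route as the paper: handle the cases $|V(C)\cap\{u,v\}|\le 1$ directly, and when both $u,v$ lie on $C$, take the longer of the two $u$--$v$ arcs and close it through $x_{\{u,v\}}$, obtaining a cycle on at least $|V(C)|/2\ge k$ vertices. Your pigeonhole count $|V(P_1)|+|V(P_2)|=|V(C)|+2$ matches the paper's ``$i-2\ge\ell-i$'' bookkeeping, and the extra remark about non-degeneracy is a harmless refinement.
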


\begin{proof}
Denote $e=\{u,v\}$. In case $V(C)\cap\{u,v\}=\emptyset$, then $C$ is also a cycle in $G/e$, and in case $|V(C)\cap\{u,v\}|=1$, then by replacing the vertex in $V(C)\cap\{u,v\}$ by $x_{\{u,v\}}$ in $C$, we obtain a cycle of the same length as $C$ in $G/e$. In both of these cases, the proof is complete, and thus we next suppose that $\{u,v\}\subseteq V(C)$.

Let us denote $C=v_1-v_2-v_3-\cdots-v_\ell-v_1$, where $v_1=u$ and $v_i=v$ for some $i\in[\ell]\setminus\{1\}$. Note that $\ell\geq 2k$. Without loss of generality, assume that $i-2\geq \ell-i$ (else we replace each $v_j$, except for $v_1$, by $v_{\ell-j}$, and obtain a cycle where this property holds). Now, note that $C'=x_{\{u,v\}}-v_2-\cdots-v_{i-1}-x_{\{u,v\}}$ is a cycle in $G/e$. Moreover, since $i-2\geq \ell-i$, it holds that $|V(C')|=i-1\geq\frac{\ell}{2}\geq k$. Thus, $(G/e,f_{/e},k)$ is a \Yes-instance.
\end{proof}

Before we present our algorithm, we need two additional propositions, handling the extreme cases where we either discover that our input graph contains a large grid or, after a series of operations that contracted edges in $G$, we ended up with a graph isomorhpic to the cell graph of $G$. For the first case, we need the following result (see also \cite{Demaine:2008mi,cygan2015parameterized}).

\begin{observation}\label{obs:longCycLargeGrid}
Let $(G,k)$ be an instance of \probKCycle{} on general graphs. If $G$ contains a $\sqrt{k}\times \sqrt{k}$ grid as a minor, then $(G,k)$ is a \Yes-instance.
\end{observation}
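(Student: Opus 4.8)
The plan is to show that a $\sqrt{k}\times\sqrt{k}$ grid contains a cycle on at least $k$ vertices as a subgraph, and then lift this to $G$ via the fact that taking a minor cannot destroy a long cycle (in the sense that a cycle on at least $k$ vertices in a minor $M$ of $G$ implies a cycle on at least $k$ vertices in $G$). First I would recall the standard Hamiltonian path/cycle structure of grids. For the $a\times a$ grid with $a=\sqrt{k}$, there is a well-known ``snake'' Hamiltonian path: traverse the first row left to right, drop down one vertex, traverse the second row right to left, drop down, and so on. This visits all $a^2=k$ vertices. If $a$ is even, this snake path can in fact be closed into a Hamiltonian cycle (go down the first column, then snake back along the rows to the top-right corner and down to close up); if $a$ is odd, one easily finds a cycle on at least $a^2-a+1\ge k-\sqrt{k}+1$ vertices, which still comfortably exceeds, say, $k/2$, but in fact with a slightly more careful routing one gets a cycle on at least $a^2-1$ vertices. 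For the purposes of this observation it suffices to exhibit \emph{some} cycle on at least $k$ vertices; since we only need ``at least $k$,'' and since the grid has exactly $k$ vertices, what we really need is that the grid has a Hamiltonian cycle when $\sqrt{k}$ is even and, when $\sqrt{k}$ is odd, that $G$ (not just the grid) still has a cycle on $\ge k$ vertices — but the cleanest route is to observe that we may as well assume $\sqrt k$ is a convenient value, or simply cite that every $a\times b$ grid with $a,b\ge 2$ and $a\cdot b$ even is Hamiltonian, and that an $a\times a$ grid with $a$ odd contains a cycle on $a^2-1\ge k-1$ vertices — and then note $k-1\ge k$ fails, so instead I would phrase the statement as using an $(\sqrt k+1)\times(\sqrt k+1)$ grid or round up. Given that the excerpt's later use is with a $\sqrt k\times\sqrt k$ grid, I expect the intended argument simply invokes that an $a\times a$ grid contains a cycle on at least $a^2 = k$ vertices when routed appropriately (taking the grid dimension to be $\lceil\sqrt k\rceil$ absorbs any parity loss), or equivalently that the grid has a cycle of length $\ge k$.

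Second, I would address the lifting step. If $G$ contains an $a\times a$ grid $\Gamma$ as a minor, then there are disjoint connected vertex subsets (branch sets) $\{B_v : v\in V(\Gamma)\}$ of $V(G)$, with an edge of $G$ between $B_u$ and $B_v$ whenever $uv\in E(\Gamma)$. Given a cycle $v_1-v_2-\cdots-v_m-v_1$ in $\Gamma$, I would build a closed walk in $G$ by, for each $i$, taking a path inside the connected subgraph $G[B_{v_i}]$ from the vertex where we ``entered'' $B_{v_i}$ to the vertex from which we ``leave'' toward $B_{v_{i+1}}$, and concatenating these with the cross edges; since the branch sets are pairwise disjoint, this closed walk is in fact a cycle in $G$, and it passes through at least one vertex of each $B_{v_i}$, hence has length at least $m$. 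Taking $m\ge k$ gives a cycle on at least $k$ vertices in $G$, so $(G,k)$ is a \Yes-instance.

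The only mild subtlety — the ``hard part,'' such as it is — is the parity issue in the Hamiltonicity of the square grid: an $a\times a$ grid with $a$ odd is \emph{not} Hamiltonian (it is bipartite with unbalanced sides), so it does not literally contain a cycle on all $a^2$ vertices. The fix is routine: either absorb the loss by working with a grid of side $\lceil\sqrt k\rceil$ (an $a\times a$ grid minor with $a\ge\sqrt k$ and $a$ even exists whenever a $\sqrt k\times\sqrt k$ one does, after rounding $k$ up to the next perfect square of an even number — at a cost that disappears into the $\OO(\cdot)$ and the ``at least $k$'' slack), or to note directly that an $a\times a$ grid with $a$ odd still contains a cycle of length $a^2-1 \ge k - 2\sqrt k$, and then observe that in the contexts where this observation is applied one is free to increase the grid-size threshold by a constant factor. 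I would state the observation with the understanding (made precise in the surrounding lemmas, cf.\ the cited \cite{Demaine:2008mi,cygan2015parameterized}) that ``contains a $\sqrt k\times\sqrt k$ grid minor'' is used as shorthand for ``contains a grid minor large enough to house a cycle on $\ge k$ vertices,'' so the proof reduces to the two clean steps above: grid $\Rightarrow$ long cycle in the grid, and minor $\Rightarrow$ long cycle persists.
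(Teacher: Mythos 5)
Your argument is correct and is exactly the folklore proof the paper intends: the statement appears as an observation without proof (with a ``see also'' pointer to \cite{Demaine:2008mi,cygan2015parameterized}), and the intended reasoning is precisely your two steps --- a (near-)Hamiltonian cycle in the $\sqrt{k}\times\sqrt{k}$ grid, lifted to $G$ through the branch sets of the minor model. Your parity caveat for odd $\sqrt{k}$ (where the grid's longest cycle has only $k-1$ vertices) is a fair catch about the literal statement, but, as you note, the constant-factor slack in how the grid size is chosen via Lemma~\ref{lem:mainCell} in the proof of Theorem~\ref{thm:longestCyc} absorbs it.
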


For the second case, we need the following result (see also \cite{cygan2015parameterized}).

\begin{proposition}[\cite{BodlaenderCK12}]\label{prop:DPLongCyc}
\probKCycle on graphs of treewidth $\mathrm{\it tw}$ can be solved in time $2^{\OO(\mathrm{\it tw})}\cdot n^{\OO(1)}$.
\end{proposition}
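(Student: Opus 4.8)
The plan is to prove this by running a dynamic programming algorithm over a tree decomposition and compressing its tables to single-exponential size with the rank-based ``representative sets of weighted partitions'' technique of Bodlaender, Cygan, Kratsch and Nederlof~\cite{BodlaenderCK12}. First I would fix the decomposition: applying Proposition~\ref{prop:treewidth} with a guessed width bound that we double until it succeeds, and then Proposition~\ref{prop:nice}, we obtain in time $2^{\OO(\tw)}\cdot n$ a nice tree decomposition $(T,\beta)$ of $G$ of width $\OO(\tw)$, which we further refine so that, besides the usual nodes, every edge of $G$ is ``introduced'' exactly once along $T$. For a node $t$ let $G_t$ denote the graph on vertex set $\gamma(t)$ consisting of the edges introduced at or below $t$, and fix (at the cost of a factor $n$) a vertex $v_1$ that the sought cycle is required to contain.

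Over this decomposition I would run the standard DP for detecting a connected, degree-constrained subgraph. A \emph{partial solution} at a node $t$ is a subgraph $H_t\subseteq G_t$ in which every vertex of $\gamma(t)\setminus\beta(t)$ has degree $2$ and every vertex of $\beta(t)$ has degree $0$, $1$ or $2$; thus $H_t$ is a disjoint union of paths and cycles whose path-endpoints all lie in $\beta(t)$. Its \emph{signature} is a triple $(d,p,j)$ where $d\colon\beta(t)\to\{0,1,2\}$ records the degrees, $j=|E(H_t)|$, and $p$ is the partition of the non-isolated boundary vertices recording which of them lie in a common connected component of $H_t$, with the component of $v_1$ flagged. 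Storing one realizability bit per signature is too slow, since a set of size $\tw+1$ has $\tw^{\Theta(\tw)}$ partitions; the rank-based method compresses the partition coordinate. For a boundary set $U$, consider the \emph{cut matrix} $M_U$ over $\mathbb{F}_2$, with rows and columns indexed by the partitions of $U$ and $M_U[p,q]=1$ iff the common coarsening $p\sqcup q$ consists of a single block; the crucial fact to establish is $\mathrm{rank}_{\mathbb{F}_2}(M_U)=2^{|U|-1}$.

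Granting the rank bound, call a collection $\mathcal{A}$ of weighted partitions of $U$ (for us: for each fixed pair $(d,j)$ the partitions $p$ with $(d,p,j)$ realizable, each of weight $j$) \emph{represented} by $\mathcal{A}'\subseteq\mathcal{A}$ if for every partition $q$ of $U$ the best weight over pairs $(p,w)\in\mathcal{A}$ with $p\sqcup q=\hat 1$ is already attained in $\mathcal{A}'$; the rank bound implies, via Gaussian elimination on the rows of $M_U$ occurring in $\mathcal{A}$, that a representing $\mathcal{A}'$ with $|\mathcal{A}'|\le 2^{|U|-1}$ exists and is computable in time $|\mathcal{A}|\cdot 2^{\OO(|U|)}$. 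This $\mathsf{reduce}$ operation is invoked after every DP transition. The transitions are the usual operations on sets of weighted partitions: insertion and projection for introduce-vertex and forget; for introduce-edge $\{u,v\}$, a branch that either drops the edge or adds it (raising the degrees of $u$ and $v$, discarding if either exceeds $2$, joining their blocks or creating a new block, incrementing $j$), while any partial solution in which a degree-$1$ boundary vertex would be forgotten is discarded; and for join nodes the ``merge'' product overlaying the two children's path-and-cycle systems, summing degrees and taking the transitive closure of the partitions. Since $\mathsf{reduce}$ keeps every intermediate table of size $2^{\OO(\tw)}$ and $j$ ranges over $\OO(n)$ values, each node is processed in $2^{\OO(\tw)}\cdot n^{\OO(1)}$ time, hence the whole DP in $2^{\OO(\tw)}\cdot n^{\OO(1)}$; the answer is \Yes{} iff some surviving partial solution certifies a single cycle through $v_1$ of $j\ge k$ edges, which is read off from the flagged block. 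Correctness rests on the invariant that at each node the stored set represents the full set of realizable signatures — preserved because $\mathsf{reduce}$, merge, projection and the edge branch all preserve the representation relation.

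The main obstacle is the linear-algebraic core: proving $\mathrm{rank}_{\mathbb{F}_2}(M_U)=2^{|U|-1}$ and deriving correctness of $\mathsf{reduce}$, i.e.\ that a representing subset of that size always exists and that the representation relation is preserved by merge, projection and the edge branch. The rank equality is shown by factoring $M_U$ through the ``cut space'' $\mathbb{F}_2^{U}/\langle\mathbf 1\rangle$ for the upper bound $2^{|U|-1}$, and by exhibiting an explicit nonsingular $2^{|U|-1}\times 2^{|U|-1}$ submatrix (e.g.\ the one indexed by partitions into at most two blocks) for the matching lower bound. The rest — the tree-decomposition bookkeeping and the transition formulas — is routine, the only delicate point being the usual care needed to guarantee that the certified object is a \emph{single} cycle through $v_1$ of the prescribed length and not a cycle plus leftover components; this is precisely what fixing $v_1$, flagging its block, and requiring forgotten vertices to have degree $2$ are for. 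One could instead prove the proposition by the Cut \& Count technique — counting modulo $2$, after an isolating random vertex weighting, the ``consistently cut'' subgraphs with all degrees in $\{0,2\}$ and a prescribed edge count — which gives a randomized $2^{\OO(\tw)}\cdot n^{\OO(1)}$ algorithm with one-sided error; the deterministic rank-based route above is the one matching the cited reference.
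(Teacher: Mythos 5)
This statement is imported by the paper as a black box from~\cite{BodlaenderCK12}; the paper gives no proof of its own. Your sketch is a faithful and correct reconstruction of the rank-based ``representative sets of weighted partitions'' argument from that reference (including the $\mathrm{rank}_{\mathbb{F}_2}(M_U)=2^{|U|-1}$ core and the bookkeeping needed to certify a single cycle through the guessed vertex rather than a union of cycles), so there is nothing to compare it against here beyond the citation itself.
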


From Proposition \ref{prop:DPLongCyc} and the fact that $\cell(G)$ is a minor of $G$, we have the following.

\begin{observation}\label{obs:longCycEndCell}
Let $(G,f,k)$ be an instance of \probKCycle on \gcgraphs{}. Then, it can be determined in time $2^{\OO(\tw(\cell(G)))}\cdot n^{\OO(1)}$ whether $\cell(G)$ contains a cycle on at least $k$ vertices, in which case $(G,f,k)$ is a \Yes-instance.
\end{observation}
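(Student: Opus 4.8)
The plan is to split the observation into its two assertions: an algorithmic one (deciding, within the claimed time bound, whether $\cell(G)$ has a cycle on at least $k$ vertices) and a structural one (a positive answer witnesses a \Yes-instance of \probKCycle{} on $(G,f,k)$). Neither part is deep: the first is a direct invocation of Proposition~\ref{prop:DPLongCyc}, and the second rests on the already-noted fact that $\cell(G)$ is a minor of $G$.

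For the algorithmic part, I would first observe that from the representation $f$ the graph $\cell(G)$ can be constructed in polynomial time straight from Definition~\ref{def:cellGraph}, and that $|V(\cell(G))|\le n$ since each non-empty cell contributes exactly one vertex of $\cell(G)$ and at least one vertex of $G$. I would then hand $\cell(G)$, together with the threshold $k$, to the algorithm of Proposition~\ref{prop:DPLongCyc}; if that algorithm requires an explicit tree decomposition, first compute one of width $\OO(\tw(\cell(G)))$ using Proposition~\ref{prop:treewidth}. The resulting running time is $2^{\OO(\tw(\cell(G)))}\cdot n^{\OO(1)}$, as claimed, and we declare $(G,f,k)$ a \Yes-instance exactly when this run reports a cycle on at least $k$ vertices in $\cell(G)$.

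For the structural part, recall that $\cell(G)$ is a minor of $G$ (this is exactly the fact stated just before the observation, obtained from Observation~\ref{obs:cellGraphMinor} together with the fact that a backbone is an induced subgraph of $G$). It then remains to invoke the standard fact that if a minor $M$ of a graph $G$ contains a cycle on $\ell$ vertices, then $G$ contains a cycle on at least $\ell$ vertices. I would prove this by the usual lifting argument: fix pairwise-disjoint connected branch sets $\{B_x : x\in V(M)\}$ in $G$; given a cycle $x_1 x_2 \cdots x_\ell x_1$ in $M$, each edge $x_i x_{i+1}$ is realized by an edge of $G$ between $B_{x_i}$ and $B_{x_{i+1}}$, and inside each $B_{x_i}$ one can route a path joining the two relevant endpoints; concatenating these internal paths with the connecting edges yields a closed walk meeting at least one vertex of each of the $\ell$ disjoint branch sets, hence containing a cycle on at least $\ell$ vertices. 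Applying this with $M=\cell(G)$ and $\ell\ge k$ produces a cycle on at least $k$ vertices in $G$, so $(G,f,k)$ is a \Yes-instance.

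The only point calling for a line of care is this minor-lifting fact; everything else is routine. In our situation it is even slightly cleaner, since $\cell(G)$ arises from a backbone---an induced subgraph of $G$---by contracting connected pieces that sit inside the cliques $f^{-1}(i,j)$, so the branch sets are automatically connected. I do not anticipate any genuine obstacle: the observation is essentially an immediate corollary of Proposition~\ref{prop:DPLongCyc} and the established minor relationship, and the proof should be only a few lines.
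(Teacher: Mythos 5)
Your proposal is correct and follows essentially the same route as the paper, which derives the observation directly from Proposition~\ref{prop:DPLongCyc} together with the fact that $\cell(G)$ is a minor of $G$ (via Observation~\ref{obs:cellGraphMinor} and the backbone being an induced subgraph); you merely spell out the standard details the paper leaves implicit (constructing $\cell(G)$, obtaining a tree decomposition, and lifting a cycle from a minor). The only cosmetic remark is that your lifting construction in fact produces a cycle outright (all vertices lie in pairwise-disjoint branch sets), so there is no need to pass through a closed walk.
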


We are now ready to present our algorithm. The proof of correctness and analysis of running times are integrated in the description of the algorithm.

\subparagraph*{Proof of Theorem \ref{thm:longestCyc}.}
Let $(G,O,k)$ be an instance of \probKCycle on unit disk/square graphs. By using Corollary~\ref{cor:firstPhaseToGrid}, we first obtain an equivalent instance $(G,f,k)$ of \probKCycle on \gcgraphs{} (both instances refer to the same graph $G$ and parameter $k$). Next, by using Lemma~\ref{lem:mainCell} with the parameter $\ell=100\cdot599^3\cdot\sqrt{k}$, we either correctly conclude that $G$ contains a $\sqrt{k}\times\sqrt{k}$ grid as a minor, or compute a tree decomposition of $\cell(G)$ of width at most $500\cdot599^3\cdot\sqrt{k}=\OO(\sqrt{k})$. In the first case, by Observation \ref{obs:longCycLargeGrid}, we are done. In the latter case, by using Observation~\ref{obs:longCycEndCell}, we determine in time $2^{\OO(\sqrt{k})}\cdot n^{\OO(1)}$ whether $\cell(G)$ contains a cycle on at least $k$ vertices, where if the answer is positive, then we are done. Thus, we next suppose that $\cell(G)$ does not contain a cycle on at least $k$ vertices.

Now, as long as there exists a contractible pair $(u,v)$, we perform the following operation. First, by using Corollary~\ref{cor:nearKCycle}, we determine in time $2^{\OO(\sqrt{k}\log k)}\cdot n^{\OO(1)}$ whether $G$ contain a cycle whose number of vertices is between $k$ and $2k$. If the answer is positive, then we are done (our final answer is \Yes). If the answer is negative, then we contract the edge $\{u,v\}$. By Lemma \ref{lem:longCycPreserveYes}, we obtain an instance that is equivalent to the previous one. Note that the loop described in this paragraph can have at most $\OO(n^2)$ iterations, and therefore its total running time is bounded by $2^{\OO(\sqrt{k}\log k)}\cdot n^{\OO(1)}$.

Once there does not exist a contractible pair $(u,v)$, as we have only modified the graph by contracting edges, on at a time, between contractible pairs, we are left with a graph that is isomorphic to the cell graph of our original input graph. We have already correctly concluded that this graph does not contain a cycle on at least $k$ vertices. Thus, at this point, we correctly answer \No.\qed


\section{Feedback Vertex Set}\label{sec:fvs}

In this section, we show that \probFVS{} admits a subexponential-time parameterized algorithm. More precisely, we prove the following.

\begin{theorem}\label{thm:fvs}
\probFVS{} on unit disk/square graphs can be solved in time $2^{\OO(\sqrt{k}\log k)}\cdot n^{\OO(1)}$.
\end{theorem}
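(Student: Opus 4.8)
The plan follows the same two-phase template used for \probKCycle in Section~\ref{sec:longCyc}, but now the ``solution'' is a feedback vertex set rather than a long cycle, so both the reduction to bounded-cell-width decompositions and the dynamic program must be redesigned. As before, I would start by invoking Corollary~\ref{cor:firstPhaseToGrid} to pass from a unit disk/square graph to an equivalent \gcgraph{} $(G,f)$. The first key observation is that a feedback vertex set of size $k$ can touch at most $k$ cliques nontrivially in the following sense: since each cell induces a clique, any cell $(i,j)$ with $|f^{-1}(i,j)|\geq k+2$ must have all but at most~$k$ of its vertices untouched by the solution, and those untouched vertices form a clique of size $\geq 2$; two such vertices together with any third vertex outside the solution would create a triangle unless that third vertex is in the solution, so the solution structure around large cells is extremely constrained. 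More usefully, if some cell has $\geq k+2$ vertices, then $G$ contains a $K_{k+2}$; any $K_4$-subdivision needs a solution vertex, and a clique on $k+2$ vertices contains $k+1$ vertex-disjoint\dots{} actually the cleanest statement: a $K_{k+3}$ forces the feedback vertex set to have size $\geq k+1$ (remove $\leq k$ vertices, at least $3$ remain, forming a triangle), so the instance is a trivial \No-instance. Thus I would first check for a clique of size $k+3$ in polynomial time; if found, answer \No. Otherwise every cell has $< k+3 = k^{\OO(1)}$ vertices.

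Next I would run Corollary~\ref{cor:mainNCTD} with $\ell=\Theta(\sqrt k)$. Either $G$ contains a $\sqrt k\times\sqrt k$ grid minor as a certificate---but a $\sqrt k \times \sqrt k$ grid has many vertex-disjoint cycles, and in fact bidimensionality-type reasoning shows a graph with a large grid minor has large feedback vertex set, so if $G$ has an $(c\sqrt k)\times(c\sqrt k)$ grid minor for a suitable constant $c$ (chosen so the grid forces $> k$ disjoint cycles, hence fvs size $> k$) we answer \No---or we obtain a \NCTD{$\OO(\sqrt k)$} $\TT=(T,\beta)$ of $G$, whose bags have size $\OO(\sqrt k)\cdot k^{\OO(1)} = k^{\OO(1)}$. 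The heart of the argument is then a structural lemma in the spirit of Lemma~\ref{lemm:path:shortinteraction}: there is a minimum feedback vertex set $S$ such that the induced forest $F = G\setminus S$ interacts across each cell boundary only $\OO(1)$ times, and hence across each bag-separator only $\OO(\sqrt k)$ times. Concretely, for two cells $(i,j),(i',j')$, the edges of $F$ between them span a forest on a clique-pair; if more than a constant number of such edges form a matching in $F$, we can reroute within the two cliques (as in Fig.~\ref{figure_cycle_replacement}) to reduce the number of crossing edges while keeping $G\setminus S$ a forest and $|S|$ unchanged---using that a forest on $A\cup B$ with $A,B$ cliques and many crossing edges must contain a cycle once we add a clique edge, which lets us swap. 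Since each bag is a union of $\OO(\sqrt k)$ cells and each cell interacts with $\OO(1)$ other cells (Condition~\ref{condition:GridClique2}), the total number of $F$-edges crossing any adjacent pair of bags in $\TT$ is $\OO(\sqrt k)$; equivalently, the number of vertices of $F$ in a bag with a neighbor outside is $\OO(\sqrt k)$.

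Given that property, the dynamic program over the \NCTD{} proceeds like the one in Lemma~\ref{lem:exact_k_cycle_restricted}: at each node $v$, a state records (a) which vertices of $\beta(v)$ are in $S$ (there are $2^{|\beta(v)|}$ choices, but we only need those consistent with $|S|\leq k$ and, crucially, we guess for each cell in the bag which $\leq \OO(1)$ of its vertices are ``active'' in the partial forest---the rest are either in $S$ or leaves pending a forget), (b) the partition of the $\OO(\sqrt k)$ active forest-vertices into the trees of the partial forest $F\cap \gamma(v)$, tracked as a set partition (at most $(\OO(\sqrt k))^{\OO(\sqrt k)}=2^{\OO(\sqrt k\log k)}$ of them), and (c) the number of forest vertices committed so far. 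Introduce, forget, and join nodes are handled by standard forest-DP bookkeeping, merging partitions while forbidding the creation of a cycle; the number of states is $2^{\OO(\sqrt k\log k)}\cdot k^{\OO(1)}$ and each transition costs $2^{\OO(\sqrt k\log k)}\cdot n^{\OO(1)}$, giving total running time $2^{\OO(\sqrt k\log k)}\cdot n^{\OO(1)}$. Combining with $n^{\OO(1)}$-many grid/clique checks yields the theorem.

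\textbf{The main obstacle} I anticipate is the rerouting/structural lemma: unlike the cycle case, where a single swap inside two cliques preserves ``being a cycle,'' here we must preserve ``$G\setminus S$ is acyclic'' globally, and a local reroute between two cliques could introduce a cycle elsewhere through other crossing edges or through $S$. Handling this cleanly requires choosing $S$ to minimize a carefully chosen potential (e.g., number of $F$-edges with endpoints in different cells, tie-broken by some secondary measure) and arguing that any violating configuration admits a strictly improving exchange---paralleling Lemma~\ref{lemm:path:shortinteraction} but with acyclicity as the invariant. A secondary subtlety is the correct accounting of ``active'' vertices per cell so that the DP state stays $2^{\OO(\sqrt k\log k)}$: since a cell-clique can contribute an unbounded-in-$\sqrt k$ (up to $k^{\OO(1)}$) number of degree-$\leq 1$ forest vertices, we must argue only $\OO(\sqrt k)$ of them ever need to be remembered across a separator, which is exactly what the structural lemma buys us.
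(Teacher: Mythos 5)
Your phase-one setup (reject if some cell has $k+3$ vertices, hence a $K_{k+3}$; run Corollary~\ref{cor:mainNCTD} with $\ell=\Theta(\sqrt k)$ and answer \No{} on a large grid minor; otherwise do a DP over the resulting \NCTD{$\OO(\sqrt k)$}) matches the paper (Observation~\ref{obs:fvslowcell}, Lemma~\ref{lem:fvsFirstPhase}). The gap is in the heart of phase two. You make the whole argument rest on a structural ``rerouting'' lemma in the spirit of Lemma~\ref{lemm:path:shortinteraction} --- the existence of a minimum feedback vertex set whose forest crosses each cell boundary $\OO(1)$ times, to be proved by an exchange argument with a carefully chosen potential --- and you yourself flag that you do not know how to carry this out while preserving acyclicity globally. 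As written, that lemma is unproven, so the proposal does not establish that only $2^{\OO(\sqrt k\log k)}$ DP states are needed. Moreover, the stated reason you need it is incorrect: you worry that ``a cell-clique can contribute an unbounded-in-$\sqrt k$ (up to $k^{\OO(1)}$) number of degree-$\leq 1$ forest vertices.'' It cannot. The solution forest is the \emph{induced} subgraph $G\setminus S$, and each cell $f^{-1}(i,j)$ is a clique, so any three surviving vertices of a cell would induce a triangle; hence \emph{every} feedback vertex set leaves at most $2$ vertices per cell in the forest. This is exactly the paper's Claim~\ref{claim:fvs1}: for any bag of the \NCTD{$c\sqrt k$}, the forest meets it in at most $2c\sqrt k$ vertices, for every solution, with no exchange argument at all. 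This also repairs your DP bookkeeping: you propose to record ``which vertices of $\beta(v)$ are in $S$,'' which is $2^{\Theta(k^{1.5})}$ choices by Observation~\ref{obs:fvs2}; the paper instead phrases the problem as \maxforestlong{} and indexes states by the (automatically $\OO(\sqrt k)$-sized) intersection of the forest with the bag, together with its partition into components, giving $2^{\OO(\sqrt k\log k)}$ states directly (Lemma~\ref{lem:mainfvs}). Your notion of ``leaves pending a forget'' is likewise unnecessary once the two-vertices-per-cell bound is in hand.

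In short: the algorithmic skeleton is right, but the key quantitative claim is supported only by an unproven (and, it turns out, unneeded) rerouting lemma, and the motivation for that lemma rests on treating the forest as a non-induced subgraph. Replacing it with the trivial observation that an induced forest contains at most two vertices of any clique cell closes the gap and recovers the paper's proof.
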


First, we observe that if we find a large grid, we can answer \No\ (see also \cite{Demaine:2008mi,cygan2015parameterized}).

\begin{observation}\label{obs:fvsLargeGrid}
Let $(G,k)$ be an instance of  \probFVS. If $G$ contains a $2\sqrt{k}\times 2\sqrt{k}$ grid as a minor, then $(G,k)$ is a \No-instance.
\end{observation}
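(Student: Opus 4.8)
The plan is to combine two facts: the minimum size of a feedback vertex set, $\mathrm{fvs}(\cdot)$, does not increase when passing to a minor, and the $m\times m$ grid has $\mathrm{fvs}$ of order $m^2/3$, so that for $m=2\sqrt k$ it exceeds $k$.

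First I would establish minor-monotonicity of $\mathrm{fvs}$: if $H$ is obtained from $G$ by a single elementary operation, then $\mathrm{fvs}(H)\le \mathrm{fvs}(G)$, and iterating handles arbitrary minors. Deletions are immediate: for a feedback vertex set $S$ of $G$, the set $S\cap V(H)$ is a feedback vertex set of $H$ since $H\setminus S$ is a subgraph of the forest $G\setminus S$. For the contraction of an edge $e=\{u,v\}$ to a new vertex $w$: if $u,v\notin S$ then $S$ is still a feedback vertex set of $G/e$, because $(G/e)\setminus S=(G\setminus S)/e$ and contracting an edge of a forest yields a forest; if $u\in S$ or $v\in S$, then $S'=(S\setminus\{u,v\})\cup\{w\}$ has $|S'|\le |S|$ and is a feedback vertex set of $G/e$, because $(G/e)\setminus S'$ is an induced subgraph of the forest $G\setminus S$. (This is standard; see \cite{Demaine:2008mi,cygan2015parameterized}.)

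Next I would lower-bound $\mathrm{fvs}$ of the grid by a sparsity argument. Let $\Gamma$ be the $m\times m$ grid, so $|V(\Gamma)|=m^2$, $|E(\Gamma)|=2m(m-1)$, and every vertex has degree at most $4$. If $S$ is a feedback vertex set of $\Gamma$, then $\Gamma\setminus S$ is a forest on $m^2-|S|$ vertices and hence has at most $m^2-|S|-1$ edges, while deleting $S$ destroys at most $\sum_{v\in S}\deg(v)\le 4|S|$ edges. Therefore
\[
2m(m-1)-4|S|\ \le\ |E(\Gamma\setminus S)|\ \le\ m^2-|S|-1,
\]
which rearranges to $(m-1)^2\le 3|S|$, i.e.\ $\mathrm{fvs}(\Gamma)\ge (m-1)^2/3$. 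For $m=2\sqrt k$ this gives $\mathrm{fvs}(\Gamma)\ge (2\sqrt k-1)^2/3=(4k-4\sqrt k+1)/3>k$ once $k$ exceeds a small absolute constant, which is exactly the regime relevant to the algorithm of Theorem~\ref{thm:fvs} (instances with $k$ below that constant are solved by brute force). Combining the two facts: if $G$ has a $2\sqrt k\times 2\sqrt k$ grid $\Gamma$ as a minor, then $\mathrm{fvs}(G)\ge \mathrm{fvs}(\Gamma)>k$, so $G$ has no feedback vertex set of size at most $k$ and $(G,k)$ is a \No-instance.

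The only mildly delicate point is the edge-contraction case in the proof of minor-monotonicity; the grid bound is a routine Euler-type count, and the comparison $(2\sqrt k-1)^2/3>k$ is an elementary inequality.
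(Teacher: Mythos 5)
Your proof is correct, and it is more self-contained than what the paper does: the paper states Observation~\ref{obs:fvsLargeGrid} without any proof, simply pointing to the bidimensionality literature (\cite{Demaine:2008mi,cygan2015parameterized}), where the underlying fact is that \probFVS{} is minor-bidimensional with a quadratic lower bound on grids. You reconstruct exactly the two ingredients that citation hides: minor-monotonicity of $\mathrm{fvs}$ (your case analysis for edge contraction is the right one, and the induced-subgraph argument for the case $u\in S$ or $v\in S$ is sound) and a quantitative lower bound $\mathrm{fvs}(\Gamma)\geq (m-1)^2/3$ for the $m\times m$ grid via the Euler-type count. The count is the genuinely useful part: the more obvious packing argument (disjoint $2\times 2$ blocks give $k$ vertex-disjoint $4$-cycles in the $2\sqrt{k}\times 2\sqrt{k}$ grid) only yields $\mathrm{fvs}\geq k$, which does not rule out a solution of size exactly $k$, so your sparsity bound is what actually pushes the value strictly above $k$.

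One honest caveat, which you already flag: your inequality $(2\sqrt{k}-1)^2/3>k$ holds only for $k\geq 14$, and in fact the observation as literally stated fails for very small $k$ (for $k=1$, the graph $C_4$ contains a $2\times 2$ grid as a minor yet has a feedback vertex set of size $1$). So the statement is true only up to the usual constant-factor slack in such bidimensionality claims; your remark that instances with $k$ below a fixed constant can be solved by brute force is the correct patch and is all that Theorem~\ref{thm:fvs} needs. This does not affect the validity of your argument in the regime where the observation is actually used.
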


This observation leads us to the following.

\begin{lemma}\label{lem:fvsFirstPhase}
Let $(G,O,k)$ be an instance of  \probFVS{} on unit disk/square graphs. Then, in time $2^{\OO(\sqrt{k}\log k)}\cdot \vert V(G)\vert^{\OO(1)}$, one can either solve $(G,O,k)$ or obtain an equivalent instance $(G,f,k)$ of \probFVS{} on \gcgraphs{} together with an \NCTD{$\OO(\sqrt{k})$} of $G$.
\end{lemma}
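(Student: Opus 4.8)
The plan is to mirror the structure of the \probKCycle\ algorithm (Theorem~\ref{thm:longestCyc}), but using the fact that here a large grid certifies a \No-instance rather than a \Yes-instance. First I would apply Corollary~\ref{cor:firstPhaseToGrid} to pass in polynomial time from the unit disk/square instance $(G,O,k)$ to an equivalent \gcgraph\ instance $(G,f,k)$, keeping the same graph $G$ and parameter $k$. Then I would invoke Corollary~\ref{cor:mainNCTD} (equivalently Lemma~\ref{lem:mainCell}) with $\ell = 100\cdot 599^3\cdot 2\sqrt{k} = \OO(\sqrt{k})$. In time $2^{\OO(\sqrt k)}\cdot n^{\OO(1)}$ this either reports that $G$ contains a $2\sqrt{k}\times 2\sqrt{k}$ grid as a minor, or produces a \NCTD{$5\ell$} $=$ \NCTD{$\OO(\sqrt{k})$} of $G$.

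In the first case, I would invoke Observation~\ref{obs:fvsLargeGrid}: a $2\sqrt{k}\times 2\sqrt{k}$ grid minor forces every feedback vertex set to have size more than $k$ (the $a\times a$ grid needs roughly $a^2/4$ vertices to destroy all cycles, in particular more than $k$ when $a = 2\sqrt{k}$, and grid minors are preserved under the operations defining minors so this lower bound transfers to $G$), so we may safely output \No\ and we are done. In the second case we already hold an equivalent \gcgraph\ instance $(G,f,k)$ together with the desired \NCTD{$\OO(\sqrt{k})$}, so there is nothing further to do. Since both the reduction to \gcgraphs\ and the decomposition step run within the claimed time bound $2^{\OO(\sqrt{k}\log k)}\cdot |V(G)|^{\OO(1)}$ (indeed the decomposition step is only $2^{\OO(\sqrt{k})}\cdot n^{\OO(1)}$), the lemma follows.

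I do not expect any serious obstacle here: this statement is essentially a packaging lemma that combines Corollary~\ref{cor:firstPhaseToGrid}, Corollary~\ref{cor:mainNCTD}, and Observation~\ref{obs:fvsLargeGrid}. The only point requiring a moment's care is bookkeeping the constant in $\ell$: one must choose $\ell$ large enough that the grid guaranteed by Lemma~\ref{lem:mainCell}, namely a $\frac{\ell}{100\cdot 599^3}\times\frac{\ell}{100\cdot 599^3}$ grid, is at least the $2\sqrt{k}\times 2\sqrt{k}$ grid needed to apply Observation~\ref{obs:fvsLargeGrid}; taking $\ell = 200\cdot 599^3\cdot\sqrt{k}$ suffices and keeps the width of the resulting \NCTD\ at $5\ell = \OO(\sqrt{k})$. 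Everything else is immediate from the cited results.
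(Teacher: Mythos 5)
Your proposal is correct and follows essentially the same route as the paper: obtain a representation $f$ (the paper cites Lemmata~\ref{lem:unitDisk}/\ref{lem:unitSquare} directly rather than Corollary~\ref{cor:firstPhaseToGrid}, but the content is identical), apply Corollary~\ref{cor:mainNCTD} with $\ell=200\cdot599^3\cdot\sqrt{k}$, and resolve the grid-minor case via Observation~\ref{obs:fvsLargeGrid}. The constant bookkeeping you flag matches the paper's choice exactly, so there is nothing to fix.
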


\begin{proof}
First, by using Lemmata \ref{lem:unitDisk} or \ref{lem:unitSquare}, we obtain a representation $f$ of $G$. Then, by using Corollary \ref{cor:mainNCTD} with $\ell=200\cdot599^3\cdot\sqrt{k}=\OO(\sqrt{k})$, we either correctly conclude that $G$ contains a $2\sqrt{k}\times 2\sqrt{k}$ grid as a minor, or compute an \NCTD{$\OO(\sqrt{k})$} of $G$. In both cases, by Observation \ref{obs:fvsLargeGrid}, we are done.
\end{proof}

Because of  Lemma~\ref{lem:fvsFirstPhase}, to prove Theorem~\ref{thm:fvs}, we can focus on 
\probFVS{} on \gcgraph{s}, where the input also contains a   \NCTD{$\OO(\sqrt{k})$}. 
That is, the input of \probFVS{} on \gcgraph{s} is a tuple $(G,f,k,{\cal T})$ 
where $G$ is a \gcgraph{} with representation $f$ and ${\cal T}=(T,\beta)$ is a 
\NCTD{$\OO(\sqrt{k})$} of $G$. 
Notice that if there is a cell $(i,j)$ of $f$, such that $\vert f^{-1}(i,j)\vert \geq k+3$, 
then there is no feedback vertex set of size at most $k$ in $G$, because 
$f^{-1}(i,j)$ is a clique of size at least $k+3$ in $G$. 

\begin{observation}
\label{obs:fvslowcell}
Let $(G,f,k,{\cal T})$ be an instance of \probFVS, where $G$ is a \gcgraph{} with representation $f$. 
If there is a cell $(i,j)$ in $f$ such that $\vert f^{-1}(i,j)\vert \geq k+3$, then  $(G,f,k,{\cal T})$ is a \No-instance. 
\end{observation}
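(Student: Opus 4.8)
The plan is to exploit the single structural fact we have available about cells, namely that $f^{-1}(i,j)$ induces a clique in $G$ by Condition~\ref{condition:GridClique1} of Definition~\ref{def:GridClique}. A clique on many vertices is very far from being acyclic, so any feedback vertex set is forced to delete nearly all of its vertices; once the cell is large enough, this alone exceeds the budget $k$.

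Concretely, I would argue by contradiction: suppose $S\subseteq V(G)$ is a feedback vertex set of $G$ with $|S|\le k$, and write $K=f^{-1}(i,j)$, so that $G[K]$ is a clique with $|K|\ge k+3$. The graph obtained from $G\setminus S$ by further restricting to $K\setminus S$ is precisely the complete graph on $K\setminus S$. Since $G\setminus S$ is acyclic, every induced subgraph of it is acyclic, so this complete graph must be acyclic; but a complete graph is acyclic only when it has at most two vertices. Hence $|K\setminus S|\le 2$, which forces $|S\cap K|\ge |K|-2\ge k+1$, and therefore $|S|\ge k+1>k$, contradicting $|S|\le k$. Consequently $G$ has no feedback vertex set of size at most $k$, i.e.\ $(G,f,k,{\cal T})$ is a \No-instance.

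There is essentially no obstacle here: the only quantitative ingredient is the elementary observation that deleting $t$ vertices from $K_m$ leaves $K_{m-t}$, which is acyclic if and only if $m-t\le 2$, and I would simply state this inline rather than invoke any earlier lemma. The role of the hypothesis $|f^{-1}(i,j)|\ge k+3$ is exactly to make $k+3-2=k+1$ strictly larger than $k$.
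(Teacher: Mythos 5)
Your proof is correct and follows essentially the same reasoning as the paper, which simply notes that $f^{-1}(i,j)$ is a clique of size at least $k+3$, so any feedback vertex set must leave at most two of its vertices and therefore has size at least $k+1$. Your contradiction argument just makes this counting explicit.
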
 

The following observation follows from the fact that ${\cal T}$ is a \NCTD{$\OO(\sqrt{k})$} and $\vert f^{-1}(i,j)\vert \leq k+2$ for any cell $(i,j)$ of $f$. 
\begin{observation}
\label{obs:fvs2}
For any $v\in V(T)$, $\vert \beta(v)  \vert = \OO(k^{1.5})$. 
\end{observation}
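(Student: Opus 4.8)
The plan is to read off the bound directly from the definition of a \NCTD{\OO(\sqrt{k})} together with the per-cell size bound coming from Observation~\ref{obs:fvslowcell}. There is essentially no obstacle here; the statement is a one-line consequence, and the only thing to be careful about is making the implicit assumption explicit, namely that we are in the regime where no cell is too large (otherwise the instance is trivially a \No-instance by Observation~\ref{obs:fvslowcell} and there is nothing to bound).

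First I would recall that, since $\mathcal T=(T,\beta)$ is a \NCTD{\OO(\sqrt{k})} of $G$ with representation $f$, Definition~\ref{def:lcliqueTreeDecomp} guarantees that for every node $v\in V(T)$ there exist at most $\ell=\OO(\sqrt{k})$ cells $(i_1,j_1),\ldots,(i_\ell,j_\ell)$ of $f$ such that $\beta(v)=\bigcup_{t=1}^{\ell} f^{-1}(i_t,j_t)$. Next I would invoke Observation~\ref{obs:fvslowcell}: if some cell $(i,j)$ satisfies $|f^{-1}(i,j)|\geq k+3$ then $(G,f,k,{\cal T})$ is a \No-instance and we are done, so we may assume $|f^{-1}(i,j)|\leq k+2$ for every cell $(i,j)$ of $f$.

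Combining the two facts, for every $v\in V(T)$ we get
\[
|\beta(v)| \;=\; \Big|\bigcup_{t=1}^{\ell} f^{-1}(i_t,j_t)\Big| \;\leq\; \sum_{t=1}^{\ell} |f^{-1}(i_t,j_t)| \;\leq\; \ell\cdot(k+2) \;=\; \OO(\sqrt{k})\cdot\OO(k) \;=\; \OO(k^{1.5}),
\]
which is exactly the claimed bound. Since every step is an immediate application of an earlier definition or observation, no genuinely hard part arises; the "work" is purely in stating the assumption that justifies the per-cell bound.
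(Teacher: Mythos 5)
Your proposal is correct and matches the paper's reasoning exactly: the paper likewise derives the bound from the fact that each bag of an \NCTD{$\OO(\sqrt{k})$} is a union of at most $\OO(\sqrt{k})$ cells, each of size at most $k+2$ (the regime justified by Observation~\ref{obs:fvslowcell}), giving $\OO(\sqrt{k})\cdot(k+2)=\OO(k^{1.5})$. Nothing is missing.
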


Notice that $G$ has a feedback vertex set of size at most $k$ if and only if there is a vertex subset $F \subseteq V(G)$ 
of cardinality at least $\vert V(G)\vert -k$ such that $G[F]$ is a forest. 
Hence, instead of stating the problem as finding a $k$ sized feedback vertex set in $G$, we can state it  as finding  an induced subgraph $H$ of $G$ with maximum number of vertices such that $H$ is a forest.    

\defparproblem{\maxforestlong~ (\maxforest)}
{
A \gcgraph{} $G$ with representation $f$ and an integer $k$ such that 
${\cal T}$ is a \NCTD{$c\sqrt k$} of $G$ and for any cell $(i,j)$ in $f$, $\vert f^{-1}(i,j)\vert \leq k+2$, 
where $c$ is a constant} {$k$}{Is there subset $W\subseteq V(G)$ such that $G[W]$ is a forest and 
$\vert W \vert \geq \vert V(G)\vert -k$}

\begin{observation}
\label{obs:forest}
Let $(G,f,k,{\cal T})$ be an instance of \maxforest. 
Then $(G,f,k,{\cal T})$ is a \Yes-instance of \maxforest{} if and only if $(G,f,k,{\cal T})$ is 
a \Yes-instance of \probFVS. 
\end{observation}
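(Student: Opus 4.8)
The plan is to prove this equivalence by a direct complementation argument, exploiting the fact that ``induced forest of size at least $|V(G)|-k$'' and ``feedback vertex set of size at most $k$'' are complementary notions in the most literal sense, and that both problems are stated over exactly the same tuple $(G,f,k,{\cal T})$ with the same promised side conditions on $f$ and ${\cal T}$. First I would fix the instance and recall the defining property of a feedback vertex set: a set $S\subseteq V(G)$ is a feedback vertex set of $G$ if and only if $G\setminus S = G[V(G)\setminus S]$ is acyclic, i.e.\ a forest. The whole proof then rests on the bijection $S\leftrightarrow V(G)\setminus S$ on vertex subsets.

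For the forward direction, assume $(G,f,k,{\cal T})$ is a \Yes-instance of \maxforest, witnessed by $W\subseteq V(G)$ with $G[W]$ a forest and $|W|\geq |V(G)|-k$. Put $S=V(G)\setminus W$; then $|S|=|V(G)|-|W|\leq k$ and $G\setminus S = G[W]$ is a forest, so $S$ is a feedback vertex set of size at most $k$, i.e.\ $(G,f,k,{\cal T})$ is a \Yes-instance of \probFVS. For the reverse direction, assume $S$ is a feedback vertex set of $G$ with $|S|\leq k$; put $W=V(G)\setminus S$, so that $G[W]=G\setminus S$ is a forest and $|W|=|V(G)|-|S|\geq |V(G)|-k$, which witnesses that $(G,f,k,{\cal T})$ is a \Yes-instance of \maxforest.

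I do not expect any real obstacle: the side conditions in the definition of \maxforest{} (that ${\cal T}$ is an \NCTD{$c\sqrt k$} of $G$ and $|f^{-1}(i,j)|\leq k+2$ for every cell) are merely promised structural data carried along for the later dynamic programming and play no role in whether a suitable $W$ or $S$ exists. The only point requiring a moment's care is to note explicitly that the two problem statements share the identical input tuple, so that the equivalence is literally a restatement under complementation rather than a reduction between differently-formatted instances.
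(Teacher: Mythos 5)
Your proof is correct and matches the paper's reasoning: the paper states this as an observation justified by exactly the complementation argument you spell out (the paragraph preceding the observation notes that $G$ has a feedback vertex set of size at most $k$ if and only if some $F\subseteq V(G)$ with $|F|\geq |V(G)|-k$ induces a forest). Your explicit two-direction write-up via $S\leftrightarrow V(G)\setminus S$ is just a fuller rendering of the same argument.
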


By Lemma~\ref{lem:fvsFirstPhase} and Observations~\ref{obs:fvslowcell} and \ref{obs:forest}, to prove Theorem~\ref{thm:fvs}, it is sufficient that we prove the following result (which is the focus of the rest of this section).

\begin{lemma}\label{lem:mainfvs}
\maxforest{} on \gcgraphs{} 
can be solved in time $2^{\OO(\sqrt{k}\log k)}\cdot n^{\OO(1)}$.
\end{lemma}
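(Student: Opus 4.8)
The plan is to run a dynamic programming algorithm over the given \NCTD{$c\sqrt k$} ${\cal T}=(T,\beta)$ of $G$, but with a twist: although each bag $\beta(v)$ can have size up to $\OO(k^{1.5})$ (Observation~\ref{obs:fvs2}), we will not store the full "forest type" of the partial solution restricted to a bag. Instead, the key structural fact we must establish first is that there is an optimal induced forest $H=G[W]$ (with $|W|\ge |V(G)|-k$) that interacts with every separator of ${\cal T}$ only $\OO(\sqrt k)$ times. Concretely, since $H$ is a forest with at most $n$ vertices, and each cell is a clique, $H$ contains at most one vertex from some cells and few vertices from others; more importantly, for any edge $\{X,Y\}$ of the tree $T$, the set of edges of $H$ with one endpoint in $\gamma(x)$ and the other outside is $\OO(\sqrt k)$, because such edges must be "captured" inside the bag $\beta(v)$ which is a union of $\OO(\sqrt k)$ cells, each cell contributing $\OO(1)$ backbone-style interaction edges (by Condition~\ref{condition:GridClique2}, a cell interacts with only $\le 24$ other cells, and in a forest at most two edges between a fixed pair of cliques can be "used" without creating a cycle — in fact one has to argue, analogously to Lemma~\ref{lemm:path:shortinteraction}, that we may assume at most a constant number of $H$-edges run between any two fixed cells). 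Thus, after possibly replacing $H$ by another optimal induced forest, for every node $v$ of $T$ the number of vertices of $W\cap\beta(v)$ that have a neighbour in $H$ outside $\beta(v)$ — call these the \emph{boundary vertices} — is $\OO(\sqrt k)$.

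Given this property, the DP state at a node $v$ is: the set $B\subseteq\beta(v)$ of boundary vertices chosen so far (of size $\OO(\sqrt k)$, but drawn from an $\OO(k^{1.5})$-size bag, so there are $\binom{\OO(k^{1.5})}{\OO(\sqrt k)}=2^{\OO(\sqrt k\log k)}$ choices), a partition of $B$ into the connected components of the partial forest induced so far (the usual "forest connectivity" bookkeeping, as in Cut\&Count or Bodlaender--Cygan--Kratsch--Nederlof-style DPs — here a straightforward $2^{\OO(\sqrt k\log k)}$ enumeration of set partitions of $B$ suffices since $|B|=\OO(\sqrt k)$), and the number of vertices of $W$ already committed in $\gamma(v)$. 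The transitions for Leaf, Forget, Introduce-cell and Join nodes are the standard ones for computing a maximum induced forest, except that at an Introduce node we introduce an entire cell $f^{-1}(i,j)$ (a clique of $\le k+2$ vertices) at once: we guess which vertices of that clique enter $W$ and, among those, which become boundary vertices, and we check that adding them (together with the $\OO(1)$ edges from this cell to already-present cells, as recorded via the partition) creates no cycle and is consistent with the component structure — there are only $2^{\OO(\sqrt k\log k)}$ relevant guesses because only $\OO(\sqrt k)$ of the cell's vertices can be boundary and the rest are "internal" and can be handled greedily (an internal vertex of the cell with no outside neighbours is kept iff keeping it does not close a cycle inside the chosen clique-subset, which is a local check). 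The Join node merges two partitions on the same boundary set $B$ and rejects if the union creates a cycle; this is the standard merge and runs in $2^{\OO(\sqrt k\log k)}$ time per pair of compatible states.

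The running time is then $(\text{number of nodes of }T)\times 2^{\OO(\sqrt k\log k)}=n^{\OO(1)}\cdot 2^{\OO(\sqrt k\log k)}$, since each of the $2^{\OO(\sqrt k\log k)}$ states is computed from previously computed states in $2^{\OO(\sqrt k\log k)}$ time (the Join node being the bottleneck, pairing $2^{\OO(\sqrt k\log k)}$ states with $2^{\OO(\sqrt k\log k)}$ states but only over matching boundary sets $B$, so still $2^{\OO(\sqrt k\log k)}$). Correctness follows from a correctness-invariant argument identical in spirit to that in Lemma~\ref{lem:exact_k_cycle_restricted}: by downward induction on the tree we maintain that (i) for the specific optimal forest $H$ fixed above, the state corresponding to $H$'s restriction to $\gamma(v)$ is recorded with the right vertex count, and (ii) every recorded state is realizable by some induced-forest-partial-solution in $G[\gamma(v)]$ with the claimed component structure and vertex count. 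Reading off ${\cal A}[r,\emptyset,\cdot]$ at the root gives the maximum $|W|$, and we answer \Yes{} iff it is at least $|V(G)|-k$.

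The main obstacle is the structural lemma in the first paragraph: proving that an optimal induced forest can be assumed to have only $\OO(\sqrt k)$ boundary vertices at every separator. The naive bound "a forest on $n$ vertices" gives nothing; the argument has to exploit that the bag is a union of $\OO(\sqrt k)$ cells, that inter-cell adjacency is $\OO(1)$-local, and — the delicate part — that one may re-route the forest so that at most a constant number of its edges run between any fixed pair of cells, analogous to the cycle-surgery in Lemma~\ref{lemm:path:shortinteraction} but now respecting the acyclicity and maximality of $W$ rather than a fixed cycle length. Making this re-routing argument precise (in particular, that swapping endpoints of forest edges between two cliques preserves both acyclicity and $|W|$) is where the real work lies; everything downstream is a by-now-standard treewidth DP whose state space happens to be $2^{\OO(\sqrt k\log k)}$ rather than $2^{\OO(\text{width})}$.
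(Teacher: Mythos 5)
There is a genuine gap, and it sits exactly where you yourself flag ``the real work'': your whole plan rests on an unproven structural lemma asserting that some optimal induced forest can be re-routed so that every separator is crossed only $\OO(\sqrt k)$ times, with a surgery argument analogous to Lemma~\ref{lemm:path:shortinteraction} that you never carry out (and whose preservation of acyclicity and of $|W|$ you explicitly leave open). No such exchange argument is needed, and missing this is what makes your proof incomplete: by Condition~\ref{condition:GridClique1} of Definition~\ref{def:GridClique} every cell $f^{-1}(i,j)$ is a clique, so \emph{any} set $W$ with $G[W]$ a forest contains at most $2$ vertices of each cell (three vertices of a clique induce a triangle). Since each bag of the \NCTD{$\OO(\sqrt{k})$} is a union of $\OO(\sqrt k)$ cells, this gives $|W\cap\beta(v)|=\OO(\sqrt k)$ for \emph{every} solution and \emph{every} bag --- not merely $\OO(\sqrt k)$ ``boundary'' vertices of one carefully chosen solution. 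This is the paper's argument: keep the completely standard induced-forest DP (state $=$ the set $U=W\cap\beta(v)$ together with a partition of $U$ into components and the count of vertices used), and observe that only states with $|U|\le \OO(\sqrt k)$ are ever relevant, so the table has $\binom{\OO(k^{1.5})}{\OO(\sqrt k)}\cdot k^{\OO(\sqrt k)}=2^{\OO(\sqrt k\log k)}$ entries per node.

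A secondary consequence of the same oversight is that your introduce-node step is built on a false picture: you propose to guess which of the up to $k+2$ vertices of the introduced clique cell enter $W$ and to handle the ``internal'' ones greedily (``kept iff keeping it does not close a cycle inside the chosen clique-subset''). But at most two vertices of the cell can ever be in $W$, so there is nothing to handle greedily, and as stated the greedy rule is not justified (any three kept vertices of the cell close a triangle, and whether an internal vertex can be kept is not a purely local decision in general). Once you replace your deferred re-routing lemma by the trivial two-vertices-per-clique bound, your DP collapses to the paper's: track the full intersection $U$ with the bag rather than only boundary vertices, and all transitions are the standard ones, with the claimed $2^{\OO(\sqrt k\log k)}\cdot n^{\OO(1)}$ running time.
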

\begin{proof}[Proof sketch]


We explain a DP algorithm which given as input 
$(G,f,k,{\cal T})$ where $G$ is a \gcgraph{} with representation $f$, ${\cal T}=(T,\beta)$ is a 
 \NCTD{$c\sqrt{k}$},  $c$ is a constant and $\vert f^{-1}(i,j)\vert \leq k+2$ for any cell $(i,j)$ of $f$ and 
outputs \Yes{} if there is an induced forest with at least $\vert V(G)\vert-k$ vertices and outputs \No{} otherwise. 
Here we use the term solution for  a vertex subset $S\subseteq V(G)$ with the property  that $G[S]$ is a forest.    
First notice that any solution $S$ contains at most $2$ vertices from $f^{-1}(i,j)$ for any cell $(i,j)$. 
Now, the following claim follows from the fact that ${\cal T}$ is a \NCTD{$c\sqrt{k}$} 
and any solution contain at most $2$ vertices from $f^{-1}(i,j)$ for any cell $(i,j)$. 
\begin{claim}
\label{claim:fvs1}
For any $v\in V(T)$ and any solution $S$, $\vert S\cap \beta(v) \vert \leq 2c\sqrt k$. 
\end{claim}

We first briefly explain what is the table entries in a standard DP 
algorithm for our problem on graphs of bounded treewidth~\cite{cygan2015parameterized}. 
Then we explain that in fact many of the entries we compute in 
the standard DP table is redundant in our case, because of Observation~\ref{obs:fvs2} and Claim~\ref{claim:fvs1}.   
That is, Observation~\ref{obs:fvs2} and Claim~\ref{claim:fvs1}, shows that only $2^{\OO(\sqrt k \log k)} \vert V(G)\vert ^{\OO(1)}$ many states in the DP table are relevant in our case. 
Recall that for any $v\in V(T)$, $\gamma(v)$ denote the union of the bags of $v$ and its descendants.
%
%
The standard DP table for our problem will have an entry indexed 
by $(v,U, U_1\uplus U_2\ldots U_{\ell}=U)$ where $v\in V(T)$, $U\subseteq \beta(v)$. 
The table entry ${\cal A}[v,U, U_1\uplus U_2\ldots U_{\ell}]$ stores the following 
information: the maximum cardinality of a vertex subset $W\subseteq G[\gamma(v)]$ such 
that $W\cap \beta(v)=U$, $G[W]$ is a forest with a set of connected components $\CC$ and for 
any $C\in \CC$, either $V(C)\cap \beta(v)=\emptyset$ or $V(C)\cap \beta(v)= U_i$ for some $i\in [\ell]$.  
Notice that the total number of DP table entries is bounded by $\tw^{\OO(\tw)}\vert V(G)\vert ^{\OO(1)}$ 
where $\tw$ is the width of the tree decomposition ${\cal T}$. One can easily show that the 
computation of the DP table at a  node can be done in time polynomial in the size of the tables of its children. 

By Observation~\ref{obs:fvs2} and Claim~\ref{claim:fvs1}, we know that for any bag $\beta(v)$ in ${\cal T}$, the potential number 
of subsets of $\beta(v)$ which can be part of any solution is at most $2^{\OO(\sqrt k \log k)}$.  
This implies that we only need to compute the DP table entries for indices 
$(v,U, U_1\uplus U_2\ldots U_{\ell}=U)$ where $v\in V(T)$, $U\subseteq \beta(v)$ and $\vert U \vert \leq 2c\sqrt k$. 
Thus, the size of DP table, and hence the time to compute it takes   $2^{\OO(\sqrt k \log k)} n^{\OO(1)}$ time. This concludes the description. 
\end{proof}

%
%
%
%
%


\section{Cycle Packing}\label{sec:cycPack}

In this section, we show that {\sc Cycle Packing} admits a subexponential-time parameterized algorithm. More precisely, we prove the following.

\begin{theorem}\label{thm:cycPack}
{\sc Cycle Packing} on unit disk/square graphs can be solved in time $2^{\OO(\sqrt{k}\log k)}\cdot n^{\OO(1)}$.
\end{theorem}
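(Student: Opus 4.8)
The plan is to mimic, as closely as possible, the two-phase scheme used for \probFVS{} in Section~\ref{sec:fvs}: a bidimensionality step that either answers \Yes{} outright or hands us a clique-grid decomposition with $\OO(\sqrt k)$ cells per bag, followed by a dynamic-programming step over that decomposition whose state space is kept down to $2^{\OO(\sqrt k\log k)}$ by exploiting how simply an optimal cycle packing can be made to behave inside each cell. First I would use Lemma~\ref{lem:unitDisk}/\ref{lem:unitSquare} to move to a \gcgraph{} $G$ with representation $f$, and then apply Corollary~\ref{cor:mainNCTD} with $\ell=200\cdot599^3\sqrt k=\OO(\sqrt k)$ (exactly as in Lemma~\ref{lem:fvsFirstPhase}), chosen so that a $\tfrac{\ell}{100\cdot599^3}\times\tfrac{\ell}{100\cdot599^3}$ grid has side length at least $2\sqrt k+2$. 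If the call reports a grid minor, we answer \Yes: a $t\times t$ grid with $t\ge 2\sqrt k+2$ decomposes into $\lfloor t/2\rfloor^2\ge k$ vertex-disjoint $4$-cycles (tile it by $2\times 2$ blocks), and grid minors are preserved under the relevant operations, so $G$ has $k$ vertex-disjoint cycles. Otherwise we get an \NCTD{$\OO(\sqrt k)$} $\mathcal T=(T,\beta)$ of $G$. Two cheap reductions finish the setup. If some cell has at least $3k$ vertices, or more generally if $\sum_{(i,j)}\lfloor |f^{-1}(i,j)|/3\rfloor\ge k$, answer \Yes: each clique on $m$ vertices contains $\lfloor m/3\rfloor$ vertex-disjoint triangles, and triangles from distinct cells are automatically vertex-disjoint. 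From now on assume this fails; then every cell has fewer than $3k$ vertices, the set $Z$ of \emph{heavy} cells (those with at least $3$ vertices) satisfies $|Z|<k$, and $\sum_{(i,j)\in Z}|f^{-1}(i,j)|<5k$ (since $|f^{-1}(i,j)|\le 3\lfloor|f^{-1}(i,j)|/3\rfloor+2$).

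\textbf{Normal form for solutions.} Among all families of at least $k$ vertex-disjoint cycles in $G$, fix one minimizing the total number of edges. It has exactly $k$ cycles, and no cycle has a chord in $G$: a chord splits a cycle into two arcs, and one arc together with the chord is a strictly shorter cycle on a subset of the same vertices, contradicting edge-minimality. Hence every cycle is an induced cycle, so it uses at most two vertices of any cell (three vertices of a clique would form a chord) and enters each cell at most once (two disjoint visits to a cell would again give a chord). Classify the cycles into \emph{confined} cycles — triangles whose three vertices lie in a single cell, or triangles whose vertices lie in two or three mutually adjacent cells — and \emph{traveling} cycles; confined cycles of a fixed ``shape'' over a fixed (small) set of cells are interchangeable, so the part of the solution made of confined cycles is captured purely by multiplicities. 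For a heavy cell $(i,j)$, the contribution of traveling cycles is a set of at most $|f^{-1}(i,j)|$ pieces, each a path on one or two vertices, and the \emph{type} of such a piece — which among the at most $24$ neighbouring cells each of its two ends reaches into — ranges over a constant-size alphabet. Light cells carry at most two vertices each, so a bag of $\mathcal T$ contains only $\OO(\sqrt k)$ vertices from light cells.

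\textbf{The dynamic program.} Finally I would run a DP over $\mathcal T$ in the style of the standard \probCycPacking{} algorithm on graphs of bounded treewidth, but with a compressed state. For a node $v$ the state records: the number of completed cycles inside $\gamma(v)$; for each cell present in $\beta(v)$, how many of its vertices are already committed to confined cycles, together with the multiset of constant-size types of the traveling pieces it currently carries; and, for each (adjacent) ordered pair of cells in $\beta(v)$, how many open traveling strands currently pass between them — this is what replaces the usual matching on fragment endpoints, the point being that strands of equal type are interchangeable, so counts suffice. Because a bag has $\OO(\sqrt k)$ cells, each of constant cell-degree, and every counter is at most $\mathrm{poly}(k)$ (at most $k$ cycles, fewer than $3k$ vertices per cell), the number of reachable states is $(\mathrm{poly}\,k)^{\OO(\sqrt k)}\cdot n^{\OO(1)}=2^{\OO(\sqrt k\log k)}\cdot n^{\OO(1)}$. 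The leaf/introduce/forget/join transitions can be evaluated within the same bound: forgetting a heavy cell also commits $\lfloor(\text{its vertices not used by traveling cycles})/3\rfloor$ internal triangles to the count and closes the confined-triangle multiplicities over that cell; the join step combines two sub-solutions while forbidding premature cycle closures, again working with the typed strand counts rather than individual fragments. We answer \Yes{} iff the root admits a state with at least $k$ completed cycles; correctness follows from the normal form, as in Lemma~\ref{lem:mainfvs}.

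\textbf{Main obstacle.} The delicate part — the step I expect to be hardest — is precisely the design and analysis of this compressed DP: proving that, after normalization, tracking the traveling cycles needs only $2^{\OO(\sqrt k\log k)}$ states even though a single cell may still carry $\Theta(k)$ solution vertices and $\Theta(k)$ traveling cycles may cross a bag. This hinges on genuinely only having to store, at each bag, $\OO(\sqrt k)$ polynomially-bounded counters (one multiplicity family per cell for confined cycles and one strand count per adjacent cell pair), which in turn relies on the three structural facts above — bags are $\OO(\sqrt k)$ cells, light cells are tiny, and heavy cells, though possibly large, interact with at most $24$ other cells so their boundary behaviour is a typed multiset over a constant alphabet — and on verifying that the join and forget transitions can be carried out on this compressed representation without losing the connectivity information that distinguishes a genuine cycle from a union of paths. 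A secondary, routine point (as in the bidimensional planar case) is fixing the Phase-1 constant so that the grid produced by Corollary~\ref{cor:mainNCTD} really does contain $k$ vertex-disjoint cycles.
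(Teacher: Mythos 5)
Your Phase~1 (grid minor $\Rightarrow$ \Yes, otherwise an \NCTD{$\OO(\sqrt k)$}, plus the ``large cell $\Rightarrow$ \Yes'' reduction) matches the paper. The gap is in Phase~2: the paper's proof hinges on a structural normalization that you never establish and that your plan tries to route around. Namely, the paper shows (Lemma~\ref{lem:cycPackBoudInteract}) that an edge-cross-minimal solution is \emph{simple} -- at most two of its cycles cross any fixed pair, and any fixed triple, of cells -- via an exchange argument that replaces three cycles crossing the same pair/triple by triangles inside the involved cells; this yields Lemma~\ref{lem:boundedIntercationFin}: some optimal solution has only $\OO(1)$ vertices per cell incident to inter-cell edges, hence only $\OO(\sqrt k)$ relevant boundary vertices per bag, and then the \emph{standard} cycle-packing DP, with full endpoint-pairing information restricted to these $\OO(\sqrt k)$ vertices, runs in $2^{\OO(\sqrt k\log k)}$. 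Your normal form (minimizing the total number of edges) only gives chordless cycles using at most two vertices per cell; it does not prevent $\Theta(k)$ traveling cycles from crossing a single cell, which is exactly the situation the paper's exchange lemma eliminates.

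Because of this, the compressed DP you propose -- and flag yourself as the hard step -- does not stand as described. First, a partial cycle in $G[\gamma(v)]$ is a path fragment whose two endpoints may lie in \emph{any} two cells of $\beta(v)$, not in adjacent ones (the fragment can wander through $\gamma(v)\setminus\beta(v)$ and resurface far away), so counters indexed by adjacent cell pairs cannot faithfully describe partial solutions; indexing by all pairs of cells in the bag gives $\OO(k)$ counters whose total can be $\Theta(k)$ without the normalization, and the number of such count vectors is $2^{\Theta(k)}$, not $2^{\OO(\sqrt k\log k)}$. Second, dropping the pairing of fragment endpoints loses precisely the connectivity information the DP needs: two fragments with endpoints $a,c$ in one cell and $b,d$ in another close into \emph{two} cycles if paired $(a,b),(c,d)$ but into \emph{one} cycle if paired $(a,d),(c,b)$, so ``strands of equal type are interchangeable'' is not a bookkeeping remark but a substantive exchange claim -- essentially the content of Lemma~\ref{lem:cycPackBoudInteract}, which the paper proves once on the solution rather than inside every join/forget transition. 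To repair your proof you would need to prove such a lemma (bounding, up to re-routing inside cliques, how many solution cycles cross any pair or triple of cells), at which point you recover the paper's argument and the compressed-count machinery becomes unnecessary.
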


First, we observe that if we find a large grid, we can answer \Yes{} (see also \cite{Demaine:2008mi,cygan2015parameterized}).

\begin{observation}\label{obs:cycPackLargeGrid}
Let $(G,k)$ be an instance of {\sc Cycle Packing} on general graphs. If $G$ contains a $2\sqrt{k}\times 2\sqrt{k}$ grid as a minor, then $(G,k)$ is a \Yes-instance.
\end{observation}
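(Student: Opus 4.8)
The plan is to prove the two elementary facts that make this a standard bidimensionality observation: that the $2\sqrt k\times 2\sqrt k$ grid itself admits a packing of $k$ pairwise vertex-disjoint cycles, and that ``$G$ contains $k$ pairwise vertex-disjoint cycles'' is inherited upward along the minor relation, so that a $\Yes$-instance grid inside $G$ certifies that $(G,k)$ is a $\Yes$-instance.

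For the first fact I would simply tile the grid by $2\times2$ blocks. Group the rows into consecutive pairs $\{1,2\},\{3,4\},\dots$ and the columns likewise; for a row-pair $\{2a-1,2a\}$ and a column-pair $\{2b-1,2b\}$, the four vertices $v_{2a-1,2b-1},v_{2a-1,2b},v_{2a,2b},v_{2a,2b-1}$ induce a $4$-cycle, and blocks coming from different pairs of pairs are vertex-disjoint. Thus a side length of $t$ yields $\lfloor t/2\rfloor^2$ pairwise vertex-disjoint $4$-cycles; taking $t=2\lceil\sqrt k\rceil$ (which is what the later invocations of this observation actually supply) gives $\lceil\sqrt k\rceil^2\ge k$ of them, so the grid alone is a $\Yes$-instance.

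For the second fact I would use the usual ``lifting'' argument for minors. Fix a minor model of the grid $\Gamma$ in $G$: pairwise disjoint connected branch sets $B_x\subseteq V(G)$ for $x\in V(\Gamma)$, with an edge of $G$ between $B_x$ and $B_y$ for every $\{x,y\}\in E(\Gamma)$. Given a cycle $C=x_1x_2\cdots x_r x_1$ in $\Gamma$, choose for each edge $\{x_i,x_{i+1}\}$ of $C$ (indices mod $r$) an edge $e_i=\{a_i,b_i\}\in E(G)$ with $a_i\in B_{x_i}$ and $b_i\in B_{x_{i+1}}$; since $B_{x_i}$ is connected, there is a simple path $P_i$ inside $B_{x_i}$ from $b_{i-1}$ to $a_i$. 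Then $P_1\,e_1\,P_2\,e_2\cdots P_r\,e_r$ is a closed walk, and it is a cycle $\widehat C$ in $G$ because the $P_i$ are simple and live in the pairwise disjoint sets $B_{x_i}$, and $\widehat C\subseteq\bigcup_i B_{x_i}$ together with the $e_i$. Applying this to the $k$ pairwise vertex-disjoint cycles $C_1,\dots,C_k$ of $\Gamma$ produced in the first step: their vertex sets are disjoint subsets of $V(\Gamma)$, hence the branch sets used by $\widehat C_1,\dots,\widehat C_k$ are pairwise disjoint, so $\widehat C_1,\dots,\widehat C_k$ are $k$ pairwise vertex-disjoint cycles in $G$, proving $(G,k)$ is a $\Yes$-instance.

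There is no real obstacle here; the only place needing a touch of care is making the lifting of a single cycle precise so that the glued pieces form a genuine (simple) cycle rather than merely a closed walk, which the branch-set bookkeeping above handles. Alternatively, one may simply invoke the well-known fact that the cycle-packing number is minor-monotone (e.g.\ \cite{Demaine:2008mi,cygan2015parameterized}) together with the computation of the cycle-packing number of a grid, and skip the explicit lifting entirely.
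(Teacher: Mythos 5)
Your proposal is correct, and it supplies exactly the argument the paper leaves implicit: the observation is stated without proof, with only a citation to standard references, and the intended justification is precisely your combination of tiling the $2\sqrt{k}\times 2\sqrt{k}$ grid into vertex-disjoint $2\times 2$ blocks (giving at least $k$ disjoint $4$-cycles) with the minor-monotonicity of cycle packing via branch-set lifting. Your attention to the rounding issue (reading the side length as $2\lceil\sqrt{k}\rceil$) and to why the lifted closed walk is a genuine simple cycle is sound, so there is nothing to add.
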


This observation leads us to the following.

\begin{lemma}\label{lem:cycPackFirstPhase}
Let $(G,O,k)$ be an instance of {\sc Cycle Packing} on unit disk/square graphs. Then, in time $2^{\OO(\sqrt{k}\log k)}\cdot n^{\OO(1)}$, one can either solve $(G,O,k)$ or obtain an equivalent instance $(G,f,k)$ of {\sc Cycle Packing} on \gcgraphs{} together with an \NCTD{$\OO(\sqrt{k})$} of $G$.
\end{lemma}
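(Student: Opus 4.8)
The plan is to mimic the structure of Lemma~\ref{lem:fvsFirstPhase} exactly, only replacing the use of Observation~\ref{obs:fvsLargeGrid} by Observation~\ref{obs:cycPackLargeGrid}. First I would invoke Lemma~\ref{lem:unitDisk} (or Lemma~\ref{lem:unitSquare}, depending on whether the input is a unit disk or a unit square graph) to compute, in polynomial time, a representation $f$ of $G$; this turns the instance $(G,O,k)$ of {\sc Cycle Packing} on unit disk/square graphs into an equivalent instance $(G,f,k)$ of {\sc Cycle Packing} on \gcgraphs, since the graph and the parameter are unchanged. Next I would apply Corollary~\ref{cor:mainNCTD} to $(G,f)$ with the parameter $\ell=200\cdot 599^3\cdot\sqrt{k}=\OO(\sqrt{k})$. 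In time $2^{\OO(\ell)}\cdot n^{\OO(1)}=2^{\OO(\sqrt{k})}\cdot n^{\OO(1)}$ this either certifies that $G$ contains a $\frac{\ell}{100\cdot 599^3}\times\frac{\ell}{100\cdot 599^3}$ grid, i.e. a $2\sqrt{k}\times 2\sqrt{k}$ grid, as a minor, or it produces a \NCTD{$5\ell$}$=$\NCTD{$\OO(\sqrt{k})$} of $G$.

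In the first case, Observation~\ref{obs:cycPackLargeGrid} tells us immediately that $(G,k)$—hence $(G,O,k)$—is a \Yes-instance, so we output \Yes{} and are done. In the second case we have produced exactly the equivalent instance $(G,f,k)$ of {\sc Cycle Packing} on \gcgraphs{} together with the required \NCTD{$\OO(\sqrt{k})$} of $G$, which is what the lemma demands, so there is nothing further to do. The total running time is $2^{\OO(\sqrt{k})}\cdot n^{\OO(1)}$, which is within the claimed $2^{\OO(\sqrt{k}\log k)}\cdot n^{\OO(1)}$ bound.

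There is essentially no obstacle here: the lemma is a direct packaging of Corollary~\ref{cor:mainNCTD} and Observation~\ref{obs:cycPackLargeGrid}, and is the ``first phase'' analogue of Lemma~\ref{lem:fvsFirstPhase}. The only small point worth stating carefully is the choice of constant in $\ell$ so that the grid guaranteed by Corollary~\ref{cor:mainNCTD} is at least $2\sqrt{k}\times 2\sqrt{k}$, matching the hypothesis of Observation~\ref{obs:cycPackLargeGrid}; taking $\ell=200\cdot 599^3\cdot\sqrt{k}$ suffices, just as in the feedback vertex set case. (The real work of the section—a subexponential DP for {\sc Cycle Packing} on \gcgraphs{} given such a decomposition, establishing that there is an optimal packing crossing each separator only $\OO(\sqrt{k})$ times—comes afterwards and is not part of this lemma.)
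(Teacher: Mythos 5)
Your proposal is correct and follows the paper's own proof essentially verbatim: compute a representation via Lemma~\ref{lem:unitDisk} or~\ref{lem:unitSquare}, apply Corollary~\ref{cor:mainNCTD} with $\ell=200\cdot599^3\cdot\sqrt{k}$, answer \Yes{} via Observation~\ref{obs:cycPackLargeGrid} when the $2\sqrt{k}\times2\sqrt{k}$ grid minor is found, and otherwise return the equivalent clique-grid instance with the \NCTD{$\OO(\sqrt{k})$}. No gaps.
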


\begin{proof}
First, by using Lemmata \ref{lem:unitDisk} or \ref{lem:unitSquare}, we obtain a representation $f$ of $G$. Then, by using Corollary \ref{cor:mainNCTD} with $\ell=200\cdot599^3\cdot\sqrt{k}=\OO(\sqrt{k})$, we either correctly conclude that $G$ contains a $2\sqrt{k}\times 2\sqrt{k}$ grid as a minor, or compute an \NCTD{$\OO(\sqrt{k})$} of $G$. In both cases, by Observation \ref{obs:cycPackLargeGrid}, we are done.
\end{proof}

Now, note that if there exists a cell $(i,j)\in[t]\times[t']$ such that $|f^{-1}(i,j)|\geq 3k$, then by Condition \ref{condition:GridClique1} in Definition \ref{def:GridClique}, $G[f^{-1}(i,j)]$ is a clique on at least $3k$ vertices and thus it contains $k$ pairwise vertex-disjoint cycles (triangles). More precisely, we have the following.

\begin{observation}\label{obs:smallCliqueCycPack}
Let $(G,f: V(G)\rightarrow[t]\times[t'],k)$ of {\sc Cycle Packing} on \gcgraphs. Then, if there exists a cell $(i,j)\in[t]\times[t']$ such that $|f^{-1}(i,j)|\geq 3k$, then $(G,f,k)$ is a \Yes-instance.
\end{observation}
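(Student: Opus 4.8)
The plan is to use the clique structure guaranteed by the definition of a \gcgraph{} directly. First I would invoke Condition~\ref{condition:GridClique1} in Definition~\ref{def:GridClique}: since $(i,j)$ is a cell with $|f^{-1}(i,j)|\geq 3k$, the set $f^{-1}(i,j)$ induces a clique in $G$ on at least $3k$ vertices. So it suffices to exhibit $k$ pairwise vertex-disjoint cycles inside a clique $K$ on $3k$ vertices.

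Next I would partition an arbitrary $3k$-element subset of $V(K)$ into $k$ groups $T_1,\dots,T_k$, each of size exactly $3$. Because every pair of vertices in $K$ is adjacent, each $T_r=\{a_r,b_r,c_r\}$ spans a triangle $[a_r b_r c_r a_r]$, which is a cycle in $G$ on $3$ vertices. Since the groups $T_1,\dots,T_k$ are pairwise disjoint, these $k$ triangles are pairwise vertex-disjoint. Hence $G$ contains a set of $k$ pairwise vertex-disjoint cycles, so $(G,f,k)$ is a \Yes-instance.

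There is essentially no obstacle here: the statement is immediate once the clique property of cells is recalled, and the only ``work'' is the trivial observation that a clique on $3k$ vertices decomposes into $k$ disjoint triangles. I would keep the argument to two or three sentences in the final write-up, mirroring the informal justification already sketched before the observation's statement.
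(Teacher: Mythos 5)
Your argument is correct and matches the paper's own justification exactly: the paper also invokes Condition~\ref{condition:GridClique1} of Definition~\ref{def:GridClique} to get a clique on at least $3k$ vertices and then extracts $k$ vertex-disjoint triangles from it. Nothing is missing; the brief write-up you propose is precisely what the paper does.
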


By Lemma~\ref{lem:cycPackFirstPhase} and Observation~\ref{obs:smallCliqueCycPack}, to prove Theorem~\ref{thm:cycPack}, it is sufficient that we prove the following result (which is the focus of the rest of this section).

\begin{lemma}\label{lem:mainCycPack}
{\sc Cycle Packing} on \gcgraphs{} can be solved in time $2^{\OO(\sqrt{k}\log k)}\cdot n^{\OO(1)}$, assuming that the input includes an \NCTD{$\OO(\sqrt{k})$} of $G$, and that for every cell $(i,j)\in[t]\times[t']$, it holds that $|f^{-1}(i,j)|\leq 3k$.
\end{lemma}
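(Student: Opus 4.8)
The plan is to solve \maxforest-style dynamic programming but for \probCycPacking, over the given \NCTD{$\OO(\sqrt{k})$} ${\cal T}=(T,\beta)$ of $G$. The key structural observation is the analogue of Claim~\ref{claim:fvs1}: if $G$ has $k$ pairwise vertex-disjoint cycles, then it has such a packing in which each cycle is ``short'' in a cell-wise sense. First I would argue that we may assume every cycle in an optimal packing uses at most $2$ vertices of any fixed cell $f^{-1}(i,j)$: whenever a cycle $C$ passes through three vertices of a clique $f^{-1}(i,j)$, we can shortcut it (replace the portion of $C$ between two of these vertices that avoids the third by the direct edge) to obtain a strictly shorter cycle on the same vertex set minus some vertices, keeping it vertex-disjoint from the rest of the packing; iterating, every cycle of the packing meets each cell in at most $2$ vertices. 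Actually for cycle packing it is cleaner to additionally observe that we may assume every cycle has length at most, say, $\OO(\sqrt k)$ per bag: since the total number of cells is $\OO(k)$ and each cycle is connected, a cycle spanning few cells is short; more importantly, for counting purposes what we need is that the number of vertices of the \emph{union} of the $k$ cycles that lie in any single bag $\beta(v)$ is $\OO(\sqrt k)$. This follows exactly as in Observation~\ref{obs:fvs2} and Claim~\ref{claim:fvs1}: $\beta(v)$ is a union of $\OO(\sqrt k)$ cells, each cycle meets each cell in $\le 2$ vertices, but across a cut separating $\beta(v)$ from the rest, property~\ref{condition:GridClique2} of the \gcgraph{} bounds the number of ``interface'' cells, hence only $\OO(\sqrt k)$ vertices of the packing inside $\beta(v)$ can have neighbours (in the packing) outside $\beta(v)$.

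Next I would set up the DP. The state at a node $v\in V(T)$ records: (1) the subset $U\subseteq\beta(v)$ of vertices currently used by partially-constructed cycles, with $|U|\le c\sqrt k$ by the bound above; (2) a partition of $U$ into the vertex sets of the partial path-fragments of cycles currently ``open'' at $v$, together with, for each fragment, which of its (at most two) endpoints are still free to be extended (a fragment that is already closed into a full cycle inside $\gamma(v)$ contributes only to the count); (3) the number of cycles already fully completed inside $\gamma(v)$. The value stored is \Yes/\No, indicating whether such a configuration is realizable by a vertex-disjoint collection in $G[\gamma(v)]$. Since $|U|\le c\sqrt k$, the number of set-partitions of $U$ is $2^{\OO(\sqrt k\log k)}$, the endpoint-marking multiplies by $2^{\OO(\sqrt k)}$, and the completed-cycle counter ranges over $[k]$, so the total number of relevant states is $2^{\OO(\sqrt k\log k)}\cdot n^{\OO(1)}$. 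The transitions at Leaf, Introduce (of a cell $f^{-1}(i,j)$, which is a clique of size $\le 3k$), Forget (of a cell), and Join nodes are the standard path-fragment-gluing rules for finding many disjoint cycles on bounded-treewidth graphs; at an Introduce node we additionally enumerate, within the clique $K[f^{-1}(i,j)]$, all ways the $\OO(\sqrt k)$ newly-relevant vertices can be linked into fragments, which by the same counting argument as Claim~\ref{claim:cycle_clique_enu} / the Introduce case of Lemma~\ref{lem:exact_k_cycle_restricted} is $k^{\OO(1)}$ many possibilities (each fragment uses $\le 2$ vertices of the cell, and at most $\OO(\sqrt k)$ fragments touch it). Each transition therefore costs $2^{\OO(\sqrt k\log k)}\cdot n^{\OO(1)}$, and multiplying by the $\OO(n)$ nodes of $T$ gives the claimed running time. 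At the root $r$ with $\beta(r)=\emptyset$ we answer \Yes{} iff some state with the completed-cycle counter equal to $k$ and no open fragments is marked realizable.

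The soundness of the DP is routine tree-decomposition bookkeeping once the state space is fixed; the correctness of the \emph{restriction} to the small state space is the content of the structural lemma, which must be proved carefully. Concretely I would first prove: \emph{if $(G,f,k,{\cal T})$ is a \Yes-instance of \probCycPacking{} on \gcgraphs, then there is a packing $\{C_1,\dots,C_k\}$ of vertex-disjoint cycles such that each $C_t$ meets every cell in at most $2$ vertices}, via the clique-shortcutting argument above, being careful that shortcutting never destroys the cycle (a shortcut of a cycle of length $\ge 3$ through a chord is again a cycle of length $\ge 3$, since each clique has size $\ge 3$ only matters for producing triangles — but here we shortcut, not grow, so the result is a cycle of length $\ge 3$ provided we never shortcut a triangle, which we don't need to). Then, from this cell-wise bound and the \NCTD{} structure, deduce as in Observation~\ref{obs:fvs2}/Claim~\ref{claim:fvs1} that for every $v$ the relevant ``boundary'' part of the packing inside $\beta(v)$ has $\OO(\sqrt k)$ vertices. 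I expect the main obstacle to be exactly this step of showing that the DP only needs small states: unlike \maxforest, where the solution is a single induced subgraph with few vertices per bag, here the packing can be large ($k$ cycles, up to $\Theta(k)$ vertices total in a bag a priori), so one must separate the ``already completed'' cycles (which need no memory beyond a counter) from the $\OO(\sqrt k)$ ``active'' fragments crossing the current separator, and verify that an optimal packing can always be chosen so that the number of active fragments, and their total vertex count inside any bag, is $\OO(\sqrt k)$. Handling the Join node correctly — ensuring fragments from the two children are merged without creating vertex conflicts or prematurely closing cycles, and that the completed-cycle counts add up — is the most delicate part of the transition rules, but it is standard in the disjoint-cycles DP literature and I would cite the analogue used for \probCycPacking{} on bounded-treewidth graphs.
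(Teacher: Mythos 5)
Your overall DP architecture (open fragments with endpoints in the bag, a counter for completed cycles, states restricted to $\OO(\sqrt k)$ boundary vertices) matches the paper's, but there is a genuine gap at exactly the point you flag as ``the main obstacle'': the structural lemma that lets you restrict to small states is not proved, and the argument you do give for it is wrong. You claim the $\OO(\sqrt k)$ bound per bag ``follows exactly as in Observation~\ref{obs:fvs2} and Claim~\ref{claim:fvs1}''. It does not. In the \maxforest{} case the \emph{entire} solution meets each cell in at most $2$ vertices, because three vertices of a clique span a triangle; for \probCycPacking{} the bound of $2$ vertices per cell holds only \emph{per cycle} (your shortcutting argument, which is essentially Observation~\ref{obs:inducedCyc}), while the union of the packing can meet a single cell in $\Theta(k)$ vertices: a cell has up to $3k$ vertices, so up to $\Theta(k)$ distinct cycles of the packing can each pass through the same cell with two vertices, all of them having packing-edges leaving that cell. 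Hence an arbitrary (even optimal) packing can have $\Theta(k)$ ``active'' fragments with endpoints in one cell of $\beta(v)$, and your restricted state space $|U|\le c\sqrt k$ would simply miss the witness; nothing in the \gcgraph{} properties or the NCTD structure alone prevents this.

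What is missing is an exchange/rerouting argument showing that some optimum packing is well-behaved across cell boundaries. The paper proves this in Lemmata~\ref{lem:cycPackBoudInteract} and~\ref{lem:boundedIntercationFin}: take a packing of $k$ induced cycles minimizing the number of inter-cell edges; if three cycles crossed the same pair (or triple) of cells, the pigeonhole on their $\ge 3$ vertices inside those cells lets one replace the offending cycles by triangles contained inside single cells, strictly decreasing the number of crossing edges --- so at most two cycles cross any pair or triple of cells, which bounds the number of vertices per cell incident to crossing edges by a constant ($2304$). Only with this per-cell $\OO(1)$ bound does the per-bag bound $\OO(\sqrt k)$ on fragment endpoints follow (cycles lying entirely inside a cell or inside $\gamma(v)\setminus\beta(v)$ need only the counter, not memory). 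Your proposal names the right target but neither supplies this rerouting step nor any substitute for it, and its asserted derivation from Claim~\ref{claim:fvs1} fails; also note that your stronger claim that the union of the $k$ cycles meets any bag in $\OO(\sqrt k)$ vertices is false outright (a single cell can contain $k$ disjoint triangles), so the statement must be formulated, as in the paper, only for boundary-crossing vertices.
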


Let $(G,f:V(G)\rightarrow[t]\times[t'],k)$ denote the input instance of {\sc Cycle Packing}, and let ${\TT}=(T,\beta)$ denote our \NCTD{$\OO(\sqrt{k})$} of $G$. Note that since ${\TT}$ is an \NCTD{$\OO(\sqrt{k})$}, and since for every $(i,j)\in[t]\times[t']$, it holds that $|f^{-1}(i,j)|\leq 3k$, we also have the following.

\begin{observation}\label{obs:sizeBagCycPack}
For all $v\in V(T)$, it holds that $|\beta(v)|=\OO(k^{1.5})$.
\end{observation}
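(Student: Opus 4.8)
The plan is a direct counting argument that simply combines the two structural bounds already established. Recall that $\TT=(T,\beta)$ is an \NCTD{$\OO(\sqrt{k})$} of $G$; fix the absolute constant $c$ for which $\TT$ is in fact an \NCTD{$c\sqrt{k}$}. By Definition \ref{def:lcliqueTreeDecomp}, for every node $v\in V(T)$ there are at most $c\sqrt{k}$ cells $(i_1,j_1),\ldots,(i_{c\sqrt k},j_{c\sqrt k})\in[t]\times[t']$ such that $\beta(v)=\bigcup_{s=1}^{c\sqrt k}f^{-1}(i_s,j_s)$. In particular, $\beta(v)$ is covered by at most $c\sqrt{k}$ sets of the form $f^{-1}(i,j)$.

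Next I would invoke the standing hypothesis of Lemma \ref{lem:mainCycPack}, namely that $|f^{-1}(i,j)|\leq 3k$ for every cell $(i,j)\in[t]\times[t']$. Combining the two facts by the union bound,
\[
|\beta(v)| \;\leq\; \sum_{s=1}^{c\sqrt k} |f^{-1}(i_s,j_s)| \;\leq\; c\sqrt{k}\cdot 3k \;=\; 3c\cdot k^{1.5} \;=\; \OO(k^{1.5}),
\]
which is exactly the claimed bound for all $v\in V(T)$.

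There is no genuine obstacle here: the only points requiring a moment of care are that the ``$\OO(\sqrt{k})$'' appearing in the width of $\TT$ is a fixed function of $k$ (it does not depend on $v$), so a single constant $c$ works uniformly across all bags, and that the per-cell bound $3k$ likewise holds uniformly over all cells. Both are guaranteed by the hypotheses carried into Lemma \ref{lem:mainCycPack} --- the \NCTD{$\OO(\sqrt{k})$} comes from Lemma \ref{lem:cycPackFirstPhase}, and the bound $|f^{-1}(i,j)|\leq 3k$ from Observation \ref{obs:smallCliqueCycPack}. Multiplying a per-bag cell count of $\OO(\sqrt{k})$ by a per-cell vertex count of $\OO(k)$ then yields $\OO(k^{1.5})$, completing the proof.
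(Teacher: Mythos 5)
Your argument is correct and is exactly the paper's reasoning: the paper states Observation \ref{obs:sizeBagCycPack} as an immediate consequence of the fact that each bag of the \NCTD{$\OO(\sqrt{k})$} is a union of $\OO(\sqrt{k})$ cells and that each cell satisfies $|f^{-1}(i,j)|\leq 3k$, which is precisely your union-bound computation. Nothing is missing.
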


We proceed by considering the ``interaction'' between cells in the context of the manner in which cycles in a solution cross their boundaries. To be precise, by Definition \ref{def:GridClique}, we first observe the following.

\begin{observation}\label{obs:inducedCyc}
Let $C$ be an induced cycle in $G$. Then, there does not exist a cell $(i,j)\in[t]\times[t']$ and two distinct vertices $u,v\in V(C)\cap f^{-1}(i,j)$ such that $\{u,v\}\notin E(C)$. In particular, for every cell $(i,j)\in[t]\times[t']$, exactly one of the following conditions holds.
\begin{enumerate}
\item $V(C)\subseteq f^{-1}(i,j)$.
\item $|V(C)\cap f^{-1}(i,j)|=1$.
\item $|V(C)\cap f^{-1}(i,j)|=2$ and the two vertices in $V(C)\cap f^{-1}(i,j)$ are neighbors in $C$.
\end{enumerate}
\end{observation}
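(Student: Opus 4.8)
The plan is to derive the whole statement from one elementary fact: since $C$ is an \emph{induced} cycle, $G[V(C)]$ contains no edge other than those of $C$, while by Condition~\ref{condition:GridClique1} of Definition~\ref{def:GridClique} every set $f^{-1}(i,j)$ induces a clique in $G$. Hence any two distinct vertices $u,v\in V(C)$ lying in a common cell are adjacent in $G$, and therefore must already be adjacent in $C$ (otherwise $\{u,v\}$ would be a chord of $C$). This is exactly the first assertion: no cell contains two vertices of $V(C)$ that are non-adjacent in $C$.

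For the trichotomy I would fix a cell $(i,j)$ with $V(C)\cap f^{-1}(i,j)\neq\emptyset$ (the three conditions are only meant for cells that meet $V(C)$) and do a case split on $s:=|V(C)\cap f^{-1}(i,j)|$. If $s=1$ we are in case~2; if $s=2$ the two vertices are neighbours in $C$ by the paragraph above, so we are in case~3. If $s\geq 3$, choose three vertices $u,v,w\in V(C)\cap f^{-1}(i,j)$; by the first assertion they are pairwise adjacent in $C$, so $\{u,v\},\{v,w\},\{u,w\}\in E(C)$. Since every vertex of a cycle has exactly two neighbours along that cycle, three mutually $C$-adjacent vertices force $C$ to be the triangle on $\{u,v,w\}$, so $V(C)=\{u,v,w\}\subseteq f^{-1}(i,j)$, i.e.\ case~1. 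Finally, the three cases are pairwise incompatible: a cycle has at least three vertices, so $V(C)\subseteq f^{-1}(i,j)$ forces $s=|V(C)|\geq 3$, which rules out $s\in\{1,2\}$.

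There is essentially no genuine obstacle here — the statement is an immediate consequence of the clique structure of cells together with the chordlessness of induced cycles. The only points needing a little care are the degenerate cases: one should invoke the convention (fixed in Section~\ref{sec:prelim}) that cycles in this paper have length at least $3$, so that a cycle containing a triangle as a subgraph must itself be that triangle, and one should read the ``exactly one of the following'' claim as quantified over cells actually intersecting $V(C)$.
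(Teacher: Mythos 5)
Your proof is correct and matches the paper's reasoning: the paper states this as an immediate consequence of Definition~\ref{def:GridClique} (cells induce cliques, so a chordless cycle cannot contain two non-adjacent vertices in one cell) and gives no further argument, and your case analysis on $|V(C)\cap f^{-1}(i,j)|$ is exactly the routine verification of that claim. Your two caveats (reading the trichotomy over cells meeting $V(C)$, and cycles having at least three vertices) are the standard implicit conventions and do not indicate any gap.
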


Next, note that given a set ${\cal C}$ of pairwise vertex-disjoint cycles and a cycle $C\in{\cal C}$ that is not an induced cycle in $G$, by replacing $C$ in ${\cal C}$ by an induced cycle in $G[V(C)]$, we obtain another set of pairwise vertex-disjoint cycles. Thus, we have the following.

\begin{observation}\label{obs:findInducedCycs}
If $(G,f,k)$ is a \Yes-instance, then $G$ contains a set $\cal C$ of $k$ pairwise-disjoint {\em induced} cycles.
\end{observation}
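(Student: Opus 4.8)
The statement to prove is Observation~\ref{obs:findInducedCycs}: if $(G,f,k)$ is a \Yes-instance of {\sc Cycle Packing} on \gcgraphs, then $G$ contains a set $\cal C$ of $k$ pairwise vertex-disjoint \emph{induced} cycles. The plan is to start from an arbitrary witnessing solution and replace each of its cycles, one at a time, by an induced cycle living inside the same vertex set, which keeps the cycles pairwise vertex-disjoint because we never add new vertices.

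More concretely, since $(G,f,k)$ is a \Yes-instance, fix a set ${\cal C}=\{C_1,\dots,C_k\}$ of $k$ pairwise vertex-disjoint cycles in $G$. For each $i\in[k]$, consider the induced subgraph $G[V(C_i)]$; it contains the cycle $C_i$, hence in particular it is not a forest. A standard graph-theoretic fact is that any graph containing a cycle contains an induced cycle: take a shortest cycle $C_i'$ in $G[V(C_i)]$; if $C_i'$ had a chord, that chord would split it into two strictly shorter cycles, contradicting minimality, so $C_i'$ is chordless, i.e.\ induced in $G[V(C_i)]$. Since $G[V(C_i)]$ is an induced subgraph of $G$, the cycle $C_i'$ is also induced in $G$. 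Define ${\cal C}'=\{C_1',\dots,C_k'\}$.

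It remains to observe that the cycles in ${\cal C}'$ are pairwise vertex-disjoint: for each $i$ we have $V(C_i')\subseteq V(C_i)$, and the sets $V(C_1),\dots,V(C_k)$ are pairwise disjoint by the choice of ${\cal C}$, so $V(C_i')\cap V(C_j')\subseteq V(C_i)\cap V(C_j)=\emptyset$ whenever $i\neq j$. Therefore ${\cal C}'$ is a set of $k$ pairwise vertex-disjoint induced cycles in $G$, which is exactly what the observation claims. (This is the content already sketched in the paragraph preceding the observation in the excerpt — replacing a non-induced cycle $C$ by an induced cycle in $G[V(C)]$.)

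There is essentially no obstacle here; the only point that needs a one-line justification is the folklore fact that a graph with a cycle has an induced cycle, which we get from the shortest-cycle argument above. One could alternatively invoke it without proof as standard. The statement is phrased as an \emph{observation} precisely because it is immediate once one notes that shrinking each cycle to an induced subcycle only removes vertices and hence preserves pairwise disjointness.
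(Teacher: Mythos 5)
Your proof is correct and follows exactly the paper's approach: the paper justifies this observation by the same replacement argument, swapping each non-induced cycle $C$ for an induced cycle inside $G[V(C)]$, which preserves pairwise disjointness since no new vertices are introduced. Your additional shortest-cycle justification of the folklore fact that a graph containing a cycle contains an induced one is a harmless, correct elaboration of what the paper leaves implicit.
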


\begin{definition}
Given two distinct cells $(i,j),(i',j')\in [t]\times[t']$, we say that $C$ {\em crosses} $((i,j),(i',j'))$ if there exist (not necessarily distinct) $u,v\in f^{-1}(i,j)$ and distinct $w,r\in f^{-1}(i',j')$ such that $\{u,w\},\{v,r\}\in E(C)$. Moreover, we say that $C$ {\em crosses} $\{(i,j),(i',j')\}$ if it crosses at least one of the pairs $((i,j),(i',j'))$ and $((i',j'),(i,j))$.
\end{definition}

\begin{definition}
Given three distinct cells $(i_1,j_1),(i_2,j_2),(i_3,j_3)\in [t]\times[t']$, we say that $C$ {\em crosses} $((i_1,j_1),(i_2,j_2),(i_3,j_3))$ if there exist $u\in f^{-1}(i_1,j_1)$, (not necessarily distinct) $v,w\in f^{-1}(i_2,j_2)$ and $r\in f^{-1}(i_3,j_3)$ such that $\{u,v\},\{w,r\}\in E(C)$. Moreover, we say that $C$ {\em crosses} $\{(i_1,j_1),(i_2,j_2),(i_3,j_3)\}$ if it crosses at least one of the tuples in $\{((i_s,j_s),(i_r,j_r),(i_t,j_t)): \{s,r,t\}=\{1,2,3\}\}$.
\end{definition}

Next, we use the definitions above to capture the set of cycles which we would like detect (if a solution exists).

\begin{definition}
A set $\cal C$ of pairwise vertex-disjoint induced cycles is {\em simple} if it satisfies the following conditions.
\begin{itemize}
\item For every two distinct cells $(i,j),(i',j')\in [t]\times[t']$, there exist at most two cycles in $\cal C$ that cross $\{(i,j),(i',j')\}$.
\item For every three distinct cells $(i_1,j_1),(i_2,j_2),(i_3,j_3)\in [t]\times[t']$, there exist at most two cycles in $\cal C$ that cross $\{(i_1,j_1),(i_2,j_2),(i_3,j_3)\}$.
\end{itemize}
\end{definition}

Given a cycle $C$ (in $G$), denote cross$(C)=\{\{u,v\}\in E(C): f(u)\neq f(v)\}$, and given a set $\cal C$ of cycles, denote cross$({\cal C})=\displaystyle{\cup_{C\in{\cal C}}\mathrm{cross}(C)}$. Next, we show that we can focus on the deciding whether a simple set, rather than a general set, of $k$ pairwise-disjoint cycles exists.

\begin{lemma}\label{lem:cycPackBoudInteract}
If $(G,f,k)$ is a \Yes-instance, then $G$ contains a simple set of $k$ pairwise-disjoint induced cycles.
\end{lemma}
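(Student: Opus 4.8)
The plan is to start from a set $\mathcal{C}$ of $k$ pairwise vertex-disjoint cycles that is \emph{extremal}: first it minimizes $\sum_{C \in \mathcal{C}}|V(C)|$, and, subject to that, it minimizes $|\mathrm{cross}(\mathcal{C})|$. Such a $\mathcal{C}$ exists because $(G,f,k)$ is a \Yes-instance. The first observation is that every $C\in\mathcal{C}$ is an induced cycle: a chord of $C$ splits $C$ into two cycles, each with at least $3$ and fewer than $|V(C)|$ vertices, and replacing $C$ by the shorter one keeps $k$ pairwise-disjoint cycles while strictly decreasing $\sum|V(C)|$ — a contradiction. Thus $\mathcal{C}$ is a set of $k$ pairwise-disjoint induced cycles, and by Observation~\ref{obs:inducedCyc} each cycle meets every cell in at most $2$ vertices, the two being consecutive on the cycle when there are two. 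It therefore suffices to prove that this extremal $\mathcal{C}$ is simple; I do so by contradiction via local rerouting, always producing a new family $\mathcal{C}'$ of $k$ pairwise-disjoint cycles with $\sum_{C'}|V(C')|=\sum_{C}|V(C)|$ but $|\mathrm{cross}(\mathcal{C}')|<|\mathrm{cross}(\mathcal{C})|$. (If some cycle of $\mathcal{C}'$ is non-induced this is harmless — passing to an induced subcycle only lowers $\sum|V|$, contradicting minimality even more strongly — so I need not worry about induced-ness while rerouting.)

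For the pair condition, suppose three cycles cross $\{(i,j),(i',j')\}$. By pigeonhole, two of them, say $C_1,C_2$, cross the same ordered pair, w.l.o.g.\ $((i,j),(i',j'))$; then each $C_s$ has exactly two vertices $w_s,r_s\in f^{-1}(i',j')$, consecutive on $C_s$, whose other cycle-neighbors $u_s,v_s$ lie in $f^{-1}(i,j)$, so $C_s=u_s-w_s-r_s-v_s-P_s$ where $P_s$ is the $v_s$–$u_s$ path avoiding $f^{-1}(i',j')$, which (by Observation~\ref{obs:inducedCyc}) meets $f^{-1}(i,j)$ only in $u_s,v_s$. If at least one $P_s$ is non-trivial, say $P_1$, I reroute to $D_1$, the $4$-cycle $w_1-r_1-w_2-r_2-w_1$ inside the clique $f^{-1}(i',j')$, and $D_2=P_1\,[v_1v_2]\,\overleftarrow{P_2}\,[u_2u_1]$, a cycle through $V(P_1)\cup V(P_2)$ of length $\ge3$ since $P_1$ is non-trivial; this deletes the four cross-edges $u_1w_1,r_1v_1,u_2w_2,r_2v_2$, creates none, and $V(D_1)\cup V(D_2)=V(C_1)\cup V(C_2)$, the desired contradiction. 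The remaining case is that $C_1,C_2$ are both triangles $\{u_s,w_s,r_s\}$ with $u_s\in f^{-1}(i,j)$; then I bring in a third cycle $C_3$ crossing $\{(i,j),(i',j')\}$. If $C_3$ crosses either ordered pair with a non-trivial path, I reroute $C_3$ with one of $C_1,C_2$ as above; and if $C_3$ is also a triangle, then $C_1,C_2,C_3$ use three distinct vertices of $f^{-1}(i,j)$ and six distinct vertices of $f^{-1}(i',j')$, so I replace them by the triangle on the three $f^{-1}(i,j)$-vertices together with two triangles partitioning the six $f^{-1}(i',j')$-vertices — three cycles, unchanged total size, and $\mathrm{cross}$ drops by at least $6$.

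For the triple condition, suppose three cycles cross a triple $\{(i_1,j_1),(i_2,j_2),(i_3,j_3)\}$. Each crossing designates a \emph{middle} cell — the one in which the cycle holds a vertex (or two consecutive vertices) adjacent along the cycle to vertices of each of the other two cells — with at most two vertices there. If two of the three cycles share a middle cell, the rerouting is exactly the pair-case argument, swapping through the clique of the common middle cell. If the three middles are pairwise distinct, then each of $C_1,C_2,C_3$ meets all three cells, so each cell contains a vertex of each cycle; if moreover all three are triangles I replace them by the three ``column'' triangles, one per cell, keeping the total size and killing all six crossing edges, and otherwise some cycle has a non-trivial ``side'' path, which lets me reroute it with one of the others around a commonly used cell just as before. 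In every case I reach the promised contradiction, so $\mathcal{C}$ is simple, proving the lemma.

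The main obstacle is the bookkeeping in the degenerate sub-cases — when the cycles involved are triangles, or when a crossing cycle enters and leaves a cell at a single vertex — since the naive two-cycle swap can then collapse to a length-$2$ ``cycle''. These are precisely the configurations where one must invoke the third crossing cycle and re-partition the vertices lying inside the (two or three) cliques involved, using that a clique on $\ge3$ (respectively $\ge6$) of the solution vertices always splits into triangles. Checking that each re-partition yields valid, pairwise vertex-disjoint cycles covering exactly the same vertex set while strictly decreasing the number of cross-edges is the technical heart of the argument.
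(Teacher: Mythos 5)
Your overall strategy is the same as the paper's (take an extremal family and do exchange arguments exploiting that each cell $f^{-1}(i,j)$ is a clique), and your extremal choice (lexicographically: total size, then $|\mathrm{cross}(\cdot)|$) together with the chord argument for induced-ness is fine. The problem is that your finer ``rerouting'' surgery is not analyzed in all configurations, and in some of them the step you describe fails. In the pair case, your last sub-case asserts that if $C_1,C_2,C_3$ are all triangles then they use three vertices of $f^{-1}(i,j)$ and six of $f^{-1}(i',j')$. That presumes all three triangles have the same orientation. But $C_3$ may be a triangle with \emph{two} vertices in $f^{-1}(i,j)$ and one in $f^{-1}(i',j')$ (this is exactly the situation where $C_3$ crosses only the ordered pair $((i',j'),(i,j))$, with a trivial path), and then your stated replacement (one triangle plus two triangles on six vertices) does not apply. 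The repair is easy --- pair $C_3$ with $C_1$, whose union has three vertices in each of the two cells, and replace them by two single-cell triangles --- but as written the case is simply missed.

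The triple condition is the more serious gap. First, when two of the three cycles share a middle cell you claim the rerouting is ``exactly the pair-case argument,'' but if each of those two cycles has only \emph{one} vertex in the common middle cell, that cell contains just two vertices of the pair, so the within-cell cycle $D_1$ cannot be formed; the swap would fuse two cycles into one and drop the count below $k$. Second, when the middles are pairwise distinct and not all three cycles are triangles, the assertion that a non-trivial side path ``lets me reroute it with one of the others around a commonly used cell just as before'' is unjustified: if you delete from a cycle its vertices in a cell $X$ that is \emph{not} the middle of that cycle's crossing, the endpoints of the leftover path are the cycle-neighbours of those vertices, and nothing (in particular not Observation~\ref{obs:inducedCyc}) forces them to lie in the other two cells of the triple --- they may lie in cells outside the triple, so the clique edges you need to close the new cycle $D_2$ need not exist in $G$. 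The paper sidesteps all of these degeneracies with a blunter exchange: since every offending cycle has at least one vertex in each cell involved, it discards the two or three offending cycles wholesale and re-draws one triangle inside each relevant cell from their own vertex sets (with a short counting case analysis guaranteeing $3$, respectively $6$, available vertices per cell in the pair case). That keeps exactly $k$ disjoint cycles, never increases total size, adds no crossing edges, and strictly decreases $|\mathrm{cross}(\cdot)|$ in every configuration. Your closing paragraph gestures at precisely this re-partitioning, but the proof as written does not carry it out, so the triple condition (and the mixed-orientation sub-case of the pair condition) is not actually established.
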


\begin{proof}
Suppose that $(G,f: V(G)\rightarrow[t]\times[t'],k)$ is a \Yes-instance. Next, let $\cal C$ denote a set of $k$ pairwise vertex-disjoint induced cycles that minimizes $|\mathrm{cross}({\cal C})|$ among all such sets of cycles (the existence of at least one such set of induced cycles is guaranteed by Observation~\ref{obs:findInducedCycs}). We will show that $\cal C$ is simple. In what follows, we implicitly rely on Condition \ref{condition:GridClique1} in Definition~\ref{def:GridClique}.

First, suppose that there exist two distinct cells $(i,j),(i',j')\in [t]\times[t']$ and three cycles in $\cal C$ that cross $\{(i,j),(i',j')\}$. Let $C_1,C_2$ and $C_3$ denote these three cycles. Then, at least one of the three following conditions is true.
\begin{enumerate}
\item There exist distinct $s,t\in\{1,2,3\}$ such that $|(V(C_s)\cup V(C_t))\cap f^{-1}(i,j)|\geq 3$ and $|(V(C_s)\cup V(C_t))\cap f^{-1}(i',j')|\geq 3$: In this case, we replace $C_s$ and $C_t$ in ${\cal C}$ by some cycle on three vertices in $G[(V(C_s)\cup V(C_t))\cap f^{-1}(i,j)]$ and some cycle on three vertices in $G[(V(C_s)\cup V(C_t))\cap f^{-1}(i',j')]$. We thus obtain a set of $k$ pairwise vertex-disjoint cycles, ${\cal C}'$, such that $|\mathrm{cross}({\cal C}')|<|\mathrm{cross}({\cal C})|$, which is a contradiction to the choice of $\cal C$.

\item $|(V(C_1)\cup V(C_2)\cup V(C_3))\cap f^{-1}(i,j)|=3$ and $|(V(C_1)\cup V(C_2)\cup V(C_3))\cap f^{-1}(i',j')|\geq 6$: In this case, we replace $C_1, C_2$ and $C_3$ in ${\cal C}$ by some cycle on three vertices in $G[(V(C_1)\cup V(C_2)\cup V(C_3))\cap f^{-1}(i,j)]$ and two vertex-disjoint cycles, each on three vertices, in $G[(V(C_1)\cup V(C_2)\cup V(C_3))\cap f^{-1}(i',j')]$. We thus obtain a set of $k$ pairwise vertex-disjoint cycles, ${\cal C}'$, such that $|\mathrm{cross}({\cal C}')|<|\mathrm{cross}({\cal C})|$, which is a contradiction to the choice of $\cal C$.

\item $|(V(C_1)\cup V(C_2)\cup V(C_3))\cap f^{-1}(i,j)|\geq 6$ and $|(V(C_1)\cup V(C_2)\cup V(C_3))\cap f^{-1}(i',j')|=3$: This case is symmetric to the previous one.
\end{enumerate}

Second, suppose that there exist three distinct cells $(i_1,j_1),(i_2,j_2),(i_3,j_3)\in [t]\times[t']$ and three cycles in $\cal C$ that cross $\{(i_1,j_1),(i_2,j_2),(i_3,j_3)\}$. Let $C_1,C_2$ and $C_3$ denote these three cycles. Then, it holds that $|(V(C_1)\cup V(C_2)\cup V(C_3))\cap f^{-1}(i_1,j_1)|\geq 3$, $|(V(C_1)\cup V(C_2)\cup V(C_3))\cap f^{-1}(i_2,j_2)|\geq 3$ and $|(V(C_1)\cup V(C_2)\cup V(C_3))\cap f^{-1}(i_3,j_3)|\geq 3$. We replace $C_1, C_2$ and $C_3$ in ${\cal C}$ by some cycle on three vertices in $G[(V(C_1)\cup V(C_2)\cup V(C_3))\cap f^{-1}(i_1,j_1)]$, some cycle on three vertices in $G[(V(C_1)\cup V(C_2)\cup V(C_3))\cap f^{-1}(i_2,j_2)]$, and some cycle on three vertices in $G[(V(C_1)\cup V(C_2)\cup V(C_3))\cap f^{-1}(i_3,j_3)]$.
\end{proof}

Now, we examine the information given by Lemma~\ref{lem:cycPackBoudInteract} to extract a form that will be easier for us to exploit. To this end, we need the following. Given a set $\cal C$ of cycles and a cell $(i,j)\in[t]\times[t']$, denote cross$({\cal C},i,j) = (\bigcup\mathrm{cross}({\cal C}))\cap f^{-1}(i,j)$ (note that $\bigcup\mathrm{cross}({\cal C})$ is the set of every vertex that is an endpoint of an edge in $\mathrm{cross}({\cal C})$).

\begin{lemma}\label{lem:boundedIntercationFin}
If $(G,f: V(G)\rightarrow[t]\times[t'],k)$ is a \Yes-instance, then $G$ contains a set $\cal C$ of $k$ pairwise-disjoint induced cycles such that for every cell $(i,j)\in[t]\times[t']$, it holds that $|\mathrm{cross}({\cal C},i,j)|\leq 2304=\OO(1)$.
\end{lemma}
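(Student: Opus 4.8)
The plan is to take a simple set $\mathcal{C}$ of $k$ pairwise vertex-disjoint induced cycles, which exists by Lemma~\ref{lem:cycPackBoudInteract}, and to show that this set already satisfies the conclusion. Fix a cell $(i,j)\in[t]\times[t']$, and call a cycle $C\in\mathcal{C}$ \emph{relevant} if it contains an edge with exactly one endpoint in $f^{-1}(i,j)$; equivalently, $V(C)$ meets $f^{-1}(i,j)$ but is not contained in it. Every vertex of $\mathrm{cross}(\mathcal{C},i,j)$ lies on a relevant cycle, and by Observation~\ref{obs:inducedCyc} a relevant cycle $C$ satisfies $|V(C)\cap f^{-1}(i,j)|\in\{1,2\}$. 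Hence $|\mathrm{cross}(\mathcal{C},i,j)|$ is at most twice the number of relevant cycles, so it suffices to bound the latter by $\OO(1)$.

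First I would describe the structure of a relevant cycle. If $V(C)\cap f^{-1}(i,j)=\{v\}$, then the two neighbors of $v$ on $C$ are distinct and both lie outside $f^{-1}(i,j)$; if $V(C)\cap f^{-1}(i,j)=\{v_1,v_2\}$, then by Observation~\ref{obs:inducedCyc} the edge $v_1v_2$ is in $C$, and the neighbor of $v_1$ on $C$ other than $v_2$ and the neighbor of $v_2$ on $C$ other than $v_1$ both lie outside $f^{-1}(i,j)$. In both cases $C$ has \emph{exactly two} edges with one endpoint in $f^{-1}(i,j)$ and one endpoint outside, and by Condition~\ref{condition:GridClique2} in Definition~\ref{def:GridClique} each of these edges enters one of the at most $24$ cells $(i',j')\neq(i,j)$ with $|i-i'|\leq 2$ and $|j-j'|\leq 2$. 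Thus every relevant $C$ determines a pair of such cells $(i_1,j_1),(i_2,j_2)$, possibly equal. A short case analysis then shows that if $(i_1,j_1)=(i_2,j_2)=(i',j')$ then $C$ crosses the pair $\{(i,j),(i',j')\}$, and if $(i_1,j_1)\neq(i_2,j_2)$ then $C$ crosses the triple $\{(i,j),(i_1,j_1),(i_2,j_2)\}$; in each case one exhibits the two edges required by the relevant definition of crossing explicitly, using that these definitions allow the two endpoints chosen in a single cell to coincide (which is exactly what is needed for the degenerate configuration in which $C$ is a triangle with two vertices in $f^{-1}(i,j)$ and one vertex outside).

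Finally I would use the two defining properties of a simple set: at most two cycles of $\mathcal{C}$ cross any fixed pair of distinct cells, and at most two cross any fixed triple of distinct cells. Encoding the two exit cells of a relevant cycle as an ordered pair over the $\leq 24$ candidate cells yields at most $24^2$ possibilities, and each is realized by at most two relevant cycles; hence there are at most $2\cdot 24^2$ relevant cycles, so $|\mathrm{cross}(\mathcal{C},i,j)|\leq 2\cdot 2\cdot 24^2 = 2304$. The main obstacle is the case analysis of the middle step: one must verify that a relevant cycle always has exactly two leaving edges — this is where Observation~\ref{obs:inducedCyc}, forcing two vertices of an induced cycle that lie in one cell to be adjacent on the cycle, is indispensable — and that every such cycle genuinely crosses the associated pair or triple, including the triangle configuration, which is captured only because the notion of crossing a pair permits repeated vertices in the other cell.
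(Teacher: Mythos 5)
Your proposal is correct and follows essentially the same route as the paper: take the simple family from Lemma~\ref{lem:cycPackBoudInteract}, note via Observation~\ref{obs:inducedCyc} that each cycle meeting the cell's boundary contributes at most two vertices and has exactly two exit edges, charge each such cycle to the pair or triple of cells it crosses (at most $24$ neighbouring cells by Condition~\ref{condition:GridClique2}), and use the two-cycles-per-pair/triple bound from simplicity. Your exit-cell bookkeeping is just a repackaging of the paper's index sets ${\cal I},{\cal I}'$ and yields the same constant $2304$.
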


\begin{proof}
Suppose that $(G,f,k)$ is a \Yes-instance. By Lemma~\ref{lem:cycPackBoudInteract}, there exists a simple set $\cal C$ of $k$ pairwise-disjoint induced cycles. Let $(i,j)\in[t]\times[t']$ be some cell. Given a cell $(i',j')\in[t]\times[t']\setminus\{(i,j)\}$, denote ${\cal C}(i',j')=\{C\in{\cal C}: C$ crosses $\{(i,j),(i',j')\}\}$. Moreover, given cells $(i_2,j_2),(i_3,j_3)\in[t]\times[t']\setminus\{(i,j)\}$, denote ${\cal C}(i_2,j_2,i_3,j_3)=\{C\in{\cal C}: C$ crosses $\{(i,j),(i_2,j_2),(i_3,j_3)\}\}$. Now, we define two sets of indices:
\begin{itemize}
\item ${\cal I}=\{(i',j')\in[t]\times[t']\setminus\{(i,j)\}: {\cal C}(i',j')\neq\emptyset\}$.
\item ${\cal I}'=\{((i_2,j_2),(i_3,j_3))\in([t]\times[t']\setminus\{(i,j)\})\times([t]\times[t']\setminus\{(i,j),(i_2,j_2)\}): {\cal C}(i_2,j_2,i_3,j_3)\neq\emptyset\}$.
\end{itemize}

Then, by Observation \ref{obs:inducedCyc} and Lemma~\ref{lem:cycPackBoudInteract}, it holds that
\[\begin{array}{ll}
|\mathrm{cross}({\cal C},i,j)| &\leq \displaystyle{2(|\bigcup_{(i',j')\in{\cal I}}{\cal C}(i',j')|+|\bigcup_{((i_2,j_2),(i_3,j_3))\in{\cal I}'}{\cal C}(i_2,j_2,i_3,j_3)|)}\\
& \leq 4(|{\cal I}|+|{\cal I}'|).
\end{array}\]

Note that $|\{(i',j')\in[t]\times[t']\setminus\{(i,j)\}~|~|i-i'|\leq 2,|j-j'|\leq 2\}|\leq 24$. Thus, by Condition \ref{condition:GridClique2} in Definition \ref{def:GridClique}, we have that $|{\cal I}|\leq 24$ and $|{\cal I}'|\leq 24\cdot 23=552$. Therefore, $\mathrm{cross}({\cal C},i,j)|\leq 4(24+552)=2304$.
\end{proof}

We are now ready to prove Lemma~\ref{lem:mainCycPack}. Except for the arguments where we crucially rely on Lemma~\ref{lem:boundedIntercationFin} and the fact the we have an $\OO(\sqrt{k})$-NCTD and not some general nice tree decomposition, the description of the DP is standard (see, e.g., \cite{cygan2015parameterized}). Thus, we only give a sketch of the proof. 

\subparagraph*{Proof sketch of Lemma~\ref{lem:mainCycPack}.} Let us first examine a standard DP table $\cal A$ to solve {\sc Cycle Packing} when the parameter is $\tw(G)$. Here, we have an entry ${\cal A}[v,{\cal Z},k']$ for every node $v\in V(T)$, multiset $\cal Z$ of subsets of sizes 1 or 2 of $\beta(v)$ and nonnegative integer $k'\leq k$. Moreover, each set of size 1 in $\cal Z$ has only one occurrence and its vertex does not appear in any set of size 2 in $\cal Z$, and every vertex in $\beta(v)$ appears in at most two sets in $\cal Z$.
Each such entry stores either 0 or 1. The value is 1 if an only if there exist a set ${\cal S}$ of $k'$ pairwise vertex-disjoint cycles in $G[\gamma(v)]$ and a set ${\cal P}$ of internally pairwise vertex-disjoint paths in $G[\gamma(v)]$ such that the following conditions are satisfied.
\begin{itemize}
\item $(\bigcup_{C\in{\cal S}}V(C))\cap(\bigcup_{P\in{\cal P}}V(P))=\emptyset$.
\item On the one hand, for every cycle $C\in{\cal C}$, it holds that $|V(C)\cap \beta(v)|\leq 1$ and if $|V(C)\cap \beta(v)|=1$ then there exists a set in $\cal Z$ that is equal to $V(C)\cap \beta(v)$. On the other hand, if $\cal Z$ contains a set of size 1, then there exists a cycle $C\in {\cal C}$ such that $V(C)\cap\beta(v)$ equals this set.
\item On the one hand, for every path $P\in{\cal P}$, it holds that $P$ contains at least three vertices, both endpoints of $P$ belong to a distinct occurrence of a set in $\cal Z$ (of size 2), and none of the internal vertices of $P$ belongs to $\beta(v)$. On the other hand, for every occurrence $X$ of a set of size 2 in $\cal Z$, there exists a distinct path $P$ in $\cal P$ such that the set containing the two endpoints of $P$ is equal to $X$.
\end{itemize}

The entry ${\cal A}[v,{\cal Z},k']$ can be computed by examining the all entries ${\cal A}[u,\widehat{\cal Z},\widehat{k}]$ where $u$ is a child of $v$ in $T$ (recall that $v$ can have at most two children). At the end of the computation of $\cal A$, we conclude that the input instance is a \Yes-instance if and only if ${\cal A}[r,\emptyset,k]$ contains 1 where $r$ is the root of $T$. By Observation~\ref{obs:sizeBagCycPack}, we deduce that $\cal A$ contains $2^{\OO(k\log k)}\cdot n$ entries, where each entry can be computed in time $2^{\OO(k\log k)}$.

We claim that for every $v\in V(T)$, it is sufficient to compute only $2^{\OO(\sqrt{k}\log k)}$ entries. More precisely, for every $v\in V(T)$, it is sufficient to compute only entries ${\cal A}[v,{\cal Z},k']$ such that $|\bigcup{\cal Z}|=\OO(\sqrt{k})$ (there are only $2^{\OO(\sqrt{k}\log k)}$ such entries). Indeed, suppose that the input instance is a \Yes-instance. Then, by Lemma~\ref{lem:boundedIntercationFin}, there exists a set $\cal C$ of $k$ pairwise vertex-disjoint induced cycles such that for every cell $(i,j)\in[t]\times[t']$, it holds that $|\mathrm{cross}({\cal C},i,j)|=\OO(1)$. Now, we sketch the main arguments that show that for every $v\in V(T)$, we still have an entry that ``captures'' $\cal C$ (as explained below) and we are able to compute it in time $2^{\OO(\sqrt{k}\log k)}$, which would imply that eventually, we would still be able to deduce that ${\cal A}[r,\emptyset,k]$ contains 1. For this purpose, consider some $v\in V(T)$. First, we notice that since for every cell $(i,j)\in[t]\times[t']$, it holds that $|\mathrm{cross}({\cal C},i,j)|=\OO(1)$, by Observation~\ref{obs:inducedCyc}, and since $\TT$ is an \NCTD{$\OO(\sqrt{k})$}, we have that there exists a set $U$ of at most $\OO(\sqrt{k})$ vertices in $\beta(v)$ such that every cycle $C\in{\cal C}$ satisfies at least one of the following conditions.
\begin{enumerate}
\item $V(C)\cap \beta(v)\subseteq U$
\item $V(C)\subseteq\gamma(v)\setminus\beta(v)$.
\item $V(C)\subseteq V(G)\setminus\gamma(v)$.
\end{enumerate}

Now, we let $\cal S$ denote the set of cycles in $\cal C$ such that all of their vertices, except at most one that belongs to $\beta(v)$, belong to $\gamma(v)\setminus\beta(v)$. Accordingly, we denote $k'=|{\cal S}|$. Moreover, let $\cal P$ denote the set of every subpath of a cycle in $\cal C$ whose endpoints belong to $\beta(v)$ and whose set of internal vertices is a subset of size at least 1 of $\gamma(v)\setminus\beta(v)$. Finally, we define ${\cal Z}$ as the multiset $\{\beta(v)\cap O~|~O\in {\cal S}\cup{\cal P}\}$. Then, it holds that $|\bigcup{\cal Z}|=\OO(\sqrt{k})$  and $\cal C$ witnesses that ${\cal A}[v,{\cal Z},k']$ should be 1. Overall, by the existence of the set $U$ that is mentioned above, we conclude the entry ${\cal A}[v,{\cal Z},k']$ can be computed in time $2^{\OO(\sqrt{k}\log k)}$. This completes the proof sketch.\qed

\section{Conclusion}\label{sec:conclusion}
In this paper, we gave  subexponential algorithms of running time $2^{\cO({\sqrt{k}\log{k}})} \cdot n^{\cO(1)}$ for a number of parameterized problems about cycles in unit disk graphs. The first natural question is whether the $\log{k}$ factor in the exponent can be shaved off. While we were not able to do it, we do not exclude such a possibility. 
In particular, it would be very interesting to build a theory  for unit disk graphs, which is similar to the bidimensionality theory for planar graphs. In this context, it will be useful to provide a general characterization of parameterized problems admitting subexponential algorithms on unit disk graphs. 

\bibliography{book_kernels_fvf,refs}

\begin{thebibliography}{10}

\bibitem{alber2002geometric}
Jochen Alber and Ji{\v{r}}{\'\i} Fiala.
\newblock Geometric separation and exact solutions for the parameterized
  independent set problem on disk graphs.
\newblock In {\em Foundations of Information Technology in the Era of Network
  and Mobile Computing}, pages 26--37. Springer, 2002.

\bibitem{AlonYZ}
Noga Alon, Raphael Yuster, and Uri Zwick.
\newblock Color-coding.
\newblock {\em J. Assoc. Comput. Mach.}, 42(4):844--856, 1995.

\bibitem{Baker94}
Brenda~S. Baker.
\newblock Approximation algorithms for {NP}-complete problems on planar graphs.
\newblock {\em J. Assoc. Comput. Mach.}, 41(1):153--180, 1994.

\bibitem{BjHuKK10}
Andreas Bj{\"o}rklund, Thore Husfeldt, Petteri Kaski, and Mikko Koivisto.
\newblock Narrow sieves for parameterized paths and packings.
\newblock {\em CoRR}, abs/1007.1161, 2010.

\bibitem{DBLP:confBjorklundHT}
Andreas Bj{\"{o}}rklund, Thore Husfeldt, and Nina Taslaman.
\newblock Shortest cycle through specified elements.
\newblock In {\em Proceedings of the Twenty-Third Annual {ACM-SIAM} Symposium
  on Discrete Algorithms, {SODA} 2012, Kyoto, Japan, January 17-19, 2012},
  pages 1747--1753, 2012.

\bibitem{Bodlaender96}
Hans~L. Bodlaender.
\newblock A linear-time algorithm for finding tree-decompositions of small
  treewidth.
\newblock {\em SIAM J. Comput.}, 25(6):1305--1317, 1996.

\bibitem{BodlaenderCK12}
Hans~L. Bodlaender, Marek Cygan, Stefan Kratsch, and Jesper Nederlof.
\newblock Deterministic single exponential time algorithms for connectivity
  problems parameterized by treewidth.
\newblock {\em Inf. Comput.}, 243:86--111, 2015.

\bibitem{BodlaenderDFH09}
Hans~L. Bodlaender, Rodney~G. Downey, Michael~R. Fellows, and Danny Hermelin.
\newblock On problems without polynomial kernels.
\newblock {\em J. Comput. Syst. Sci.}, 75(8):423--434, 2009.

\bibitem{BodlaenderDDFLP16}
Hans~L. Bodlaender, P{\aa}l~Gr{\o}n{\aa}s Drange, Markus~S. Dregi, Fedor~V.
  Fomin, Daniel Lokshtanov, and Michal Pilipczuk.
\newblock A {$c^k n$} {5}-approximation algorithm for treewidth.
\newblock {\em {SIAM} J. Comput.}, 45(2):317--378, 2016.

\bibitem{Chan03}
Timothy~M. Chan.
\newblock Polynomial-time approximation schemes for packing and piercing fat
  objects.
\newblock {\em J. Algorithms}, 46(2):178--189, 2003.

\bibitem{ChenKJ01}
Jianer Chen, Iyad~A. Kanj, and Weijia Jia.
\newblock Vertex cover: further observations and further improvements.
\newblock {\em Journal of Algorithms}, 41(2):280--301, 2001.

\bibitem{ClarkCJ90}
Brent~N. Clark, Charles~J. Colbourn, and David~S. Johnson.
\newblock Unit disk graphs.
\newblock {\em Discrete Mathematics}, 86(1-3):165--177, 1990.

\bibitem{ClarksonV07}
Kenneth~L. Clarkson and Kasturi~R. Varadarajan.
\newblock Improved approximation algorithms for geometric set cover.
\newblock {\em Discrete {\&} Computational Geometry}, 37(1):43--58, 2007.

\bibitem{cygan2015parameterized}
Marek Cygan, Fedor~V. Fomin, {\L}ukasz Kowalik, Daniel Lokshtanov, D{\'a}niel
  Marx, Marcin Pilipczuk, Micha{\l} Pilipczuk, and Saket Saurabh.
\newblock {\em Parameterized Algorithms}.
\newblock Springer, 2015.

\bibitem{DemaineFHT05jacm}
Erik~D. Demaine, Fedor~V. Fomin, Mohammadtaghi Hajiaghayi, and Dimitrios~M.
  Thilikos.
\newblock Subexponential parameterized algorithms on graphs of bounded genus
  and {$H$}-minor-free graphs.
\newblock {\em J. ACM}, 52(6):866--893, 2005.

\bibitem{DemaineHaj05}
Erik~D. Demaine and MohammadTaghi Hajiaghayi.
\newblock Bidimensionality: new connections between fpt algorithms and ptass.
\newblock In {\em Proceedings of the 16th Annual ACM-SIAM Symposium on Discrete
  Algorithms (SODA 2005)}, pages 590--601, New York, 2005. ACM-SIAM.

\bibitem{Demaine:2008mi}
Erik~D. Demaine and MohammadTaghi Hajiaghayi.
\newblock The bidimensionality theory and its algorithmic applications.
\newblock {\em Comput. J.}, 51(3):292--302, 2008.

\bibitem{NewDiestel}
Reinhard Diestel.
\newblock {\em Graph Theory, 4th Edition}, volume 173 of {\em Graduate texts in
  mathematics}.
\newblock Springer, 2012.

\bibitem{DornPBF10}
Frederic Dorn, Eelko Penninkx, Hans~L. Bodlaender, and Fedor~V. Fomin.
\newblock Efficient exact algorithms on planar graphs: Exploiting sphere cut
  decompositions.
\newblock {\em Algorithmica}, 58(3):790--810, 2010.

\bibitem{DumitrescuP11}
Adrian Dumitrescu and J{\'{a}}nos Pach.
\newblock Minimum clique partition in unit disk graphs.
\newblock {\em Graphs and Combinatorics}, 27(3):399--411, 2011.

\bibitem{F.V.Fomin:2010oq}
F.~V. Fomin, D.~Lokshtanov, S.~Saurabh, and D.~M. Thilikos.
\newblock Bidimensionality and kernels.
\newblock In {\em Proceedings of the 21st Annual ACM-SIAM Symposium on Discrete
  Algorithms (SODA 2010)}, pages 503--510. SIAM, 2010.

\bibitem{FominLMPPS16}
Fedor~V. Fomin, Daniel Lokshtanov, D{\'{a}}niel Marx, Marcin Pilipczuk, Michal
  Pilipczuk, and Saket Saurabh.
\newblock Subexponential parameterized algorithms for planar and
  apex-minor-free graphs via low treewidth pattern covering.
\newblock In {\em Proceedings of the 57th Annual Symposium on Foundations of
  Computer Science (FOCS), to appear}, 2016.

\bibitem{FominLPS16}
Fedor~V. Fomin, Daniel Lokshtanov, Fahad Panolan, and Saket Saurabh.
\newblock Efficient computation of representative families with applications in
  parameterized and exact algorithms.
\newblock {\em J. ACM}, 63(4):29, 2016.

\bibitem{FominLRS11}
Fedor~V. Fomin, Daniel Lokshtanov, Venkatesh Raman, and Saket Saurabh.
\newblock Bidimensionality and {EPTAS}.
\newblock In {\em Proceedings of the Twenty-Second Annual {ACM-SIAM} Symposium
  on Discrete Algorithms, {SODA} 2011, San Francisco, California, USA, January
  23-25, 2011}, pages 748--759, 2011.

\bibitem{FominLS12}
Fedor~V. Fomin, Daniel Lokshtanov, and Saket Saurabh.
\newblock Bidimensionality and geometric graphs.
\newblock In {\em Proceedings of the 22nd Annual ACM-SIAM Symposium on Discrete
  Algorithms (SODA)}, pages 1563--1575. SIAM, 2012.

\bibitem{hale1980frequency}
William~K Hale.
\newblock Frequency assignment: Theory and applications.
\newblock {\em Proceedings of the IEEE}, 68(12):1497--1514, 1980.

\bibitem{Har-PeledL12}
Sariel Har{-}Peled and Mira Lee.
\newblock Weighted geometric set cover problems revisited.
\newblock {\em JoCG}, 3(1):65--85, 2012.

\bibitem{Har-PeledQ15}
Sariel Har{-}Peled and Kent Quanrud.
\newblock Approximation algorithms for polynomial-expansion and low-density
  graphs.
\newblock In {\em Algorithms - {ESA} 2015 - 23rd Annual European Symposium,
  Patras, Greece, September 14-16, 2015, Proceedings}, volume 9294, pages
  717--728. Springer, 2015.

\bibitem{HochbaumM85}
Dorit~S. Hochbaum and Wolfgang Maass.
\newblock Approximation schemes for covering and packing problems in image
  processing and {VLSI}.
\newblock {\em J. {ACM}}, 32(1):130--136, 1985.

\bibitem{HuntMRRRS98}
Harry B.~Hunt III, Madhav~V. Marathe, Venkatesh Radhakrishnan, S.~S. Ravi,
  Daniel~J. Rosenkrantz, and Richard~Edwin Stearns.
\newblock Nc-approximation schemes for {NP-} and pspace-hard problems for
  geometric graphs.
\newblock {\em J. Algorithms}, 26(2):238--274, 1998.

\bibitem{ImpagliazzoPZ01}
Russell Impagliazzo, Ramamohan Paturi, and Francis Zane.
\newblock Which problems have strongly exponential complexity.
\newblock {\em Journal of Computer and System Sciences}, 63(4):512--530, 2001.

\bibitem{ito2010tractability}
Hiro Ito and Masakazu Kadoshita.
\newblock Tractability and intractability of problems on unit disk graphs
  parameterized by domain area.
\newblock In {\em Proceedings of the 9th International Symposium on Operations
  Research and Its Applications (ISORA10)}, pages 120--127, 2010.

\bibitem{Jansen10}
Bart M.~P. Jansen.
\newblock Polynomial kernels for hard problems on disk graphs.
\newblock In {\em Proceedings of the 12th Scandinavian Symposium and Workshops
  on Algorithm Theory (SWAT)}, volume 6139 of {\em Lecture Notes in Comput.
  Sci.}, pages 310--321. Springer, 2010.

\bibitem{kammerlander1984c}
KARL Kammerlander.
\newblock C 900-an advanced mobile radio telephone system with optimum
  frequency utilization.
\newblock {\em IEEE journal on selected areas in communications},
  2(4):589--597, 1984.

\bibitem{Koutis08}
Ioannis Koutis.
\newblock Faster algebraic algorithms for path and packing problems.
\newblock In {\em Proceedings of the 35th International Colloquium on Automata,
  Languages and Programming (ICALP 2008)}, volume 5125 of {\em Lecture Notes in
  Computer Science}, pages 575--586, 2008.

\bibitem{KoutisW16}
Ioannis Koutis and Ryan Williams.
\newblock Algebraic fingerprints for faster algorithms.
\newblock {\em Commun. {ACM}}, 59(1):98--105, 2016.

\bibitem{Marx05}
D{\'{a}}niel Marx.
\newblock Efficient approximation schemes for geometric problems?
\newblock In {\em Proceedings of the 13th Annual European Symposium on
  Algorithms (ESA)}, volume 3669 of {\em Lecture Notes in Comput. Sci.}, pages
  448--459. Springer, 2005.

\bibitem{MustafaRR14}
Nabil~H. Mustafa, Rajiv Raman, and Saurabh Ray.
\newblock Settling the apx-hardness status for geometric set cover.
\newblock In {\em 55th {IEEE} Annual Symposium on Foundations of Computer
  Science, {FOCS} 2014, Philadelphia, PA, USA, October 18-21, 2014}, pages
  541--550. {IEEE} Computer Society, 2014.

\bibitem{SmithW98}
Warren~D. Smith and Nicholas~C. Wormald.
\newblock Geometric separator theorems {\&} applications.
\newblock In {\em Proceedings of the 39th Annual Symposium on Foundations of
  Computer Science (FOCS)}, pages 232--243. {IEEE} Computer Society, 1998.

\bibitem{Thomasse10}
St\'ephan Thomass\'e.
\newblock A quadratic kernel for feedback vertex set.
\newblock {\em ACM Transactions on Algorithms}, 6(2), 2010.

\bibitem{wang1988study}
DW~Wang and Yue-Sun Kuo.
\newblock A study on two geometric location problems.
\newblock {\em Information processing letters}, 28(6):281--286, 1988.

\bibitem{Williams09}
Ryan Williams.
\newblock Finding paths of length $k$ in ${O}^*(2^k)$ time.
\newblock {\em Inf. Process. Lett.}, 109(6):315--318, 2009.

\bibitem{yeh1984outage}
Yu-Shuan Yeh, J~Wilson, and S~Schwartz.
\newblock Outage probability in mobile telephony with directive antennas and
  macrodiversity.
\newblock {\em IEEE journal on selected areas in communications},
  2(4):507--511, 1984.

\bibitem{Zehavi14}
Meirav Zehavi.
\newblock Mixing color coding-related techniques.
\newblock In {\em Proceedings of the 23rd Annual European Symposium on
  Algorithms (ESA)}, volume 9294 of {\em Lecture Notes in Comput. Sci.}, pages
  1037--1049. Springer, 2013.

\end{thebibliography}


\end{document}